\patchcmd{\section}{\bfseries}{\bfseries\toggletrue{insidesection}}{}{}
\DeclareMathOperator*{\E}{\mathbb{E}}
\theoremstyle{plain}
\newtheorem{theorem}{Theorem}
\newtheorem{corollary}{Corollary}
\newtheorem{lemma}{Lemma}
\newtheorem{definition}{Definition}
\newtheorem{remark}{Remark}
\definecolor{red}{RGB}{255,0,0}
\begin{document}
	
\title{On computational complexity and average-case hardness of shallow-depth Boson Sampling}
\author{Byeongseon Go}
\affiliation{NextQuantum Innovation Research Center, Department of Physics and Astronomy, Seoul National University, Seoul 08826, Republic of Korea}
\author{Changhun Oh}
\email{changhun0218@gmail.com}
\affiliation{Department of Physics, Korea Advanced Institute of Science and Technology, Daejeon 34141, Republic of Korea}
\author{Hyunseok Jeong}
\email{h.jeong37@gmail.com}
\affiliation{NextQuantum Innovation Research Center, Department of Physics and Astronomy, Seoul National University, Seoul 08826, Republic of Korea}

\begin{abstract}
Boson Sampling, a computational task believed to be classically hard to simulate, is expected to hold promise for demonstrating quantum computational advantage using near-term quantum devices. 
However, noise in experimental implementations poses a significant challenge, potentially rendering Boson Sampling classically simulable and compromising its classical intractability. 
Numerous studies have proposed classical algorithms that can efficiently simulate Boson Sampling under various noise models, particularly as noise rates increase with circuit depth. 
To address this challenge, we investigate the viability of achieving quantum computational advantage through Boson Sampling implemented with shallow-depth linear optical circuits.
In particular, as the average-case hardness of estimating output probabilities of Boson Sampling is a crucial ingredient in demonstrating its classical intractability, we make progress on establishing the average-case hardness of Boson Sampling confined to logarithmic-depth regimes. 
We also obtain the average-case hardness for logarithmic-depth Fock-state Boson Sampling subject to lossy environments and for the logarithmic-depth Gaussian Boson Sampling. 
By providing complexity-theoretical backgrounds for the classical simulation hardness of logarithmic-depth Boson Sampling, we expect that our findings will mark a crucial step towards a more noise-tolerant demonstration of quantum advantage with shallow-depth Boson Sampling.    
\end{abstract}

\maketitle

\tableofcontents

\section{Introduction}

\subsection{Backgrounds and motivation}

Boson Sampling is a computational task that is complexity-theoretically proven to be hard to classically simulate under plausible assumptions~\cite{aaronson2011computational, hamilton2017gaussian, deshpande2022quantum}. 
Accordingly, Boson Sampling has gathered significant attention, as it would possibly play a key role in the experimental demonstration of quantum computational advantage using near-term quantum devices.
However, the implementation of Boson Sampling in experimental settings with near-term quantum devices is inevitably subject to various sources of noise~\cite{zhong2020quantum, zhong2021phase, madsen2022quantum, deng2023gaussian}.
The problem is that those noises would possibly rule out the classical intractability of Boson Sampling, and thus potentially hinder the experimental demonstration of quantum advantage with Boson Sampling.
Indeed, both for finite-size near-term experiments and asymptotic limits as system size scales, numerous studies~\cite{renema2018classical, renema2018efficient, moylett2019classically, renema2020simulability, shi2022effect, van2024efficient, shchesnovich2019noise, oszmaniec2018classical, garcia2019simulating, qi2020regimes, brod2020classical, villalonga2021efficient, oh2021classical2, bulmer2022boundary, liu2023simulating, oh2023classical, oh2024classicalalgorithm, oh2025classical, oh2025recent} have proposed efficient classical simulation algorithms of Boson Sampling under various noise models, such as photon loss, partial distinguishability, random Gaussian noise, etc. 
Their results indicate that as the noise rate of Boson Sampling increases, it eventually renders such a noisy sampler classically simulable.
Moreover, as the noise is typically accumulated with each circuit depth, the quantum signal for classical intractability exhibits exponential decay with increasing circuit depth.
Hence, circuits with polynomially increasing depth with system size would suffer from significantly enlarged noise rates, posing substantial challenges to achieving quantum advantage in such settings.

A viable alternative to preclude the classical simulability due to the inevitable noise is to consider Boson Sampling with $\textit{shallow-depth}$ linear optical circuits, where the noise rate can be highly reduced.  
Specifically, among the shallow-depth regime, our primary focus is on investigating the simulation hardness for $\it{logarithmic}$ depth circuits; 
the intuition behind investigating logarithmic depth circuits lies in the potential to offer a ``sweet-spot" regime for the hardness of Boson Sampling.
Namely, this depth regime may avoid significant increases in noise rates to prevent classical simulability, while still being sufficiently large to generate quantum correlations and uphold classical intractability.
Despite such intuitive understanding, the hardness argument of Boson Sampling in this shallow-depth regime, particularly from a complexity-theoretical perspective, has been less studied so far and thus remains widely open. 
Hence, our goal is to establish the \textit{complexity-theoretical foundations} of the classical hardness of shallow-depth Boson Sampling, to suppress the classical simulability by noise in a rigorous manner and obtain a more noise-tolerant demonstration of quantum advantage with Boson Sampling.

In this work, we investigate the classical simulation hardness of Boson Sampling in shallow linear optical circuits. 
Specifically, as the average-case \#P-hardness of estimating output probabilities of Boson Sampling is a crucial ingredient to demonstrate the classical intractability of Boson Sampling, we make progress on establishing the average-case \#P-hardness confined in shallow-depth regimes. 
Similarly, we obtain the average-case hardness result in the shallow-depth regime for the Gaussian Boson Sampling scheme.
Finally, since noise is our main motivation for investigating shallow-depth Boson Sampling, we generalize our average-case hardness result to noisy Boson Sampling subject to a photon loss channel.

To avoid confusion, we note that the allowed imprecision level of our average-case \#P-hardness result is not sufficient to fully demonstrate the classical intractability of Boson Sampling in shallow-depth regimes.
However, to the best of our knowledge, the complexity-theoretical analysis on the average-case hardness of shallow-depth Boson Sampling has not yet been investigated. 
Hence, we believe that our hardness result in shallow-depth regimes will provide a first step toward a stronger hardness result and, ultimately, toward the full demonstration of the classical intractability of shallow-depth Boson Sampling. 


\subsection{Our results: Average-case hardness of shallow-depth Boson Sampling}\label{section: outline}

We set our goal as proving the hardness of classical simulation of Boson Sampling in the shallow-depth regime, specifically for \textit{approximate} simulation within total variation distance error.
Two key ingredients for the current hardness proof of the approximate simulation of Boson Sampling are $(i)$ average-case \#P-hardness of output probability approximation up to sufficiently large additive imprecision $\epsilon$, and $(ii)$ hiding property.
Informally, average-case hardness means that approximating the output probability of Boson Sampling with high probability over randomly chosen circuits (i.e., on average over circuits) is \#P-hard.
Here, by choosing random circuit ensembles that have the hiding property (i.e., symmetry over outcomes), one can reduce the average-case instances for the hardness from circuit instances to outcome instances, which is a crucial step to prove the hardness of approximate simulation within total variation distance error (See Appendix~\ref{previousfoundations averagecasehardness} for more details).

Most of the current theoretical foundations of the average-case hardness of Boson Sampling rely on global Haar random unitary circuits~\cite{aaronson2011computational, hamilton2017gaussian, deshpande2022quantum, grier2022complexity, bouland2022noise, bouland2023complexity, bouland2024average}, as they almost satisfy the two conditions described above. 
Namely, the outcome instances can be effectively hidden by global Haar random unitaries, and approximating the output probability within sufficiently large $\epsilon$ on average over global Haar random circuit instances is \#P-hard under some conjectures. 
However, the problem for our case is that implementing a global Haar random unitary requires at least polynomially large circuit depth (e.g., see~\cite{zyczkowski1994random, russell2017direct}), and thus is not implementable in sub-polynomial circuit depths.
This poses a challenge in proving the average-case hardness of Boson Sampling in the shallow-depth regime—specifically, the lack of the hiding property.
Due to the absence of hiding in the shallow-depth regime, the random outcome instances of Boson Sampling cannot be hidden by random circuit instances. 
Accordingly, even if we establish average-case hardness over randomly chosen circuits for a fixed outcome, it does not directly imply classical simulation hardness via Stockmeyer’s reduction, which requires average-case hardness over randomly chosen outcomes~\cite{aaronson2011computational}.

Another problem is that there already exist efficient classical algorithms that can approximately simulate shallow-depth Boson Sampling in certain circumstances, which directly rule out the classical simulation hardness in the shallow-depth regime for those cases.
While exact simulation of Boson Sampling is classically hard even for constant-depth circuits~\cite{brod2015complexity}, approximate simulation becomes easy for 1-dimensional local log-depth circuits~\cite{vidal2003efficient, qi2022efficient} and also for more general dimension local circuits under some constraints~\cite{deshpande2018dynamical, oh2022classical, kolarovszki2023simulating}.
According to their results, if we use circuits composed of only geometrically local gates, a polynomial circuit depth is generally required for a sufficiently large correlation to achieve approximate simulation hardness.
Those results indicate that we cannot expect the hardness results in the most general case of shallow-depth circuits composed of local gates only.

To deal with those problems, we take the following approach: 
First, we consider shallow linear optical circuit architectures that fully employ geometrically \textit{non-local} gates. 
In fact, the implementation of non-local gates is promising for near-term experimental settings; for example, experiments of linear optical systems based on trapped ions~\cite{shen2014scalable, chen2023scalable} and photonic architecture~\cite{madsen2022quantum} implemented long-range interactions. 
Because of the absence of hiding, we derive the average-case hardness over $\textit{both}$ randomly chosen outcomes and randomly chosen circuits over the shallow circuit architectures, by establishing a worst-to-average-case reduction for both outcome and circuit instances. 
In short, this reduction process can be done in two steps: $(i)$ from a given fixed outcome to a randomly chosen collision-free outcome, and $(ii)$ from a worst-case circuit to a randomly chosen circuit over a local random circuit distribution.

For random circuit ensembles, we employ a local random circuit ensemble in the shallow circuit architecture, inspired by the hardness results of random circuit sampling~\cite{bouland2019complexity, movassagh2023hardness, bouland2022noise, kondo2022quantum, krovi2022average}. 
Here, the local random distribution in this context means that each gate composing the circuit is independently chosen Haar random gate; we note that recent experimental setups of Boson Sampling~\cite{zhong2020quantum, zhong2021phase, madsen2022quantum, deng2023gaussian} follow a similar circuit distribution, but with geometrically local architectures.
Additionally, due to the absence of symmetry, the local random circuit ensemble itself does not guarantee the bosonic birthday paradox, which is crucial for the hardness of Boson Sampling when using only collision-free outcomes~\cite{aaronson2011computational}.
Hence, we also include a random permutation circuit at the input of our random circuit to guarantee the bosonic birthday paradox for our random circuit ensemble (see Appendix~\ref{section: appendix: bosonic birthday paradox}). 
In fact, this random permutation circuit can be implemented within the shallow depth regime (see, e.g., Lemma~\ref{permutation}), and can be experimentally realized by preparing a random input configuration instead.

To sum up, we show the average-case hardness over outcomes and circuit instances for shallow circuit architectures composed of geometrically non-local gates and employing the local random circuit ensemble.
We informally present here our average-case hardness result of Boson Sampling in the logarithmic depth regime, for photon number $N$ and mode number $M \propto N^{\gamma}$ with $\gamma \geq 2$.

\begin{theorem}[Informal]\label{averagehardnessinformal}
Consider a Boson Sampling problem with $N$ photons over $M = \Omega(N^{\gamma})$ mode, where $\gamma \geq 2$. 
There exists an $O(\log N)$-depth linear optical circuit architecture $\mathcal{A}$, consisting of $O(N^{\gamma}\log N)$ number of geometrically non-local gates, such that estimating the output probability to within additive error $2^{-O(N^{\gamma + 1}(\log N)^2)}$, with high probability over the choice of outcomes and circuits in $\mathcal{A}$, is \#$\rm{P}$-hard under $\rm{BPP}^{\rm{NP}}$ reduction.
\end{theorem}

Also, since our average-case hardness result regards both the random outcomes and random circuits, it is not straightforward to show the classical simulation hardness of shallow-depth Boson Sampling as in the original Boson Sampling proposal~\cite{aaronson2011computational}.
Accordingly, we show how our average-case hardness result over both the randomly chosen outcomes and circuits leads to the classical simulation hardness argument.
This implies that improving the additive imprecision for our average-case hardness result is the only remaining problem for the fully theoretically guaranteed classical intractability of shallow-depth Boson Sampling.

\begin{theorem}[Informal]
For $M \propto N^{\gamma}$ with $\gamma \geq 2$, if the allowed additive imprecision for the problem in Theorem~\ref{averagehardnessinformal} to be \#$\rm{P}$-hard is improved to $\epsilon = 2^{-(\gamma - 1)N\log N -O(N)}$, the approximate Boson Sampling for the shallow-depth circuit in Theorem~\ref{averagehardnessinformal}, up to constant total variation distance, is classically hard to simulate. 
\end{theorem}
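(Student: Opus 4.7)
The strategy is the Stockmeyer-style reduction of Aaronson--Arkhipov~\cite{aaronson2011computational}, adapted to keep the circuit randomness and the outcome randomness coupled, since the local Haar ensemble does not enjoy a hiding property. We assume for contradiction that there is a classical polynomial-time randomized algorithm $\mathcal{C}$ that, for every circuit $U$ in the $O(\log N)$-depth architecture of Theorem~\ref{averagehardnessinformal}, samples from a distribution $\mathcal{Q}_U$ with $\|\mathcal{Q}_U - \mathcal{D}_U\|_{\mathrm{TV}} \le \beta$ for a fixed small constant $\beta$, where $\mathcal{D}_U$ is the true boson sampling distribution. Our goal is to construct a $\mathrm{BPP}^{\mathrm{NP}}$ procedure that, on most pairs $(U,S)$ drawn from the product of the local Haar ensemble on the architecture and the uniform measure on the collision-free outcomes $S_{\mathrm{cf}}$, outputs $\tilde p_U(S)$ with additive error at most $\epsilon = 2^{-(\gamma-1)N\log N - O(N)}$. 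Combined with the strengthened Theorem~\ref{averagehardnessinformal}, this would give $\#\mathrm{P} \subseteq \mathrm{BPP}^{\mathrm{NP}}$ and collapse the polynomial hierarchy, contradicting the standard assumption.

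To build the estimator, we apply Stockmeyer's approximate counting to $q_U(S) \coloneqq \Pr[\mathcal{C}(U) = S]$, producing $\tilde q_U(S)$ with multiplicative error $1/\mathrm{poly}(N)$. The triangle inequality gives $|\tilde q_U(S) - p_U(S)| \le |q_U(S) - p_U(S)| + q_U(S)/\mathrm{poly}(N)$. Averaging over $(U,S)$ with $U$ from the local Haar ensemble and $S$ uniform in $S_{\mathrm{cf}}$, the pointwise TVD bound yields $\mathbb{E}_{U,S}|q_U(S) - p_U(S)| \le 2\beta/|S_{\mathrm{cf}}|$, while $\sum_S q_U(S) = 1$ gives $\mathbb{E}_{U,S} q_U(S) \le 1/|S_{\mathrm{cf}}|$. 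Markov's inequality then delivers, with constant probability over $(U,S)$, a total additive error of $O(1/|S_{\mathrm{cf}}|)$. Since $|S_{\mathrm{cf}}| = \binom{M}{N} = 2^{(\gamma-1)N\log N + O(N)}$ for $M \propto N^\gamma$, this matches the stated imprecision $\epsilon$ exactly, so the strengthened average-case hardness hypothesis is directly applicable to the values produced by our $\mathrm{BPP}^{\mathrm{NP}}$ procedure.

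The main obstacle is precisely this coupling between the two randomizations. In the global Haar setting one can decouple them via hiding: the pair $(U, S \sim \mathcal{D}_U)$ is distributionally equivalent (on collision-free outcomes) to $(U, S \text{ uniform on } S_{\mathrm{cf}})$, and a sample-based guarantee on $\mathcal{C}$ transfers directly into a pointwise approximation guarantee on uniform outcomes with loss only $1/\mathrm{poly}(N)$. Here no such symmetry is available, so the TVD bound must be contracted against the uniform measure on $S_{\mathrm{cf}}$ rather than against $\mathcal{D}_U$ itself, which is what forces the characteristic $1/|S_{\mathrm{cf}}|$ loss and motivates the specific form of $\epsilon$ appearing in the hypothesis. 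A secondary technical point is to verify that the Stockmeyer residual $q_U(S)/\mathrm{poly}(N)$ does not dominate the error budget; this will follow once one checks that typical collision-free probabilities under $\mathcal{D}_U$ are not exponentially smaller than $1/|S_{\mathrm{cf}}|$, which is comfortably absorbed into the $O(N)$ slack in the exponent of $\epsilon$ using standard anti-concentration arguments for random linear optical circuits (with the $\gamma = 1$ case handled by the fact that $\epsilon$ is only $2^{-O(N)}$ in that regime).
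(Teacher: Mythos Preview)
Your approach is essentially the paper's: Stockmeyer on the assumed approximate sampler, Markov to convert the TVD bound into an additive bound on most collision-free outcomes, and the identification $\binom{M}{N}^{-1}=2^{-(\gamma-1)N\log N-O(N)}$. Two small corrections are in order, however.

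First, Theorem~\ref{averagehardness} is stated with a \emph{nested} structure (for at least $1-\eta$ of the outcomes $\bm{s}$, success holds for at least $1-\delta$ of circuits $U$, with $\delta+\eta<\tfrac14$), not with a joint high-probability guarantee over $(U,\bm{s})$. Your Markov step only yields the latter, so you still owe one more Markov conversion: if the joint failure probability is at most $\xi$, then for at least a $1-\eta$ fraction of $\bm{s}$ the conditional failure over $U$ is at most $\xi/\eta$, and one chooses $\xi,\eta$ so that $\xi/\eta+\eta<\tfrac14$. The paper carries out exactly this step.

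Second, your anti-concentration remark is unnecessary and points in the wrong direction. The Stockmeyer residual $q_U(S)/\mathrm{poly}(N)$ is controlled simply because $\mathbb{E}_{U,S}[q_U(S)]\le 1/|S_{\mathrm{cf}}|$, so Markov already bounds $q_U(S)$ above by $O(1/|S_{\mathrm{cf}}|)$ with high probability; no lower bound on $p_U(S)$ enters the argument at all. You can (and should) delete that paragraph.
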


\subsection{Paper organization}

Our paper is organized as follows, which is outlined in Fig.~\ref{fig:outlines}. 
We first define in Sec.~\ref{section:convention} a shallow-depth circuit architecture $(\mathcal{B}\mathcal{B}^*)^{q}$ composed of non-local gates, which we will use throughout our results. 
Next, in Sec.~\ref{Section:worstcase}, we prove the worst-case \#P-hardness of approximating output probability $p_{\bm{s}}(C)$ of a fixed outcome $\bm{s}$ of Boson Sampling, for any circuit $C$ in the shallow circuit architecture previously defined. 
In Sec.~\ref{Section:Averagecase}, we prove the average-case \#P-hardness of approximating output probability $p_{\bm{s}}(U)$ for randomly chosen outcome $\bm{s}$ and randomly chosen circuit $U$ in the shallow circuit architecture, by establishing worst-to-average-case reduction.
We prove in Sec.~\ref{Section:simulationhardness} how our average-case hardness results over both the random outcomes and random circuits lead to the classical simulation hardness.
We also extend our average-case hardness result to the Gaussian Boson Sampling scheme in Sec.~\ref{Section:Gaussian}, and to the lossy Boson Sampling subject to photon loss channels in Sec.~\ref{Section:lossy}. 
In Sec.~\ref{Section:remarks}, we conclude with several remarks.

\begin{figure*}[t]
\includegraphics[width=0.7\linewidth]{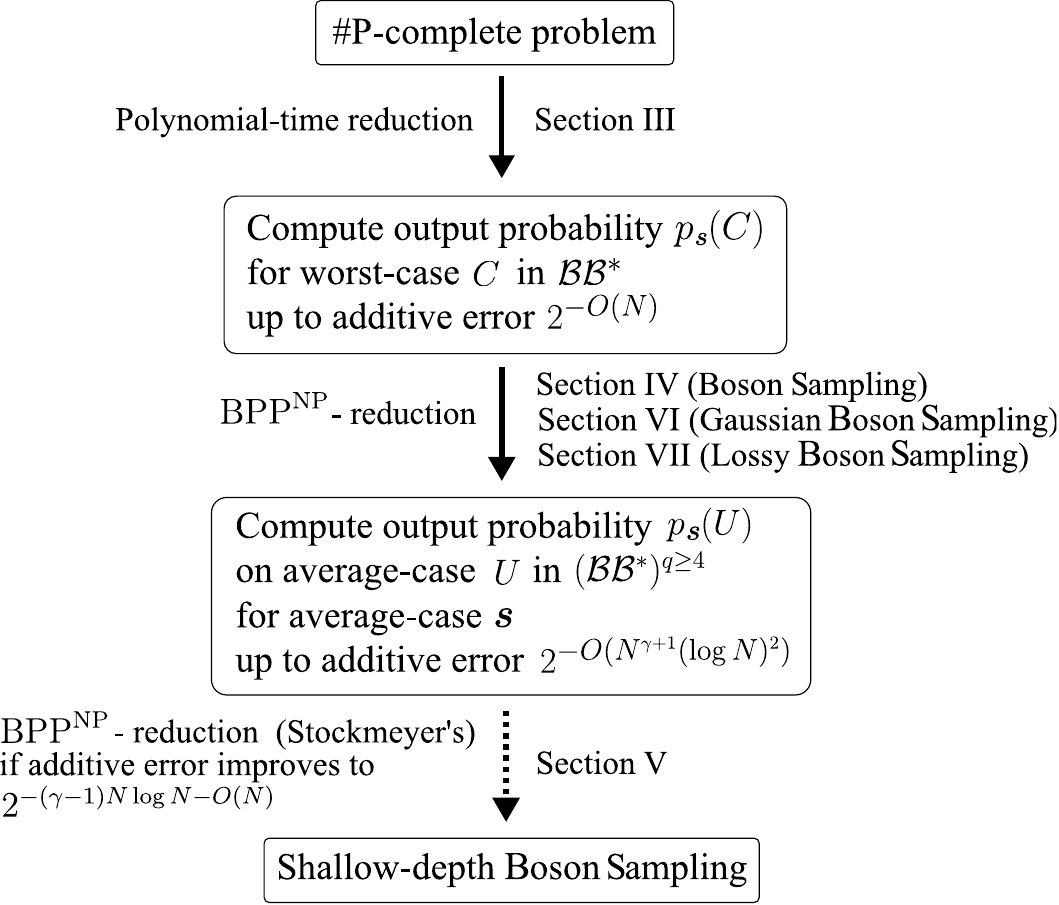}
\caption{Outlines of our result}
\label{fig:outlines}
\end{figure*}

\section{Our settings}\label{section:convention}

\subsection{Notations}

Let us define the total mode number as $M$, where we set $M$ as a power of 2 for simplicity. 
We set the output photon number $N$ polynomially related to $M$ as $M = c_0N^{\gamma}$ for a constant $c_0$ and $\gamma \geq 2$, to ensure that collision-free outcomes dominate the probability weight over all possible $N$-photon outcomes over $M$ modes (i.e., by bosonic birthday paradox; see Refs.~\cite{aaronson2011computational, arkhipov2012bosonic} and Appendix~\ref{section: appendix: bosonic birthday paradox}). 
We use the notation $\bm{s}$ as an $M$-dimensional output configuration vector representing $N$-photon outcome over $M$ modes, such that each element $s_i$ ($i\in [M]$) of $\bm{s}$ denotes the occupied number of photons in the $i$th mode.
Namely, $\bm{s}$ can be expressed as $\bm{s} = (s_1,\dots,s_M)$ where each $s_i \in [N]$ with the constraint $\sum_{i=1}^{M}s_i = N$.
Also, let $\bm{t}$ be an $M$-dimensional input configuration vector corresponding to an $N$-photon input state over $M$ modes, such that $\bm{t} = (t_1,\dots,t_M)$ where now each $t_i \in \{0,1\}$ with $\sum_{i=1}^{M}t_i = N$.

Hereafter, we use the terminology Boson Sampling representing Boson Sampling with Fock-state inputs as in the original Boson Sampling proposal~\cite{aaronson2011computational}; we will explicitly use the terminology Gaussian Boson Sampling for Boson Sampling with Gaussian state inputs~\cite{hamilton2017gaussian}.
We now define $p_{\bm{s}}(C)$ as the output probability of Boson Sampling with a linear optical circuit (unitary) matrix $C$ to obtain an outcome $\bm{s}$ from a fixed input configuration $\bm{t}$. 
More precisely, $p_{\bm{s}}(C)$ can be represented as~\cite{aaronson2011computational}
\begin{align}\label{outputprobability}
    p_{\bm{s}}(C) = \frac{1}{\prod_{i=1}^{M}s_i!}|\text{Per}(C_{\bm{s},\bm{t}})|^2,
\end{align} 
where $C_{\bm{s},\bm{t}}$ is an $N$ by $N$ matrix obtained by taking $s_i$ copies of the $i$th row and $t_j$ copies of the $j$th column of the matrix $C$.
Here, we omit the $\bm{t}$ dependence in the output probability $p_{\bm{s}}(C)$ for simplicity because we will fix the input configuration throughout the paper.

We note that an $M$-mode linear optical circuit can be represented by an $M$ by $M$ unitary matrix in U$(M)$, which unitarily transforms mode operators. 
Specifically, we can represent a single-mode gate (i.e., a phase shifter) as a U$(1)$ matrix to the mode, and a two-mode gate (i.e., a beam splitter) as a U$(2)$ matrix along the modes.
Also, the parallel application of gates can be represented as a unitary matrix with a block matrix form, and the serial application of gates can be represented as matrix multiplication of the unitary matrices. 
Accordingly to this correspondence, throughout this work, we will interchangeably use the terminology `(linear optical) circuit' and `(unitary) matrix'.

We first define the linear optical circuit architecture for a more rigorous analysis of the hardness proof.
	
\begin{definition}[Linear optical circuit architecture]
    The linear optical circuit architecture $\mathcal{A}$ is a linear optical circuit with fixed type (i.e., single- or two-mode) and fixed location of gates, where the coefficients of each gate are not specified. If the coefficients of each gate are specified with unitary matrices (in $\rm{U}$$(1)$ or $\rm{U}$$(2)$), then the circuit and the corresponding unitary matrix are specified. 
\end{definition}

Given the definition of the circuit architecture $\mathcal{A}$, we denote $m$ as the number of gates in the circuit architecture. 
Also, we denote $D$ as the depth of the circuit architecture, where we define a unit depth as a single step of \textit{parallel} application of gates, such that any serial application of gates is not allowed in the unit depth.
Here, we emphasize that single-mode gates represented by U(1) matrices do not contribute to the depth count since they can be absorbed by the nearest two-mode gates represented by U(2) matrices.

In this work, we consider the shallow-depth circuit architecture, specifically in logarithmic depth regime $D = O(\log N)$. 
As we only consider the binary logarithm throughout this work, we will omit the base of the logarithm such that $\log(\cdot)$ implies $\log_{2}(\cdot)$ hereafter.
We now define the shallow linear optical circuit architecture of circuit depth $D = \log M$ with the number of gates $m = \frac{M}{2}\log M$, using the convention used in~\cite{dao2020kaleidoscope, go2024exploring}. 
	
\begin{definition}[Butterfly circuit architecture]\label{butterfly}
    We define butterfly circuit architecture $\mathcal{B}$ as follows: for each layer $L = 1,2,\dots,D = \log M$ of the circuit architecture, allocate two-mode gate between mode number $2^L(j-1) + k$ and $2^L(j-1) + k + 2^{L-1}$, for all $j = 1,2,\dots,2^{D-L}$ and $k = 1,2,\dots,2^{L-1}$. Also, we define inverse butterfly circuit architecture $\mathcal{B}^{*}$ as a butterfly circuit architecture with the inverse sequence of gate application along the depth. 
\end{definition}

\begin{figure*}[t]
\includegraphics[width=0.85\linewidth]{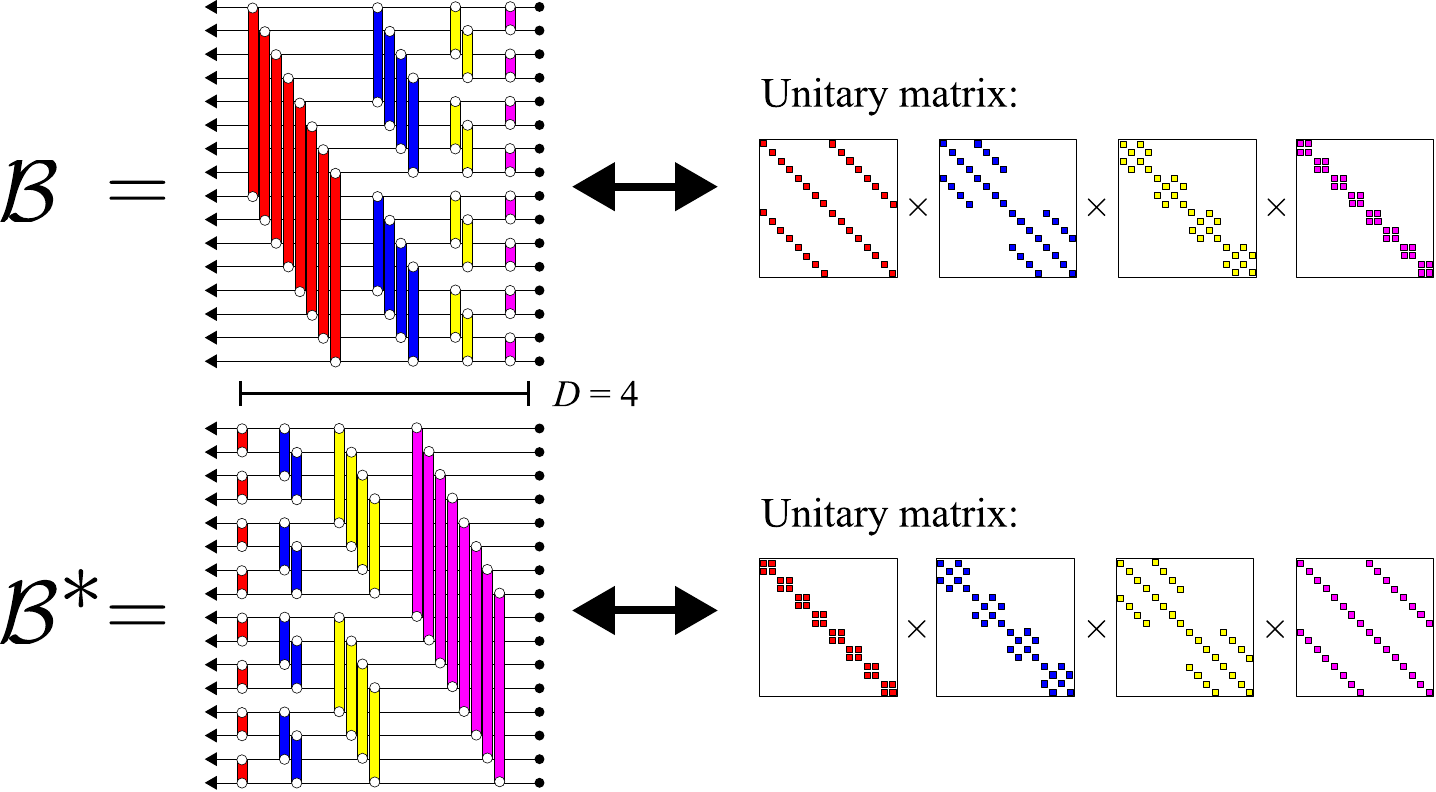}
\caption{Schematics of the butterfly circuit architectures in Definition~\ref{butterfly} and their unitary matrix form, for mode number $M = 2^4 = 16$.}
\label{butterflyfigure}
\end{figure*}

We emphasize that in Definition~\ref{butterfly}, gates are applied in parallel for all $j = 1,2,\dots,2^{D-L}$ and $k = 1,2,\dots,2^{L-1}$. Hence, by definition of the circuit depth, the total depth for $\mathcal{B}^{*}$ (and thus also $\mathcal{B}^{*}$) is given by $D = \log M$. 
We illustrate in Fig.~\ref{butterflyfigure} the circuit architecture $\mathcal{B}$ and $\mathcal{B}^*$ for $2^4 = 16$ modes (such that the circuit depth is $D = 4$ each), and the form of their corresponding unitary matrix. 

Next, we define the Kaleidoscope circuit architecture proposed in~\cite{dao2020kaleidoscope}, using the butterfly circuit architecture defined above.

\begin{definition}[Kaleidoscope circuit architecture]\label{def: kaleidoscope}
    We define Kaleidoscope circuit architecture $\mathcal{B}\mathcal{B}^*$ as a serial application of $\mathcal{B}$ over $\mathcal{B}^{*}$ in Definition~\ref{butterfly}. 
    We define $q$-Kaleidoscope circuit architecture $(\mathcal{B}\mathcal{B}^*)^{q}$ as a repeat of the Kaleidoscope circuit architecture, for a repetition number $q \in \mathbb{N}$.
\end{definition}

Here, the circuit depth of the $q$-Kaleidoscope architecture is $D = 2q\log M$, which is indeed a logarithmic depth $D = O(\log N)$ for $q = O(1)$.
Also, the number of gates for the $q$-Kaleidoscope architecture is $m = qM\log M = O(N^{\gamma}\log N)$ for $q = O(1)$.
Note that, because all the gates are geometrically non-local except for the gates in the first and the last depth for each $\mathcal{B}\mathcal{B}^*$, the number of non-local gates also scales $O(N^{\gamma}\log N)$ for $q = O(1)$.
Throughout this work, we will focus on the $q$-Kaleidoscope circuit architecture with $q = O(1)$ to demonstrate the hardness results for shallow-depth circuits.
The motivation for employing this linear optical circuit architecture is that it enjoys a useful property that is crucial for our analysis, which is that for $M$ a power of $2$, any $M$-mode permutation circuit can be implemented within $\mathcal{B}\mathcal{B}^*$:

\begin{lemma}[Dao et al~\cite{dao2020kaleidoscope}]\label{permutation}
Let $\bm{P}$ be an arbitrary $M \times M$ permutation matrix with $M$ a power of 2. Then $\bm{P}$ can be efficiently implemented in $ \mathcal{B}\mathcal{B}^*$ using $M\log M$-number of two-mode permutation gates, i.e., $\begin{pmatrix}
    0 & 1\\ 
    1 & 0
\end{pmatrix}$ and $\begin{pmatrix}
    1 & 0\\ 
    0 & 1
\end{pmatrix}$. 
\end{lemma}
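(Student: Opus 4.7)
The plan is to prove the statement by strong induction on $n = \log_2 M$, following the classical rearrangeability argument for Beneš networks. The base case $n = 1$ is immediate: $\mathcal{B}\mathcal{B}^*$ on two modes contains two two-mode gate locations, and the only $2\times 2$ permutation matrices (identity and swap) can both be realized by setting the gates from $\left\{\begin{pmatrix}0&1\\1&0\end{pmatrix}, \begin{pmatrix}1&0\\0&1\end{pmatrix}\right\}$ appropriately.

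For the inductive step, the key structural claim is that $\mathcal{B}\mathcal{B}^*$ on $M$ modes decomposes as (i) a first layer of $M/2$ adjacent-pair gates on $(2i-1, 2i)$ coming from layer $L=1$ of $\mathcal{B}$, (ii) a middle block equivalent to two independent copies of $\mathcal{B}\mathcal{B}^*$ on $M/2$ modes, one acting on the odd-indexed modes $\{1,3,\dots,M-1\}$ and one on the even-indexed modes $\{2,4,\dots,M\}$, and (iii) a final adjacent-pair layer from $\mathcal{B}^*$. To establish (ii), I would check directly from Definition~\ref{butterfly} that every gate in layers $L=2,\dots,D$ connects modes that differ by $2^{L-1}$, which is even, so parity is preserved throughout the middle block; then, after relabeling odd-indexed modes $2k-1 \mapsto k$ and even-indexed modes $2k \mapsto k$, the gate patterns of the middle layers match exactly those of $\mathcal{B}\mathcal{B}^*$ on $M/2$ modes, with the two back-to-back copies of the $L=D$ pattern from $\mathcal{B}$ and $\mathcal{B}^*$ fitting as the two adjacent innermost layers of each smaller $\mathcal{B}\mathcal{B}^*$ subnetwork.

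Given this decomposition, I would apply the standard Beneš routing procedure to an arbitrary target permutation $\pi$. Form the $2$-regular bipartite multigraph whose vertices are input pairs and output pairs, with an edge from input pair $\lceil i/2 \rceil$ to output pair $\lceil \pi(i)/2 \rceil$ for each $i$. By König's theorem (or Hall's theorem applied to a $2$-regular bipartite graph), this graph decomposes into two edge-disjoint perfect matchings; assigning one matching to the odd sub-network and the other to the even sub-network uniquely determines the swap/identity settings of the outer adjacent-pair layers, and the induced permutations on the two $M/2$-element subsets are then realizable by the inner $\mathcal{B}\mathcal{B}^*$ blocks via the induction hypothesis.

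The main obstacle is not the routing step, which is classical, but the careful index bookkeeping to verify the parity-preserving recursive decomposition under the specific conventions of Definition~\ref{butterfly}. In particular, one must check that the doubled middle layer (layer $D$ of $\mathcal{B}$ immediately followed by layer $D$ of $\mathcal{B}^*$), which might at first look wasteful, is precisely what is needed for the two innermost layers of the two smaller $\mathcal{B}\mathcal{B}^*$ sub-networks after relabeling. I would also note that $\mathcal{B}\mathcal{B}^*$ contains $2\log M$ layers whereas a minimal Beneš network requires only $2\log M - 1$ switch columns; the extra layer is absorbed by setting the corresponding gates to the identity matrix, which is permitted by the gate set.
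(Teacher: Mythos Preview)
The paper does not supply its own proof of this lemma; it is quoted from Dao et al.\ and used as a black box. Your argument is correct and is the classical rearrangeability proof for Bene\v{s} networks, specialized to the indexing conventions of Definition~\ref{butterfly}. The recursive decomposition you describe---peeling off the two outer $L=1$ adjacent-pair layers and observing that the remaining layers $L=2,\dots,D,D,\dots,2$ split by parity into two independent copies of $\mathcal{B}\mathcal{B}^*$ on $M/2$ modes after the relabeling $2k-1\mapsto k$, $2k\mapsto k$---checks out exactly against Definition~\ref{butterfly}, and the bipartite edge-coloring step is standard.

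Your closing remark about the extra layer is accurate but not actually needed for the induction to close: the doubled middle layer at each scale is consumed by the recursion itself (layer $D$ of the $M$-mode network becomes layer $D-1$ of the $M/2$-mode sub-network, so the doubled middle persists all the way down), and the one-layer surplus relative to a minimal Bene\v{s} network surfaces only at the base case $M=2$, where two gate slots must realize a single $2\times 2$ permutation---which your base case already handles.
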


\subsection{Definitions for random circuit and outcome ensemble}

In this work, we show the average-case hardness over both randomly chosen outcomes and randomly chosen circuits by establishing a worst-to-average-case reduction for both outcomes and circuit instances. 
Here, for the average-case instances, it is crucial to choose a proper circuit ensemble and outcome ensemble to successfully establish the worst-to-average-case reduction.

For the random circuit instances, we consider the \textit{product} of two random circuit ensembles.
The first is the local random circuit ensemble, which is the circuit ensemble with each gate being independently drawn from the local Haar measure.

\begin{definition}[Local random circuit ensemble]\label{randomcircuit}
Let $\mathcal{A}$ be the circuit architecture with $m$ number of gates. We define $\mathcal{H}_{\mathcal{A}}$ as the distribution over circuits with architecture $\mathcal{A}$, whose gates are independently distributed local Haar random matrices $\{H_i\}_{i=1}^{m}$.
\end{definition}

Indeed, this circuit ensemble is similar to the random circuit ensemble employed by the previous random circuit sampling proposals~\cite{bouland2019complexity, movassagh2023hardness, bouland2022noise, kondo2022quantum, krovi2022average} which employ gate-wise Haar random unitary circuits.
Next, we introduce our second random circuit ensemble, which is the random permutation circuit ensemble over the fixed shallow-depth circuit architecture $\mathcal{B}\mathcal{B}^*$.

\begin{definition}[Random permutation circuit ensemble]\label{randompermutation}
We define $\mathcal{P}$ as the uniform distribution over $M$-mode permutation circuits over the fixed Kaleidoscope circuit architecture $\mathcal{B}\mathcal{B}^*$ composed of $M\log M$ number of two-mode permutation gates $\begin{pmatrix}
    0 & 1\\ 
    1 & 0
\end{pmatrix}$ and $\begin{pmatrix}
    1 & 0\\ 
    0 & 1
\end{pmatrix}$. 
Each $\bm{P} \sim \mathcal{P}$ represents an $M$ by $M$ permutation matrix, uniformly chosen from all possible permutation matrices. 
\end{definition}

Note that this random permutation circuit ensemble can be implemented in the shallow-depth circuit architecture $\mathcal{B}\mathcal{B}^*$ by Lemma~\ref{permutation}.
Before proceeding, we briefly remark on the motivation for including the permutation circuit ensemble in our random circuit construction.
Specifically, for any $M$-mode circuit architecture $\mathcal{A}$, including $\mathcal{A} = (\mathcal{B}\mathcal{B}^*)^{q}$ considered in this work, a circuit drawn from the product circuit ensemble $\mathcal{H}_{\mathcal{A}}\mathcal{P}$ guarantees the \textit{bosonic birthday paradox}~\cite{aaronson2011computational, arkhipov2012bosonic}.
The bosonic birthday paradox ensures that the probability of obtaining only collision outcomes is sufficiently large given that the photon number $N$ is sufficiently smaller than the mode number $M$, 
as stated in the following lemma.

\begin{lemma}[Bosonic birthday paradox for the product circuit ensemble]\label{lemma: BBP}
Let $M \geq 2N^2$, and let $B_{M,N}$ be a set of collision outcomes for $M$ modes and $N$ photons. 
Then, for a randomly chosen unitary circuit $U \sim \mathcal{H}_{\mathcal{A}}\mathcal{P}$, the probability to obtain collision outcomes is upper bounded by
\begin{align}
    \E_{U \sim \mathcal{H}_{\mathcal{A}}\mathcal{P}}\left[ \Pr_{\mathcal{D}_{U}}\left[ \bm{s} \in B_{M,N} \right] \right] < \frac{2N^2}{M} ,
\end{align}
where $\mathcal{D}_{U}$ is the output distribution according to $p_{\bm{s}}(U)$ defined in Eq.~\eqref{outputprobability}. 
\end{lemma}

The proof of Lemma~\ref{lemma: BBP} is provided in Appendix~\ref{section: appendix: bosonic birthday paradox}.
Accordingly, the total probability weight of collision outcomes in the Boson Sampling output distribution can be made sufficiently small in the dilute regime $M = \Omega(N^2)$, which is the primary regime of interest in this work. 
Therefore, when using the circuit ensemble $\mathcal{H}_{\mathcal{A}}\mathcal{P}$, the bosonic birthday paradox in Lemma~\ref{lemma: BBP} ensures that collision-free outcomes dominate the probability weight, thereby enabling us to confine the outcome space only to collision-free outcomes.

It is worth emphasizing that, for the bosonic birthday paradox, the circuit ensemble need not be restricted to $\mathcal{H}_{\mathcal{A}}$. 
In fact, regardless of the underlying circuit distribution, the initial random permutation layer $\mathcal{P}$  prior to the circuit distribution ensures the bosonic birthday paradox. 
From an experimental perspective, rather than physically implementing the permutation circuit $\mathcal{P}$, one may instead prepare a uniformly random input configuration, which is operationally equivalent.

\begin{remark}[Conditions for the bosonic birthday paradox]
The bosonic birthday paradox is ensured in the dilute regime as long as the input collision-free configuration is prepared uniformly random by any means.
This observation opens the possibility of improving hardness results for shallow-depth Boson Sampling by considering more general circuit ensembles beyond $\mathcal{H}_{\mathcal{A}}$. 
\end{remark}



For the random outcome instances, as we can now only focus on collision-free outcomes for the hardness argument, we also define the random outcome ensemble as the uniform distribution over collision-free outcomes.

\begin{definition}[Random collision-free outcome ensemble]\label{randomoutcome}
We define $\,\mathcal{G}_{M,N}$ as the uniform distribution over $\binom{M}{N}$ number of collision-free outcomes of Boson Sampling with $M$ modes and $N$ photons.
Each outcome $\bm{s} \sim \mathcal{G}_{M,N}$ is an $M$-dimensional output configuration vector representing the collision-free outcome, such that $\bm{s} = (s_1,\dots,s_M)$ where each $s_i \in \{0,1\}$ with $\sum_{i=1}^{M}s_i = N$.
\end{definition}

\section{Worst-case hardness of output probability estimation}\label{Section:worstcase}
	
In this section, we find the worst-case \#P-hardness of output probability estimation of shallow-depth linear optical circuits in the circuit architecture $\mathcal{B}\mathcal{B}^*$ defined in Definition~\ref{def: kaleidoscope}, for a fixed input $\bm{t}$ and a fixed collision-free output $\bm{s}_0$, to within a certain additive imprecision.
Our worst-case hardness result for our shallow linear optical circuit architecture can be represented as follows.
	
\begin{theorem}[Worst-case hardness]\label{worstcase}
    Given $M = \Omega(N)$, for a fixed collision-free outcome $\bm{s}_0$, approximating an output probability $p_{\bm{s}_0}(C_0)$ to within additive imprecision $2^{-O(N)}$ for any $C_0$ over linear optical circuit architecture $\mathcal{B}\mathcal{B}^*$ is \#$\rm{P}$-hard in the worst case.
\end{theorem}

We briefly sketch the proof of our worst-case hardness result for the shallow circuit architecture $\mathcal{B}\mathcal{B}^*$. 
The proof is based on the result by~\cite{brod2015complexity}, which showed the simulation hardness of exact Boson Sampling with constant-depth linear optical circuits. 
Specifically, as shown in~\cite{brod2015complexity}, based on measurement-based quantum computation (MBQC) scheme, there exists a constant-depth linear optical circuit that can simulate an arbitrary given quantum circuit using post-selection.
To extend this result to our case, we show that the constant-depth worst-case MBQC circuit in~\cite{brod2015complexity} can be embedded in our circuit architecture $\mathcal{B}\mathcal{B}^*$. 
Combining these arguments, additively approximating the output probability of any quantum circuit can be reduced to additively approximating the output probability of any circuit in $\mathcal{B}\mathcal{B}^*$, with imprecision blowup up to the inverse of post-selection probability. 
Using the fact that the additive approximation of any quantum circuit is \#P-hard for certain additive imprecision~\cite{kondo2022quantum}, we can obtain the worst-case hardness of our shallow circuit architecture $\mathcal{B}\mathcal{B}^*$.

\begin{proof}[Proof of Theorem~\ref{worstcase}]
    See Appendix~\ref{proof of worstcase hardness}. 
\end{proof}


We conclude this section by noting that, it is not necessary to derive worst-case hardness of shallow-depth Boson Sampling exclusively for the circuit architecture $\mathcal{BB}^*$. 
That is, embedding the constant-depth worst-case MBQC circuit~\cite{brod2015complexity} into the logarithmic-depth $\mathcal{BB}^*$ circuit, as done in our proof of Theorem~\ref{worstcase}, is not essential for the worst-case hardness of shallow-depth Boson Sampling in principle (and, as will be discussed later, not even for the worst-to-average-case reduction).
Although not mandatory, however, since we later make use of $\mathcal{BB}^*$ to implement the global permutation during the worst-to-average-case reduction (due to the absence of hiding), we establish worst-case hardness within $\mathcal{BB}^*$ so as to simplify the overall circuit architecture to the repeated structure $(\mathcal{BB}^*)^{q}$ defined in Definition~\ref{def: kaleidoscope}.


\section{Average-case hardness of output probability estimation}\label{Section:Averagecase}
	
In this section, we introduce our average-case hardness result of approximating output probabilities of shallow-depth Boson Sampling.
We focus on $q$-Kaleidoscope circuit architecture $(\mathcal{B}\mathcal{B}^*)^{q}$ in Definition~\ref{def: kaleidoscope} with $q = O(1)$, whose circuit depth is $D = 2q\log M$ and gate number is $m = qM\log M$. 
By Theorem~\ref{worstcase}, $\mathcal{B}\mathcal{B}^*$ has the worst-case hardness, and thus  $(\mathcal{B}\mathcal{B}^*)^{q}$ with $q \ge 1$ also has the worst-case hardness by adding trivial gates (identity gates) over the worst-case circuit in $\mathcal{B}\mathcal{B}^*$.

As introduced, our main strategy for the average-case hardness is the worst-to-average-case reduction, using the previous worst-case hardness result in Theorem~\ref{worstcase}. 
In other words, we show that if we can well-estimate the output probability $\it{on\, average}$, we can also well-estimate the worst-case output probability in Theorem~\ref{worstcase}, which means that well-estimating the average-case output probability is also \#P-hard.
{Here, as we have stated previously, our average-case approximation regards both the randomly chosen outcome and the randomly chosen circuit.}

We now formally state our average-case hardness result for shallow-depth linear optical circuit architecture $(\mathcal{B}\mathcal{B}^*)^{q\ge 4}$ with $q = O(1)$.

\begin{theorem}[Average-case hardness]\label{averagehardness}
The following problem is \#$\rm{P}$-hard under a $\rm{BPP}^{\rm{NP}}$ reduction: for any constant $\delta, \eta \ge 0$ with $\delta + \eta < \frac{1}{4}$, {on input an $O(\log N)$-depth random circuit $U \sim \mathcal{H}_{\mathcal{A}}\mathcal{P}$ with $\mathcal{A} = (\mathcal{B}\mathcal{B}^*)^{q\ge 3}$ and a random collision-free outcome $\bm{s} \sim \mathcal{G}_{M,N}$}, compute the output probability $p_{\bm{s}}(U)$ within additive imprecision $\epsilon = 2^{-O(N^{\gamma+1}(\log N)^2)}$, with probability at least $1-\delta$ over the choice of $U$ for at least $1-\eta$ over the choice of $\bm{s}$. 
\end{theorem}

We sketch the proof of Theorem~\ref{averagehardness} in the following by briefly describing the worst-to-average-case reduction process; we leave in Appendix~\ref{proof of averagecase hardness} a detailed proof of Theorem~\ref{averagehardness}.
{Now, we clarify how to establish the reduction: 
The goal is to well-estimate worst-case output probability $p_{\bm{s}_0}(C_0)$ for a fixed collision-free outcome $\bm{s}_0$ and worst-case circuit $C_0$ in $(\mathcal{B}\mathcal{B}^*)^{q \geq 1}$ (Theorem~\ref{worstcase}), given oracle access to well-estimated values $p_{\bm{s}}(U)$ for average-case collision-free outcomes over $\bm{s} \sim \mathcal{G}_{M,N}$ and average-case circuits over $U \sim \mathcal{H}_{\mathcal{A}}\mathcal{P}$ with $\mathcal{A} = (\mathcal{B}\mathcal{B}^*)^{q\ge 3}$ (in other words, $U$ is a randomly chosen circuit over $(\mathcal{B}\mathcal{B}^*)^{q\ge 4}$). }

{
First, since our average-case argument regards both outcomes and circuits, we begin by describing how to \textit{fix} the outcome in establishing the worst-to-average-case reduction.
The remaining problem is then to establish a ``circuit-level" worst-to-average-case reduction, from a worst-case circuit to randomly chosen average-case circuits for a fixed output probability, similarly to previous hardness results of random circuit sampling~\cite{bouland2019complexity, movassagh2023hardness, bouland2022noise, kondo2022quantum, krovi2022average}. }
To fix the outcome, our strategy is to randomly \textit{permute} both a given worst-case circuit $C_0$ in $(\mathcal{B}\mathcal{B}^*)^{q \geq 1}$ and a given collision-free outcome $s_0$. 
That is, we sample random $M$-mode permutation $\bm{P}_0$ uniformly, and permute the worst-case circuit $C_0$ and the fixed outcome $\bm{s}_0$ in Theorem~\ref{worstcase} equally with $\bm{P}_0$.
Then, the permuted outcome $\bm{s} = \bm{P}_0\bm{s}_0$ essentially follows the random outcome ensemble $\bm{s} \sim \mathcal{G}_{M,N}$, and the permuted circuit $\bm{P}_0C_0$ is in $(\mathcal{B}\mathcal{B}^*)^{q \geq 2}$ by Lemma~\ref{permutation}.
Note that the fixed worst-case output probability $p_{\bm{s}_0}(C_0)$ is now equivalent to $p_{\bm{s}}(\bm{P}_0C_0)$, and thus our new goal is to estimate $p_{\bm{s}}(\bm{P}_0C_0)$ value, given estimated average-case values of $p_{\bm{s}}(U)$.


{
Next, we describe how to estimate $p_{\bm{s}}(\bm{P}_0C_0)$ given average-case estimation values $p_{\bm{s}}(U)$ for $U \sim \mathcal{H}_{\mathcal{A}}\mathcal{P}$ with $\mathcal{A} = (\mathcal{B}\mathcal{B}^*)^{q\ge 3}$.
For the permutation part $\mathcal{P}$ in the average-case circuit ensemble $\mathcal{H}_{\mathcal{A}}\mathcal{P}$, we newly sample an $M$-mode random permutation $\bm{P}_{1} \sim \mathcal{P}$ again, and using the sampled $\bm{P}_1$ we set the circuit $\bm{P}_0C_0\bm{P}_{1}^{-1}$ as a ``revised" worst-case circuit in $(\mathcal{B}\mathcal{B}^*)^{q \geq 3}$.
For the local random circuit part $\mathcal{H}_{\mathcal{A}}$ in the average-case circuit ensemble $\mathcal{H}_{\mathcal{A}}\mathcal{P}$, we sample a random circuit $V$ from $\mathcal{H}_{\mathcal{A}}$ and \textit{perturb} each gate of $V$ by each gate of $\bm{P}_0C_0\bm{P}_{1}^{-1}$ parameterized by a constant $\theta \in [0,1]$ using Cayley transform method, and obtain a perturbed circuit $V(\theta)$ (see Appendix~\ref{appendix: section: cayley} for details).
Here, by Cayley transform, $V(\theta)$ lies on a continuous path over unitary matrices interpolating $V(0)$ and $V(1)$, where $\theta = 0$ corresponds to the random circuit drawn from local random circuit ensemble $V(0) \sim \mathcal{H}_{(\mathcal{B}\mathcal{B}^*)^{q\ge 3}}$ and $\theta = 1$ corresponds to the worst-case circuit $V(1) = \bm{P}_0C_0\bm{P}_{1}^{-1}$.
Notably, $V(0)\bm{P}_{1}$ follows the average-case circuit distribution $\mathcal{H}_{\mathcal{A}}\mathcal{P}$ with $\mathcal{A} = (\mathcal{B}\mathcal{B}^*)^{q\ge 3}$, and thus, one can expect that the distribution for $V(\theta)\bm{P}_{1}$ for sufficiently small $\theta$ would not be largely deviated from the average-case circuit distribution. 
Accordingly, one can obtain well-estimated values of $p_{\bm{s}}(V(\theta)\bm{P}_{1})$ for sufficiently small $\theta$ with high probability, given oracle access to the average-case output probabilities (i.e., Theorem~\ref{averagehardness}).

Importantly, we find that $p_{\bm{s}}(V(\theta)\bm{P}_{1})$ can be expressed as a finite-degree rational function $\frac{P(\theta)}{Q(\theta)}$ for $\theta \in [0,1]$, where $Q(\theta)$ is efficiently computable. 
Hence, for sufficiently small $\theta$ values, one can also obtain the well-estimated values for the finite degree \textit{polynomial} $P(\theta)$, by multiplying the well-estimated values of $p_{\bm{s}}(V(\theta)\bm{P}_{1})$ with $Q(\theta)$.
Then, given the estimated values of $P(\theta)$ for different $\theta$ values, one can use \textit{polynomial interpolation} for the polynomial $P(\theta)$, and obtain the estimate of $P(1) = p_{\bm{s}}(V(1)\bm{P}_1)Q(1)$. 
Therefore, one can finally infer the value $p_{\bm{s}}(V(1)\bm{P}_1)$ by multiplying $Q(1)^{-1}$ to the estimate of $P(1)$, which corresponds to the worst-case output probability $p_{\bm{s}}(V(1)\bm{P}_1 ) = p_{\bm{s}}(\bm{P}_0C_0\bm{P}_1^{-1}\bm{P}_1 ) = p_{\bm{s}}(\bm{P}_0C_0) = p_{\bm{s}_0}(C_0)$, concluding the worst-to-average-case reduction.

}

To sum up, for the worst-case output probability $p_{\bm{s}_0}(C_0)$ in Theorem~\ref{worstcase} and the average-case output probability $p_{\bm{s}}(U)$ in Theorem~\ref{averagehardness}, the overall reduction process can be schematized by
\begin{align}\label{eq: sketch of reduction}
     p_{\bm{s}_0}(C_0) =  p_{\bm{s}}(\bm{P}_0C_0) = p_{\bm{s}}(\bm{P}_0C_0\bm{P}_{1}^{-1}\bm{P}_{1} ) \; \xleftarrow[\substack{V \rightarrow \bm{P}_0C_0\bm{P}_{1}^{-1} \\ \text{Cayley transform} \\ \text{\&} \\ 
    \text{polynomial} \\ \text{interpolation}  } ]{} \; p_{\bm{s}}(V\bm{P}_{1}) =  p_{\bm{s}}(U) \approx  \mathcal{O}(\bm{s},U) 
    ,
\end{align}
where
\begin{align}
    &\mathcal{O}: \text{Oracle for average-case output probabilities (Theorem~\ref{averagehardness})}\nonumber \\
    &C_0: \text{Worst-case circuit in } (\mathcal{B}\mathcal{B}^*)^{q\ge 1} \nonumber\\
    &\bm{s}_0: \text{A fixed collision-free outcome} \nonumber\\
    &\bm{s} \sim  \mathcal{G}_{M,N}: \text{Random collision-free outcome (Definition~\ref{randomoutcome})} \nonumber\\
    & V \sim \mathcal{H}_{(\mathcal{B}\mathcal{B}^*)^{q\ge 3}} : \text{Local random circuit over }(\mathcal{B}\mathcal{B}^*)^{q\ge 3} \,\text{(Definition~\ref{randomcircuit})}\nonumber\\
    &\bm{P}_0,\,\bm{P}_{1} \sim \mathcal{P} : \text{Random permutation circuit in } \mathcal{B}\mathcal{B}^* \, \text{(Definition~\ref{randompermutation})} . \nonumber
\end{align}

So far, we have outlined the proof of Theorem~\ref{averagehardness}, describing how to establish worst-to-average-case reduction. 
In Appendix~\ref{appendix: section: cayley}, we introduce in detail the Cayley transform, the circuit perturbation method for the worst-to-average-case reduction in our work. 
Subsequently, in Appendix~\ref{proof of averagecase hardness}, we provide a detailed proof of Theorem~\ref{averagehardness}, including the imprecision analysis we have missed out in the main text (i.e., why the average-case imprecision level $\epsilon = 2^{-O(N^{\gamma+1}(\log N)^2)}$ is necessary to achieve the worst-case imprecision level $2^{-O(N)}$).

\begin{proof}[Proof of Theorem~\ref{averagehardness}]
See Appendix~\ref{proof of averagecase hardness}
\end{proof}

It is worth mentioning that, although we have concentrated on the dilute regime such that $\gamma \geq 2$, our average-case hardness result does not rule out the case $1 \leq \gamma < 2$ because it is derived from the worst-to-average-case reduction and our worst-case hardness result allows $M = \Omega(N)$.
However, since our average-case hardness result only considers collision-free outcomes, it does not directly lead to the sampling hardness result (i.e., quantum advantage claim for shallow-depth Boson Sampling) by only increasing the allowed additive imprecision for the average-case hardness because collision-free outcomes are expected to occupy a small portion of outcomes when $1 \leq \gamma < 2$.
Therefore, to lead to the sampling hardness argument when $1 \leq \gamma < 2$, our average-case hardness result should be extended to also contain collision outcomes for the average-case outcome instances (i.e., instead of $\mathcal{G}_{M, N}$, now uniformly choose over all possible outcomes). 
For a more complete analysis, in Appendix~\ref{appendix: section: collision}, we present a promising approach that can be extended to the average-case hardness result that includes collision outcomes.


We also remark again that one could further reduce the circuit depth by taking the worst-case circuit $C_0$ as the constant-depth MBQC circuit in~\cite{brod2015complexity} directly, without encoding into the logarithmic-depth architecture $(\mathcal{B}\mathcal{B}^*)^{q \geq 1}$ as in Theorem~\ref{worstcase}.
In this case, the overall circuit architecture $\mathcal{A}$ for the average-case hardness consists of the constant-depth circuit (for MBQC) sandwiched between the $\mathcal{BB}^*$ architectures used for implementing required permutations (i.e., $\bm{P}_0$ and $\bm{P}_1$), yielding a slightly shallower circuit architecture than $\mathcal{A} = (\mathcal{BB}^*)^{q \geq 3}$ used in our case.
Nevertheless, since we are already in the logarithmic-depth regime (by implementing permutations with $\mathcal{BB}^*$), we adopt $\mathcal{A} = (\mathcal{BB}^*)^{q \geq 3}$ as our overall circuit architecture for the ensemble $\mathcal{H}_{\mathcal{A}}$ in order to provide a unified description of the circuit architecture required for the hardness of shallow-depth Boson Sampling, as well as a clearer layout for experimental implementations.


\section{Classical simulation hardness of shallow-depth boson sampling}\label{Section:simulationhardness}
	

As we have previously discussed, since our average-case hardness result considers both the random outcomes and random circuits, it is not straightforward to show the classical simulation hardness of shallow-depth Boson Sampling as in the original Boson Sampling proposal~\cite{aaronson2011computational}.
Therefore, in this section, we provide a self-contained analysis of how our average-case hardness result leads to the classical simulation hardness arguments of shallow-depth Boson Sampling.
Specifically, we show that if the allowed additive error in Theorem~\ref{averagehardness} for the hardness is improved to a certain imprecision level, an efficient classical algorithm that can approximately simulate the shallow-depth Boson Sampling is unlikely to exist. 
This emphasizes that improving the imprecision level of the average-case hardness in Theorem~\ref{averagehardness} is a crucial step for the classical intractability of shallow-depth Boson Sampling. 
	
Similarly to Refs.~\cite{aaronson2011computational, bouland2019complexity}, we define an approximate boson sampler as follows.
	
\begin{definition}[Approximate boson sampler]\label{approximatebosonsampler}
	Approximate boson sampler is a classical randomized algorithm that takes input linear optical circuit $C$ and outputs a sample from the output distribution $\mathcal{D}_{C}'$ such that
		\begin{equation}
			|| \mathcal{D}_{C}' - \mathcal{D}_{C} || \le \beta ,
		\end{equation}
		where $\mathcal{D}_{C}$ is the ideal output distribution of the circuit $C$ and $||\cdot||$ represents total variation distance. 
\end{definition}

Given the total variation distance error, the above approximate sampler can have an arbitrarily large additive error for a fixed output probability.  
Nevertheless, it still has a comparably small additive error for \textit{most} of the output probabilities due to Markov's inequality.
Accordingly, finding the average-case solution of the output probability of the ideal sampler over randomly chosen collision-free outcome $\bm{s} \sim \mathcal{G}_{M,N}$, up to a certain additive imprecision, is in complexity class $\rm{BPP}^{\rm{NP}}$ by Stockmeyer's theorem~\cite{stockmeyer1985approximation}.

\begin{lemma}[Average-case approximation~\cite{aaronson2011computational}]\label{aaronsonandarkhipov}
If there exists an approximate boson sampler $\mathcal{S}$ with total variation distance $\beta$, then for any linear optical circuit $C$, the following problem is in $\rm{BPP}^{\rm{NP}^{\mathcal{S}}}$: find the average-case approximate solution $\tilde{p}_{\bm{s}}(C)$ of $p_{\bm{s}}(C)$, which satisfies
\begin{equation}\label{averagecaseapproximation}
    \Pr_{\bm{s} \sim \mathcal{G}_{M,N}}\left[|\tilde{p}_{\bm{s}}(C) - p_{\bm{s}}(C)| \ge \frac{\kappa}{\binom{M}{N}}\right] \le \xi,
\end{equation}
where $\bm{s}$ is over all collision-free outcomes, and $\kappa, \xi > 0$ are the fixed error parameters satisfying $\beta = \kappa \xi/{12}$.
\end{lemma}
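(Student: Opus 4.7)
The plan is to follow the standard Aaronson–Arkhipov template: combine a Markov-inequality argument on the sampler's total variation error with Stockmeyer's approximate counting theorem to extract, in $\mathrm{BPP}^{\mathrm{NP}^{\mathcal{S}}}$, an additive approximation of $p_{\bm{s}}(C)$ that is good for most collision-free outcomes $\bm{s}$.

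First I would fix notation: let $\mathcal{D}_C'$ be the sampler's output distribution with probabilities $q_{\bm{s}}(C) = \Pr_{\mathcal{D}_C'}[\bm{s}]$. From the total variation distance bound $\|\mathcal{D}_C'-\mathcal{D}_C\|\le\beta$, summing $|q_{\bm{s}}-p_{\bm{s}}|$ over just the collision-free outcomes and dividing by $\binom{M}{N}$ gives $\E_{\bm{s}\sim\mathcal{G}_{M,N}}\bigl[|q_{\bm{s}}-p_{\bm{s}}|\bigr]\le 2\beta/\binom{M}{N}$. A single application of Markov's inequality then shows that for at least a $1-\xi/2$ fraction of $\bm{s}\sim\mathcal{G}_{M,N}$, one has $|q_{\bm{s}}-p_{\bm{s}}|\le \kappa/(2\binom{M}{N})$, provided $\beta$ and $\kappa,\xi$ are tuned appropriately.

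Next I would invoke Stockmeyer's approximate counting theorem. The quantity $q_{\bm{s}}(C)$ is the acceptance probability of a poly-size randomized procedure that calls $\mathcal{S}$ as a subroutine and checks whether the sample equals $\bm{s}$. Therefore, for any fixed $\bm{s}$ and any polynomially small parameter $g^{-1}$, a $\mathrm{BPP}^{\mathrm{NP}^{\mathcal{S}}}$ machine can produce $\tilde p_{\bm{s}}(C)$ with $|\tilde p_{\bm{s}}(C)-q_{\bm{s}}(C)|\le q_{\bm{s}}(C)/g$. To convert this multiplicative error into an absolute one of the desired form, I use a second Markov argument: since $\sum_{\bm{s}\text{ collision-free}} q_{\bm{s}}\le 1$, the expectation of $q_{\bm{s}}$ under $\mathcal{G}_{M,N}$ is at most $1/\binom{M}{N}$, so for at least a $1-\xi/2$ fraction of outcomes, $q_{\bm{s}}\le (2/\xi)\cdot 1/\binom{M}{N}$, which pushes the Stockmeyer error below $\kappa/(2\binom{M}{N})$ once $g$ is chosen as a suitable polynomial in $1/(\kappa\xi)$.

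Finally, a triangle inequality $|\tilde p_{\bm{s}}(C)-p_{\bm{s}}(C)|\le |\tilde p_{\bm{s}}(C)-q_{\bm{s}}(C)|+|q_{\bm{s}}(C)-p_{\bm{s}}(C)|$ combined with a union bound over the two Markov events yields the claimed bound $\kappa/\binom{M}{N}$ with failure probability at most $\xi$, under the relation $\beta=\kappa\xi/12$ (the constant $12$ arising from the two Markov factors and a small slack absorbed into the Stockmeyer parameter). The only genuinely delicate step is the bookkeeping of these constants so that the $\beta=\kappa\xi/12$ identity comes out exactly as stated; everything else is a direct invocation of Stockmeyer together with two elementary Markov bounds.
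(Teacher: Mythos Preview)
Your proposal is correct and follows essentially the same approach as the paper: bound $\E_{\bm{s}}[|q_{\bm{s}}-p_{\bm{s}}|]$ via the total-variation assumption, apply Markov once to control $|q_{\bm{s}}-p_{\bm{s}}|$, invoke Stockmeyer for a multiplicative estimate of $q_{\bm{s}}$, apply a second Markov bound on $q_{\bm{s}}$ itself to convert this to an additive error, and finish with a triangle inequality and union bound. The only difference is cosmetic parameter bookkeeping (the paper splits the failure probability as $\xi/3+\xi/3+2^{-N}$ via $k=6/\xi$, $l=3/\xi$, $\alpha=2\beta$ rather than your symmetric $\xi/2+\xi/2$), which is exactly the ``delicate step'' you already flagged.
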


We leave the proof of Lemma~\ref{aaronsonandarkhipov} in Appendix~\ref{proof of averagecase approximation} for a more self-contained analysis. 
The complexity $\rm{BPP}^{\rm{NP}}$ is known to be inside the finite level of PH, i.e., $\rm{BPP}^{\rm{NP}} \subseteq PH$. 
Also, by Toda's theorem~\cite{toda1991pp}, PH problems can be solved given the ability to solve any \#P problem, i.e., $\rm{BPP}^{\rm{NP}} \subseteq 
\rm{PH} \subseteq \rm{P}^{\rm{\#P}}$.
If finding the average-case solution of output probabilities of sampler $\mathcal{S}$ is \#P-hard, then $\rm{P}^{\rm{\#P}} \subseteq \rm{BPP}^{\rm{NP}^{\mathcal{S}}}$. 
Therefore, if an efficient classical algorithm exists that can simulate $\mathcal{S}$, then $\rm{P}^{\rm{\#P}} \subseteq \rm{BPP}^{\rm{NP}}$ which implies the collapse of PH.
Consequently, under the assumption of the non-collapse of the PH, there is no efficient classical algorithm capable of simulating $\mathcal{S}$. 
	

Based on the above arguments, we show that for the case where the allowed additive imprecision of Theorem~\ref{averagehardness} for the hardness can be improved, then it is classically hard to simulate shallow-depth Boson Sampling within a constant total variation distance.

\begin{theorem}\label{simulationhardness}
    Suppose that the allowed additive imprecision for the problem in Theorem~\ref{averagehardness} to be \#$\rm{P}$-hard can be improved to $\epsilon = 2^{-(\gamma - 1)N\log N -O(N)}$. Then, {in the dilute regime such that $\gamma \geq 2$,} the efficient classical simulation of approximate boson sampler $\mathcal{S}$ with respect to circuits from the shallow architecture {$\mathcal{A} = (\mathcal{B}\mathcal{B}^*)^{q\ge 4}$} implies the collapse of $\rm{PH}$. 
\end{theorem}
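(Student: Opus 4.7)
The plan is to adapt the Aaronson--Arkhipov simulation-hardness reduction to our setting, where the average-case hardness holds over both random outcomes and random circuits. At a high level: assume for contradiction that an efficient classical simulator $\mathcal{S}$ exists for circuits in $\mathcal{A} = (\mathcal{B}\mathcal{B}^*)^{q\ge 2}$, invoke Stockmeyer's theorem via Lemma~\ref{aaronsonandarkhipov} to obtain a $\rm{BPP}^{\rm{NP}}$ procedure that estimates $p_{\bm{s}}(U)$ on average, match its imprecision to the improved threshold $\epsilon$ from Theorem~\ref{averagehardness}, and collapse $\rm{PH}$ via Toda's theorem.

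First I would suppose that an efficient classical sampler $\mathcal{S}$ exists in the sense of Definition~\ref{approximatebosonsampler}, for every circuit in $\mathcal{A}$, with total variation error $\beta$ a small positive constant. Applying Lemma~\ref{aaronsonandarkhipov} with $C=U$ yields, in $\rm{BPP}^{\rm{NP}^{\mathcal{S}}}$, an estimator $\tilde p_{\bm{s}}(U)$ with
\[
\Prob_{\bm{s}\sim\mathcal{G}_{M,N}}\!\left[\,|\tilde p_{\bm{s}}(U)-p_{\bm{s}}(U)| \ge \frac{\kappa}{\binom{M}{N}}\,\right]\le \xi,
\]
where $\beta=\kappa\xi/12$. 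Since this guarantee is uniform in $U$, the failure probability over $U\sim\mathcal{H}_{\mathcal{A}}$ is automatically $\delta=0$, so the circuit-side condition of Theorem~\ref{averagehardness} is trivially met. Because $\mathcal{S}$ is by hypothesis an efficient classical algorithm, standard oracle-simulation arguments give $\rm{BPP}^{\rm{NP}^{\mathcal{S}}}\subseteq\rm{BPP}^{\rm{NP}}$, so the estimator runs in $\rm{BPP}^{\rm{NP}}$.

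The core technical step is matching the Stockmeyer-granted imprecision $\kappa/\binom{M}{N}$ with the improved threshold $\epsilon=2^{-(\gamma-1)N\log N - O(N)}$. A direct Stirling estimate using $M=c_0 N^{\gamma}$ with $\gamma\ge 1$ and $M\ge 2N$ gives
\[
\binom{M}{N} \;=\; 2^{(\gamma-1)N\log N + O(N)},
\]
so $1/\binom{M}{N} = 2^{-(\gamma-1)N\log N - O(N)}$, coinciding in scaling with $\epsilon$. Hence by choosing $\kappa$ a sufficiently small constant (absorbed into the $O(N)$ slack of $\epsilon$) and setting $\xi=\eta<1/4$, we ensure $\kappa/\binom{M}{N}\le\epsilon$ while keeping $\beta=\kappa\xi/12$ a positive constant for which the hypothetical sampler is assumed to exist. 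With these parameter choices, the $\rm{BPP}^{\rm{NP}}$ estimator solves precisely the problem stated in Theorem~\ref{averagehardness} at its improved imprecision level, with $\delta+\eta<1/4$ as required.

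By the improved-hardness hypothesis, this average-case problem is $\#\rm{P}$-hard under a $\rm{BPP}^{\rm{NP}}$ reduction, so by composing reductions we obtain $\rm{P}^{\#\rm{P}}\subseteq\rm{BPP}^{\rm{NP}}$. Combined with Toda's theorem $\rm{PH}\subseteq\rm{P}^{\#\rm{P}}$, this yields $\rm{PH}\subseteq\rm{BPP}^{\rm{NP}}$ and hence a collapse of the polynomial hierarchy to its third level, contradicting the standard non-collapse assumption. I expect no conceptual obstacle here: the only subtlety is the parameter bookkeeping in the matching step---verifying that the $O(N)$ slack hidden in $\epsilon$ is generous enough to absorb both the constant $\kappa$ and the $O(N)$ factor from the Stirling estimate---but this is routine once the improved threshold is posited, and no structural ingredient beyond Lemma~\ref{aaronsonandarkhipov} and Toda's theorem is needed.
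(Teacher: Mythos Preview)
Your overall strategy matches the paper's, but there is a genuine gap in the step where you claim $\delta=0$. The guarantee from Lemma~\ref{aaronsonandarkhipov} says: \emph{for every fixed circuit} $U$, the set of outcomes on which the Stockmeyer estimator fails has measure at most $\xi$. However, this bad set can depend on $U$---the adversarial sampler $\mathcal{S}$ is free to distribute its total-variation error differently for each input circuit. Theorem~\ref{averagehardness}, on the other hand, demands the opposite quantifier order: for at least a $1-\eta$ fraction of outcomes $\bm{s}$, the estimator must succeed on at least a $1-\delta$ fraction of circuits $U$. Your inference that $\delta=0$ amounts to asserting that there is a single set of good outcomes (of measure $\ge 1-\xi$) that works simultaneously for all $U$, which does not follow from the lemma.

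The paper closes this gap with an averaging (Markov) argument: since for every $U$ the failure probability over $\bm{s}$ is at most $\xi$, one has $\E_{\bm{s}}\!\big[\Pr_U[\text{fail}]\big]\le\xi$, and hence $\Pr_{\bm{s}}\!\big[\Pr_U[\text{fail}]>\xi/\eta\big]\le\eta$. This yields $\delta=\xi/\eta$ (not $0$), and one must then choose constants $\xi,\eta$ so that $\eta+\xi/\eta<\tfrac14$; note that your choice $\xi=\eta$ would give $\delta=1$ and fail this constraint. With this correction in place, the rest of your argument---the Stirling bound $\binom{M}{N}^{-1}=2^{-(\gamma-1)N\log N-O(N)}$, the reduction chain $\rm{P}^{\#\rm{P}}\subseteq\rm{BPP}^{\rm{NP}^{\mathcal{S}}}\subseteq\rm{BPP}^{\rm{NP}}$, and the appeal to Toda---goes through as you describe.
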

\begin{proof}
We establish a reduction from the problem in Theorem~\ref{averagehardness} with allowed additive error $\epsilon = 2^{-(\gamma - 1)N\log N -O(N)}$ to the problem in Lemma~\ref{aaronsonandarkhipov}.
Let $\mathcal{O}$ be an oracle that solves the problem in Lemma~\ref{aaronsonandarkhipov}, i.e., on input a circuit $C$ and a randomly chosen collision-free outcome $\bm{s} \sim \mathcal{G}_{M,N}$, $\mathcal{O}$ outputs an estimate of output probability $p_{\bm{s}}(C)$ up to additive imprecision $\kappa\binom{M}{N}^{-1}$, with probability at least $1 - \xi$ over outcomes for any circuit. 
For convenience, we refer to the outcomes which $\mathcal{O}$ estimates with error larger than $\kappa\binom{M}{N}^{-1}$ as \textit{bad} outcomes, such that the portion of bad outcomes over possible collision-free outcomes is at most $\xi$ for any circuit.

{Note that, Lemma~\ref{aaronsonandarkhipov} holds for a randomly chosen input circuit drawn from an arbitrary circuit ensemble, such that the success probability of $\mathcal{O}$ is at least $1 - \xi$ over randomly chosen outcomes and circuits. 
However, for the randomly chosen circuit input, bad outcomes can vary with the circuit instances, as the sampler $\mathcal{S}$ has the freedom to choose 
its error distribution according to the input circuit (e.g., $C$ and $\bm{P}C$ would have different bad outcomes for permutation $\bm{P}$). 
Nevertheless,} no matter how the bad outcomes vary with circuit instances, $\mathcal{O}$ succeeds at least $1 - \frac{\xi}{\eta}$ fraction over circuit instances for at least $1 - \eta$ fraction of the outcomes, for any $\eta$ satisfying $\xi < \eta < 1$.
Otherwise, the failure probability over outcomes and circuits would exceed $\xi$, which contradicts the proposition that the success probability of $\mathcal{O}$ is at least $1 - \xi$ over randomly chosen outcomes and circuits.
As the above argument holds for any random circuit families, we choose the random circuit distribution as {$\mathcal{H}_{\mathcal{A}}\mathcal{P}
$ with the shallow architecture $\mathcal{A} = (\mathcal{B}\mathcal{B}^*)^{q\ge 3}$. }

To sum up, on input a random circuit $H \sim \mathcal{H}_{\mathcal{A}}$ and a random outcome $\bm{s} \sim \mathcal{G}_{M,N}$, the oracle $\mathcal{O}$ estimates the output probability $p_{\bm{s}}(H)$ up to imprecision $\kappa\binom{M}{N}^{-1}$, with probability at least $1-\frac{\xi}{\eta}$ over the choice of $H$ for at least $1-\eta$ over the choice of $\bm{s}$.
Here, the additive imprecision can be bounded as 
	\begin{align}
		\kappa\binom{M}{N}^{-1} &= \kappa \frac{N!(M-N)!}{M!} = 2^{-(\gamma - 1)N\log N -O(N)},
	\end{align}
for constant $\beta$ and so as $\kappa$, where we used the relation $M = c_0N^{\gamma}$ with a constant $c_0$ and $\gamma \ge 1$. 
By setting $\eta$ and $\xi$ small constant satisfying $\eta + \frac{\xi}{\eta} < \frac{1}{4}$, we can solve the problem in Theorem~\ref{averagehardness} up to additive imprecision $\epsilon = 2^{-(\gamma - 1)N\log N -O(N)}$ using the oracle $\mathcal{O}$.
Hence, assuming that the above problem is \#P-hard under $\rm{BPP}^{\rm{NP}}$ reduction, we can obtain the complexity-theoretical relation $\rm{P}^{\#P} \subseteq \rm{BPP}^{\rm{NP}^{\mathcal{S}}}$, which implies the collapse of PH if $\mathcal{S}$ with respect to shallow-depth circuit architecture {$(\mathcal{B}\mathcal{B}^*)^{q\ge 4}$} can be done in classical polynomial time. 
This completes the proof.
\end{proof}

{

We lastly remark that, because our average-case hardness result in Sec.~\ref{Section:Averagecase} solely relies on collision-free outcomes, it must be ensured that these collision-free outcomes occupy a sufficient portion of all possible outcomes of Boson Sampling (for more details, see Appendix~\ref{section: appendix: bosonic birthday paradox}). 
Hence, we focus on the dilute regime $\gamma \geq 2$ for the classical hardness result in this section, to ensure the dominance of collision-free outcomes. 
To similarly obtain the classical hardness result in the saturated regime $1 \leq \gamma < 2$, extending our average-case hardness result for the collision outcomes is crucial.

}

\section{Average-case hardness for shallow-depth Gaussian Boson Sampling}\label{Section:Gaussian}
In this section, we show that our hardness results of the shallow-depth Boson Sampling can be generalized to the Gaussian Boson Sampling scheme ~\cite{hamilton2017gaussian}. 
Our specific setup for the Gaussian Boson Sampling is as follows.
Let the total mode number $M$ of the circuit be a power of 2, and now the input state is an $M$ product of single-mode squeezed vacuum (SMSV) state $\ket{\text{SMSV}}^{\otimes M}$ with equal squeezing parameter $r$ and equal squeezing direction.
Also, let us define the output mean photon number as an integer $N$ (i.e., $N = M\sinh^{2}r$) where $M$ and $N$ are polynomially related as $M = c_1N^{\gamma}$ for a constant $c_1$ and $\gamma \ge 2$. 
We define $q_{\bm{s}}(C)$ as an output probability of the Gaussian Boson Sampling, for an $N$-photon outcome $\bm{s}$ from an $M$-mode linear optical circuit matrix $C$ on input $M$ SMSV states. 
For collision-free outcome $\bm{s}$, $q_{\bm{s}}(C)$ can be expressed as~\cite{hamilton2017gaussian} 
\begin{equation}\label{outputprobabilitygaussian}
	q_{\bm{s}}(C) = \left| \bra{\bm{s}}\hat{\mathcal{U}}(C)\ket{\text{SMSV}}^{\otimes M}\right|^2 = \frac{\tanh^{N}r}{\cosh^{M}r}|\text{Haf}((CC^T)_{\bm{s}})|^2,
\end{equation}
where $\ket{\bm{s}}$ is an $M$-mode Fock-state corresponding to the outcome $\bm{s}$, $\hat{\mathcal{U}}(C)$ is a unitary operator corresponding to the circuit $C$, and $(CC^T)_{\bm{s}}$ is an $N$ by $N$ matrix obtained by taking $s_i$ copies of the $i$th row and column of the matrix $CC^T$.
	
Using the above settings, we first prove the worst-case hardness of Gaussian Boson Sampling for a fixed outcome $\bm{s}$, with the shallow-depth circuit architecture $\mathcal{B}\mathcal{B}^*$.
	
\begin{theorem}\label{gaussianworstcase}
	Approximating the output probability $q_{\bm{s}_0}(C)$ of Gaussian Boson Sampling to within additive error $2^{-\frac{\gamma - 1}{2}N\log N -O(N)}$ for any $C$ over linear optical circuit architecture $\mathcal{B}\mathcal{B}^*$ is \#$\rm{P}$-hard in the worst case.
\end{theorem}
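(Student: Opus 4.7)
The plan is to reduce from the worst-case \#P-hardness of Fock-state boson sampling in $\mathcal{B}\mathcal{B}^*$ (Theorem~\ref{worstcase}) to the Gaussian case, via the classical identity $\text{Haf}\!\begin{pmatrix}0 & A\\ A^T & 0\end{pmatrix}=\text{Per}(A)$, which lets a hafnian encode a permanent. The key bookkeeping ingredient is the prefactor $\tanh^N\! r/\cosh^M\! r$: using $\sinh^2 r = N/M = c_1^{-1}N^{1-\gamma}$, a short estimate gives $\tanh^N r/\cosh^M r = 2^{-\frac{\gamma-1}{2}N\log N - O(N)}$, which matches exactly the additive imprecision stated in the theorem, so the theorem will fall out by absorbing this prefactor into the reduction.

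Given a worst-case Fock BS circuit $C\in\mathcal{B}\mathcal{B}^*$ on $M$ modes with $N$-photon input $\bm{t}$ and output $\bm{s}$, I would construct a GBS circuit $\tilde{C}$ on $2M$ modes in the factorized form $\tilde{C}=(I_M\oplus C)\cdot B$, where $B$ is a layer of $M$ parallel beamsplitters each of the form $\tfrac{1}{\sqrt{2}}\begin{pmatrix}1 & i\\ 1 & -i\end{pmatrix}$ acting between mode $i$ and mode $M+i$. A direct computation shows $\tilde{C}$ is unitary (using $CC^\dagger=I$) and $\tilde{C}\tilde{C}^T=\begin{pmatrix}0 & C^T\\ C & 0\end{pmatrix}$, so that for the outcome $\bm{s}'=(\bm{t},\bm{s})$ (a $2N$-photon collision-free pattern on $2M$ modes) the relevant submatrix of $\tilde{C}\tilde{C}^T$ equals $\begin{pmatrix}0 & (C_{\bm{s},\bm{t}})^T \\ C_{\bm{s},\bm{t}} & 0\end{pmatrix}$, and the hafnian-permanent identity gives $|\text{Haf}((\tilde{C}\tilde{C}^T)_{\bm{s}'})|^2=|\text{Per}(C_{\bm{s},\bm{t}})|^2$.

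The second step is to verify that $\tilde{C}$ is realizable inside $\mathcal{B}\mathcal{B}^*$ on $2M$ modes, which I would do by exploiting the nested block structure of the butterfly architecture: the layers $L=1,\dots,\log M$ of both $\mathcal{B}$ and $\mathcal{B}^*$ on $2M$ modes only couple modes within each half of size $M$, and together they form two parallel copies of $\mathcal{B}\mathcal{B}^*$ on $M$ modes, which can implement $I_M\oplus C$ by tuning the first copy to the identity and the second to $C$. The extremal long-range layer of $\mathcal{B}^*$ (at $L=\log M+1$, pairing mode $i$ with mode $M+i$) is exactly the slot for $B$, while the corresponding long-range layer of $\mathcal{B}$ can be set to identity. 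Assembling these pieces yields the target $\tilde{C}$ inside $\mathcal{B}\mathcal{B}^*$ on $2M$ modes.

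Finally, approximating $q_{\bm{s}'}(\tilde{C})$ within $\epsilon$ yields, after dividing out the known prefactor, an approximation of $|\text{Per}(C_{\bm{s},\bm{t}})|^2$ within $\epsilon\cdot 2^{(\gamma-1)N\log N + O(N)}$. Since Theorem~\ref{worstcase} gives \#P-hardness for error $2^{-O(N)}$ on the latter, it suffices to take $\epsilon\le 2^{-(\gamma-1)N\log N - O(N)}$; re-expressing in terms of the GBS photon count $N'=2N$ gives $\epsilon\le 2^{-\frac{\gamma-1}{2}N'\log N' - O(N')}$, matching the claimed bound. The main obstacle of the plan is the precise embedding of $\tilde{C}$ inside the rigid $\mathcal{B}\mathcal{B}^*$ layout on $2M$ modes; should the direct decomposition not fit exactly, it is enough to relax to $(\mathcal{B}\mathcal{B}^*)^q$ with a small constant $q$ (still logarithmic depth), or to invoke Lemma~\ref{permutation} to freely rearrange modes inside the architecture.
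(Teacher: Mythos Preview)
Your proposal is correct and is essentially the same reduction as the paper's own proof. The paper phrases the reduction physically via the scattershot scheme of Lund et al.\ (balanced beam splitters convert the input SMSV pairs into TMSV states, one arm heralds the Fock input while the other runs $C_0$), whereas you phrase it algebraically via the identity $\text{Haf}\!\begin{pmatrix}0 & A\\ A^T & 0\end{pmatrix}=\text{Per}(A)$ together with the explicit computation $\tilde{C}\tilde{C}^T=\begin{pmatrix}0 & C^T\\ C & 0\end{pmatrix}$; these are two descriptions of the same construction, and your embedding (long-range layer of $\mathcal{B}^*$ for $B$, the inner layers forming two parallel $M$-mode $\mathcal{B}\mathcal{B}^*$ blocks for $I_M\oplus C$, and the remaining long-range layer of $\mathcal{B}$ set to identity) coincides with the paper's Figure~\ref{butterflyforgaussian}.
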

\begin{proof}
We establish a reduction from the worst-case hardness of Boson Sampling in Theorem~\ref{worstcase} to the problem in Theorem~\ref{gaussianworstcase}. 
Let $p_{\bm{s}_0}(C_0)$ be the output probability of a fixed input and output $\bm{s}_0$ of Boson Sampling in Theorem~\ref{worstcase}, for mode number $M_0$, photon number $N_0$, and the circuit $C_0$ in $M_0$ mode circuit architecture $\mathcal{B}\mathcal{B}^*$. 
In the following, we show that $p_{\bm{s}_0}(C_0)$ can be efficiently reduced to the output probability $q_{\bm{s}}(C)$ of Gaussian Boson Sampling, for mode number $M = 2M_0$ and mean photon number $N = 2N_0$, with output $\bm{s}$ and circuit $C$ determined by $\bm{s}_0$ and $C_0$ each.
		
Our strategy is to employ the scheme in Ref.~\cite{lund2014boson}, which used $M_0$ product of equally squeezed two-mode squeezed vacuum (TMSV) state as an input state to perform $M_0$ mode Boson Sampling task.
Specifically, a single TMSV state with squeezing parameter $r$ can be represented as 
\begin{equation}
\ket{\text{TMSV}} = \frac{1}{\cosh r}\sum_{n=0}^{\infty} \tanh^{n}r\ket{n}\ket{n},
\end{equation}
and thus $M_0$ product of the TMSV state is 
\begin{align}\label{productofTMSV}
\begin{split}
\ket{\text{TMSV}}^{\otimes M_0} &= \frac{1}{\cosh^{M_0} r} \left( \sum_{n=0}^{\infty} \tanh^{n}r\ket{n}_{(1)}\ket{n}_{(2)} \right)^{\otimes M_0} \\
&= \frac{1}{\cosh^{M_0} r} \sum_{n = 0}^{\infty}\tanh^{n}r\sum_{\bm{s}_{n}} \ket{\bm{s}_n}_{(1)}\ket{\bm{s}_n}_{(2)},
\end{split}
\end{align}
where the summation of $\bm{s}_n$ is over all possible configurations of Fock-state with a total $M_0$ mode and $n$ photon.

For each mode in the given $M_0$ mode circuit $C_0$, one-half of the TMSV state (i.e., subscript (2) in Eq.~\eqref{productofTMSV}) is input into it, and the other half of each state (i.e., subscript (1) in Eq.~\eqref{productofTMSV}) is sent directly to a photon counter. 
By setting each $\ket{\bm{s}_{\text{in}}}$ and $\ket{\bm{s}_{\text{out}}}$ as a total $M_0$ mode and total $N_0$ photon Fock-state, the output probability can be represented as 
\begin{equation} \left|\bra{\bm{s}_{\text{in}}}_{(1)}\bra{\bm{s}_{\text{out}}}_{(2)}\hat{\mathcal{U}}_{(2)}(C_0)\ket{\text{TMSV}}^{\otimes M_0}\right|^2 = \frac{\tanh^{2N_0}r}{\cosh^{2M_0}r} \left|\bra{\bm{s}_{\text{in}}}\hat{\mathcal{U}}(C_0)\ket{\bm{s}_{\text{out}}}  \right|^2,
\end{equation}
which is the output probability of $M_0$ mode and $N_0$ photon Boson Sampling in circuit $C_0$, with an additional multiplicative factor. 
		
Note that two $M_0$ mode $\mathcal{B}\mathcal{B}^*$ architecture can be embedded in the middle of an $M = 2M_0$ mode $\mathcal{B}\mathcal{B}^*$ architecture.
Accordingly, we define a circuit $C$ in an $M$-mode $\mathcal{B}\mathcal{B}^*$ by embedding the given $M_0$ mode circuit $C_0$ in one side of $\mathcal{B}\mathcal{B}^*$, setting gates located right in front of the input ports as balanced beam splitters, and setting the remaining gates as identity gates; 
we leave in Fig.~\ref{butterflyforgaussian} an illustration of $M = 16$ mode circuit $C$ for more clarity. 
Here, the input $M$ SMSV states with squeezing parameter $r$ combined with the balanced beam splitters at the front becomes $M_0$ TMSV states with squeezing parameter $r$.
Therefore, our overall setup exactly follows the scheme in Ref.~\cite{lund2014boson}, such that the first $M_0$ mode is the photon counter sector to determine the input configuration of Boson Sampling, and the last $M_0$ mode is to simulate the Boson Sampling for the given circuit $C_0$.

\begin{figure*}[t]
\includegraphics[width=0.75\linewidth]{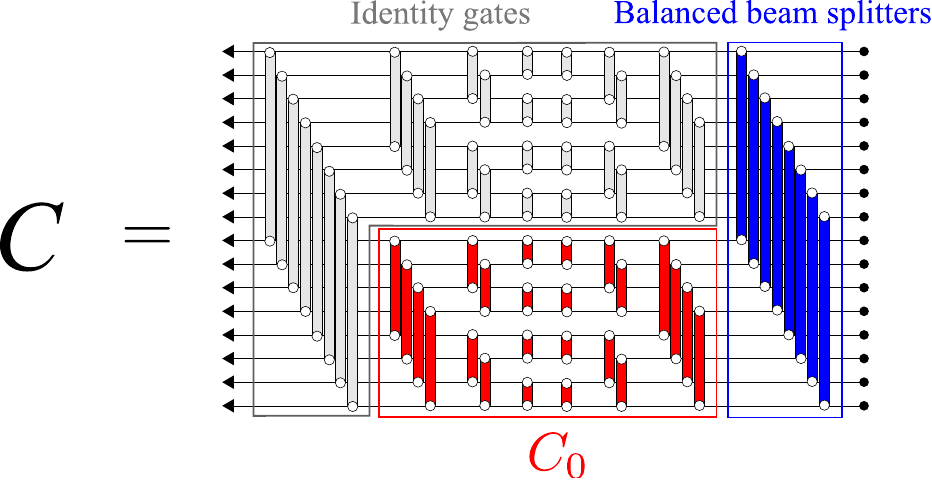}
\caption{Schematics of an mode number $M = 16$ circuit $C$ in $\mathcal{B}\mathcal{B}^*$ which contains a given $M_0 = 8$ mode circuit $C_0$ in $\mathcal{B}\mathcal{B}^*$ }
\label{butterflyforgaussian}
\end{figure*}
		
We also define an $M$-dimensional vector $\bm{s}$ as a serial concatenation of two $\bm{s}_0$ vectors, so that $\bm{s}$ represents an $N$-photon outcome for $M$ modes.
Then the output probabilities $p_{\bm{s}_0}(C_0)$ and $q_{\bm{s}}(C)$ are related as 
\begin{align}
\begin{split}
    q_{\bm{s}}(C) &= \left| \bra{\bm{s}}\hat{\mathcal{U}}(C)\ket{\text{SMSV}}^{\otimes M} \right|^2 \\
    &= \left|\bra{\bm{s}_{0}}_{(1)}\bra{\bm{s}_{0}}_{(2)}\hat{\mathcal{U}}_{(2)}(C_0)\ket{\text{TMSV}}^{\otimes M_0}\right|^2 \\
    &= \frac{\tanh^{2N_0}r}{\cosh^{2M_0}r} p_{\bm{s}_0}(C_0).
\end{split}
\end{align}

Hence, approximating the output probability $p_{\bm{s}_0}(C_0)$ can be reduced to approximating the output probability $q_{\bm{s}}(C)$ of Gaussian Boson Sampling, with a blowup in the additive imprecision. 
The size of the additive imprecision blowup is
\begin{align}
	\frac{\cosh^{2M_0}r}{\tanh^{2N_0}r} 
	= \left( \frac{M+N}{M} \right)^{M_0 + N_0}\left( \frac{M}{N} \right)^{N_0} 
	= 2^{\frac{\gamma-1}{2}N\log N + O(N) },
\end{align}
using the relation $N = M\sinh r$ and $M \propto N^{\gamma}$. 
Since the allowed additive error for the worst-case hardness of $p_{\bm{s}_0}(C_0)$ is $2^{-O(N)}$, the allowed additive error for the reduction is $2^{-O(N)}2^{-\frac{\gamma-1}{2}N\log N - O(N) }$ = $2^{-\frac{\gamma-1}{2}N\log N - O(N) }$. 
This completes the proof. 
\end{proof}
	
Using the results of Theorem~\ref{gaussianworstcase} and the previous proof of the average-case hardness of Boson Sampling in Theorem~\ref{averagehardness}, it is straightforward to find the average-case hardness of Gaussian Boson Sampling, for a randomly chosen $N$-photon outcomes $\bm{s} \sim \mathcal{G}_{M,N}$ and a randomly chosen circuit $U \sim \mathcal{H}_{\mathcal{A}}$ in shallow-depth architecture $\mathcal{A} = (\mathcal{B}\mathcal{B}^*)^{q\ge 2}$.

\begin{theorem}\label{gaussianaveragehardness}
    The following problem is \#$\rm{P}$-hard under a $\rm{BPP}^{\rm{NP}}$ reduction: for any constant $\delta, \eta \ge 0$ with $\delta + \eta < \frac{1}{4}$, on input a $O(\log N)$-depth random circuit $U \sim \mathcal{H}_{\mathcal{A}}$ with $\mathcal{A} = (\mathcal{B}\mathcal{B}^*)^{q\ge 2}$ and a random collision-free outcome $\bm{s} \sim \mathcal{G}_{M,N}$, compute the output probability $q_{\bm{s}}(U)$ of Gaussian Boson Sampling within additive imprecision $\epsilon = 2^{-O(N^{\gamma+1}(\log N)^2)}$, with probability at least $1-\delta$ over the choice of $U$ for at least $1-\eta$ over the choice of $\bm{s}$. 
\end{theorem}

It is worth emphasizing that for our setup of the Gaussian Boson Sampling, the bosonic birthday paradox is already guaranteed in the dilute regime $M = \Omega(N^2)$, in contrast to the Boson Sampling case.
More specifically, the bosonic birthday paradox requires the permutation symmetry of inputs over the unitary circuits, in the sense that the input pattern should be invariant under the permutations (see Appendix~\ref{section: appendix: bosonic birthday paradox} for more details). 
Since we are considering ``full mode" input, our setup already satisfies this required input symmetry. 
To see this input symmetry clearly, one can easily check that for any unitary circuit $U$ and permutation matrix $\bm{P}$, the output probability of Gaussian Boson Sampling for $U\bm{P}$ is equivalent to
\begin{align}
    q_{\bm{s}}(U\bm{P}) =  \frac{\tanh^{N}r}{\cosh^{M}r}|\text{Haf}((U\bm{P}\bm{P}^{T}U^T)_{\bm{s}})|^2 =  \frac{\tanh^{N}r}{\cosh^{M}r}|\text{Haf}((UU^T)_{\bm{s}})|^2 = q_{\bm{s}}(U).
\end{align}
Therefore, we do not necessarily have to include the random permutation part $\mathcal{P}$ for the average-case hardness result; only the local random circuit for $\mathcal{A} = (\mathcal{B}\mathcal{B}^*)^{q\ge 2}$ suffices.

\begin{proof}[Proof of Theorem~\ref{gaussianaveragehardness}]
The procedure is essentially the same as the proof of Theorem~\ref{averagehardness}, namely, establishing a worst-to-average-case reduction from the problem in Theorem~\ref{gaussianworstcase} to the problem in Theorem~\ref{gaussianaveragehardness}: 
{First, given worst-case circuit $C\in (\mathcal{B}\mathcal{B})^{q\geq1}$ and outcome $\bm{s}_0$ depicted in Theorem~\ref{gaussianworstcase}, sample permutation $\bm{P}$, such that the revised worst-case circuit and outcome become $C_{p} = \bm{P}C \in (\mathcal{B}\mathcal{B})^{q\geq2}$ and $\bm{s} = \bm{P}\bm{s}_0  \sim \mathcal{G}_{M,N}$ respectively. 
Then, we sample $V(\theta)$ from local random circuit ensemble with circuit architecture $\mathcal{A} = (\mathcal{B}\mathcal{B})^{q\geq2}$ and perturb by $C_{p} = \bm{P}C$ via Cayley transform, such that $V(\theta)$ follows the distribution $\mathcal{H}_{\mathcal{A},\theta}^{C_p}$. 
Given access to the oracle that solves the problem in Theorem~\ref{gaussianaveragehardness}, we infer the output probability $q_{\bm{s}}(V(\theta))$ using different values of $\theta$, and thus obtain the worst-case output probability $q_{\bm{s}}(V(1)) = q_{\bm{s}}(C_{p}) = q_{\bm{s}_0}(C)$
(for more details, see Appendix~\ref{proof of averagecase hardness}, where we prove the average-case hardness for the Boson Sampling case).

}

Hence, the only different part for the Gaussian Boson Sampling case is the functional form of the output probability $q_{\bm{s}}(V(\theta))$ parameterized by $\theta$ via Cayley transformation. 
Accordingly, we show that $q_{\bm{s}}(V(\theta))$ can also be represented as a degree $(4mN, 4mN)$ rational function in $\theta$, the same degree as the Boson Sampling case in Lemma~\ref{rationalfunction}. 
		
From Eq.~\eqref{outputprobabilitygaussian}, the output probability $q_{\bm{s}}(V(\theta))$ has the form of
		\begin{align}
			\begin{split}
				q_{\bm{s}}(V(\theta)) &= \tanh^{N}r\sech^{M}r \left|\sum_{\mu \in \text{PMP}}\prod_{j = 1}^{N/2} \left[(V(\theta)V(\theta)^{T})_{\bm{s}}\right]_{\mu(2j-1), \mu(2j)} \right|^2,
			\end{split}
		\end{align}
		where $\mu$ is along all possible perfect matching permutations over $N$ modes. 
		From the proof of Lemma~\ref{rationalfunction}, using reduction to the common denominator for all of the $m = qM\log M$ gates, $\left[V(\theta)\right]_{j,k}$ can be represented as $(2m,2m)$ rational function in $\theta$ with the common denominator $\prod_{i=1}^{m} q_i(\theta)$.
		Using this fact, one can easily check that $\prod_{j = 1}^{N/2} \left[(V(\theta)V(\theta)^{T})_{\bm{s}}\right]_{\mu(2j-1), \mu(2j)}$ can be represented as $(2mN,2mN)$ rational function in $\theta$, with the common denominator $[\prod_{i=1}^{m} q_i(\theta)]^N$ which does not change with $\mu$. 
		Therefore, the output probability can be represented as $q_{\bm{s}}(V(\theta)) = \frac{P(\theta)}{Q(\theta)}$, with each $Q(\theta) = [\prod_{i=1}^{m} |q_i(\theta)|^2]^N$ and $P(\theta)$ a degree $4mN$ polynomial function in $\theta$. 
		
		Given that $q_{\bm{s}}(V(\theta))$ can be represented as a degree $(4mN, 4mN)$ rational function with the same denominator $Q(\theta) = [\prod_{i=1}^{m} |q_i(\theta)|^2]^N$ from the Boson Sampling case in Lemma~\ref{rationalfunction}, we can repeat all the steps identically to the proof of Theorem~\ref{averagehardness} and obtain the same result.
	\end{proof}

\section{Hardness analyses in lossy environments}\label{Section:lossy}

In this section, we generalize our hardness results for $\it{lossy}$ environments, namely, shallow-depth linear optical circuits suffering from photon loss channels after each gate implementation. 
The reason we consider such a noise channel is that photon loss is indeed a major source of error in optical systems~\cite{zhong2020quantum, zhong2021phase, madsen2022quantum, deng2023gaussian}. 
Also, photon loss ruins the classical intractability of Boson Sampling, as there exist many efficient classical algorithms that can simulate lossy Boson Sampling within a constant total variation distance~\cite{oszmaniec2018classical, garcia2019simulating, qi2020regimes}. 
Therefore, we mainly deal with the photon loss error here; 
our goal is to provide evidence for the hardness of the approximate simulation of Boson Sampling in lossy shallow circuits within total variation distance error. 
For simplicity, we do not consider any photon gain error here, such as thermal radiation noise subjected to the circuits. 

To proceed, we start with a brief review of the results presented by Ref.~\cite{fujii2016noise}, which shows the hardness of simulating noisy quantum circuits.
Specifically, one can simulate a noiseless circuit using a larger noisy circuit up to the desired imprecision, by establishing error-detecting code in the noisy circuit and post-selecting null syndrome measurements.
Therefore, given the probability to post-select the no-error syndromes, one can approximate the output probability of the noiseless circuit from the output probability of the noisy circuit.
Based on this argument, Ref.~\cite{bouland2022noise} demonstrates the average-case hardness of approximating output probabilities of noisy quantum circuits, under some plausible assumptions of the noise model.
This result gives evidence of the approximate simulation hardness of noisy quantum circuits, within total variation distance error.

The main strategy of the above hardness results is approximating ideal output probabilities by post-selecting error-free results from noisy circuits.
Here, we can directly apply their strategy to our case, i.e., lossy shallow-depth linear optical circuits.
The crucial observation is that considering photon loss error on Boson Sampling, the error syndrome is the output photon number $\it{itself}$. 
Specifically, if the output photon number is the same as the input photon number, this implies that no loss occurred throughout the circuit. 
Therefore, by post-selecting the event that the measured output photon number is the same as the input photon number, we can infer ideal output probabilities.

For a more detailed analysis, we set the loss model as follows.
Let the photon loss model $\mathcal{N}$ be local and stochastic.
Specifically, $\mathcal{N}$ is a set of loss channels $\{\mathcal{N}_i\}_{i=1}^{l}$, such that after each unitary gate is applied, loss channel $\mathcal{N}_i$ acts on each mode participated in the unitary gate.  
Hence, the number of loss channels is $l = O(m)$ for gate number $m$ in a given circuit architecture. 
We can decompose each noise channel $\mathcal{N}_i$ as follows:
\begin{equation}\label{noise}
    \mathcal{N}_i = (1-\rho_i)\mathcal{I} + \rho_i\mathcal{E}_i ,
\end{equation}
where $\mathcal{I}$ is identity, $\mathcal{E}_i$ is an CPTP map representing photon loss, and $\rho_i$ is a loss rate for each channel satisfying $\rho_i \le \rho$ for a constant $\rho$. 
The validity of such modeling for photon loss channel is represented in~\cite{oszmaniec2018classical, deshpande2022quantum}.

To simplify, we assume that we know $\it{a\,priori}$ each error rate $\rho_i$ for all $i \in [l]$, and the noise model $\mathcal{N}$ is fixed so that it does not change with random circuit instances.
Then we can obtain the hardness of approximating output probabilities of lossy shallow circuits, from our previous hardness proposals. 
To do so, let $p_{\bm{s}}(C,\mathcal{N})$ be the output probability of an $N$-photon outcome $\bm{s}$ from an $M$-mode linear optical circuit $C$ which undergoes loss model $\mathcal{N}$ we set. 
By post-selecting `no loss event', which can be accomplished by counting the output photon number, the ideal output probability $p_{\bm{s}}(C)$ can be inferred from $p_{\bm{s}}(C,\mathcal{N})$ by 
\begin{equation}\label{noisyprobability}
    p_{\bm{s}}(C) = \frac{p_{\bm{s}}(C,\mathcal{N})}{\Pr[\text{`no loss event'}]}. 
\end{equation}
From Eq.~\eqref{noise}, the probability of `no loss event' is $\prod_{i=1}^{l}(1-\rho_i)$, which can be efficiently calculated. 
This implies that approximating $p_{\bm{s}}(C,\mathcal{N})$ can be reduced from approximating $p_{\bm{s}}(C)$, with at most $\Pr[\text{`no loss event'}]^{-1} = \prod_{i=1}^{l}(1-\rho_i)^{-1} \le (1-\rho)^{-l} = 2^{O(\rho m)}$ blowup in the additive imprecision.

Given Eq.~\eqref{noisyprobability}, we can repeat the same steps from the previous hardness arguments, for the lossy shallow-depth Boson Sampling; the only difference is the imprecision blowup by $2^{O(\rho m)}$.
For our shallow-depth architecture $(\mathcal{B}\mathcal{B}^*)^q$, the gate number $m$ is $qM\log M$, so the size of imprecision blowup is $2^{O(N^{\gamma}\log N)}$ in our case. 
Such imprecision blowup does not affect the allowed additive accuracy $\epsilon = 2^{-O(N^{\gamma+1}(\log N)^2)}$ for our average-case hardness result. 
Based on the arguments so far, the following corollary is straightforward from Theorem~\ref{averagehardness}.


\begin{corollary}\label{noisyaveragehardness}
    Suppose we have the photon loss model $\mathcal{N}$ with each loss rate $\rho_i \le \rho$ for a constant $\rho$. 
    Then the following problem is \#$\rm{P}$-hard under a $\rm{BPP}^{\rm{NP}}$ reduction: for any constant $\delta, \eta \ge 0$ with $\delta + \eta < \frac{1}{4}$, {on input a $O(\log N)$-depth random circuit $U \sim \mathcal{H}_{\mathcal{A}}\mathcal{P}$ with $\mathcal{A} = (\mathcal{B}\mathcal{B}^*)^{q\ge 3}$ and a random collision-free outcome $\bm{s} \sim \mathcal{G}_{M,N}$}, compute the lossy output probability $p_{\bm{s}}(U, \mathcal{N})$ within additive imprecision $\epsilon = 2^{-O(N^{\gamma+1}(\log N)^2)}$, with probability at least $1-\delta$ over the choice of $U$ for at least $1-\eta$ over the choice of $\bm{s}$. 
\end{corollary}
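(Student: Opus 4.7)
The plan is to establish Corollary~\ref{noisyaveragehardness} by a direct polynomial-time reduction from the noiseless average-case hardness of Theorem~\ref{averagehardness} to the lossy version. The key observation, already highlighted by Eq.~\eqref{noisyprobability}, is that under a local stochastic loss model $\mathcal{N}=\{\mathcal{N}_i\}_{i=1}^{l}$ with $\mathcal{N}_i = (1-\rho_i)\mathcal{I}+\rho_i\mathcal{E}_i$, the event that no photon is lost anywhere in the circuit occurs with a fixed probability $P_{\rm nl}:=\prod_{i=1}^{l}(1-\rho_i)$ that depends only on the noise model, not on the circuit $U$ or outcome $\bm{s}$; conditioned on this event the output distribution coincides exactly with the noiseless one. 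Consequently $p_{\bm{s}}(U) = p_{\bm{s}}(U,\mathcal{N})/P_{\rm nl}$, so any additive approximation of the lossy probability can be converted to an additive approximation of the noiseless probability by a single division by a classically computable constant.

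First, I would formalize the oracle reduction. Suppose $\mathcal{O}_{\rm lossy}$ outputs on input $(U,\bm{s})$, with $U\sim \mathcal{H}_{\mathcal{A}}$ and $\bm{s}\sim \mathcal{G}_{M,N}$, an estimate $\tilde{p}_{\bm{s}}(U,\mathcal{N})$ within additive error $\epsilon=2^{-O(N^{\gamma+1}(\log N)^2)}$ with probability $\ge 1-\delta$ over $U$ for $\ge 1-\eta$ fraction of $\bm{s}$. Since $\mathcal{N}$ is assumed known a priori and fixed, $P_{\rm nl}$ can be computed exactly in polynomial time and bounded below by $(1-\rho)^{l}=2^{-O(\rho m)}=2^{-O(N^{\gamma}\log N)}$ for our architecture $(\mathcal{B}\mathcal{B}^{*})^{q}$. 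The derived estimator $\tilde{p}_{\bm{s}}(U):=\tilde{p}_{\bm{s}}(U,\mathcal{N})/P_{\rm nl}$ then satisfies
\begin{equation*}
|\tilde{p}_{\bm{s}}(U)-p_{\bm{s}}(U)| \le \epsilon/P_{\rm nl} \le \epsilon \cdot 2^{O(N^{\gamma}\log N)} = 2^{-O(N^{\gamma+1}(\log N)^2)},
\end{equation*}
where in the last step I use that $N^{\gamma}\log N$ is absorbed into the exponent $O(N^{\gamma+1}(\log N)^2)$ of the allowed error budget of Theorem~\ref{averagehardness}.

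Second, because $P_{\rm nl}$ is a deterministic constant independent of $(U,\bm{s})$, the set of instances on which the derived estimator succeeds is identical to the set on which $\mathcal{O}_{\rm lossy}$ succeeds; in particular the success probability $\ge 1-\delta$ over $U\sim\mathcal{H}_{\mathcal{A}}$ for $\ge 1-\eta$ fraction of $\bm{s}\sim\mathcal{G}_{M,N}$ is preserved instance by instance, with exactly the constants $\delta+\eta<\tfrac{1}{4}$ required to invoke Theorem~\ref{averagehardness}. Composing the above polynomial-time transformation with the $\rm{BPP}^{\rm{NP}}$ reduction from any \#P problem to the noiseless average-case problem of Theorem~\ref{averagehardness} yields the desired $\rm{BPP}^{\rm{NP}}$ reduction to the lossy average-case problem.

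The main subtlety, more than a genuine obstacle, is the reliance on $\mathcal{N}$ being known in advance and fixed across random circuit draws so that $P_{\rm nl}$ is a classically computable constant. Without this assumption $P_{\rm nl}$ would itself be a random variable depending on $U$, forcing either uniform loss rates or an auxiliary estimation step. Under the stated setting of the paper, however, the post-selection identity is exact and the reduction is entirely mechanical, with all the heavy lifting already performed by the worst-to-average-case reduction underlying Theorem~\ref{averagehardness}.
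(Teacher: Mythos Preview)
Your proposal is correct and follows essentially the same approach as the paper: the paper does not give a standalone proof of the corollary but derives it from the identity $p_{\bm{s}}(C)=p_{\bm{s}}(C,\mathcal{N})/\prod_{i}(1-\rho_i)$ together with the bound $\prod_{i}(1-\rho_i)^{-1}\le 2^{O(\rho m)}=2^{O(N^{\gamma}\log N)}$, observing that this blowup is absorbed by the error budget $2^{-O(N^{\gamma+1}(\log N)^2)}$ of Theorem~\ref{averagehardness}. Your write-up is a faithful formalization of exactly this argument, including the observation that $P_{\rm nl}$ is instance-independent so the success-probability parameters $(\delta,\eta)$ carry over unchanged.
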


We remark that for our noise model, the imprecision blowup grows exponentially with the gate number $m$. 
Therefore, shallow-depth circuits can be more advantageous in this perspective, since they are likely to have less gate number and thus have small imprecision blowup.  
For example, the current hardness results are based on $M$ by $N$ submatrices of $M$-dimensional Haar random unitaries, and the implementation of such matrices requires gate number $m = \Omega(N^{\gamma+1})$.
This arouses the imprecision blowup at least $2^{O(N^{\gamma + 1})}$, which restricts the allowed additive error for the average-case hardness at most $2^{-O(N^{\gamma + 1})}$.

\section{Concluding remarks}\label{Section:remarks}
	
	
Here we provide a few remarks about our overall results and related open questions.

1. Our result demonstrates the average-case hardness for additive imprecision $2^{-O(N^{\gamma + 1}(\log N)^2)}$.
Indeed, there still remains a gap to the desired additive imprecision for the simulation hardness $2^{-(\gamma - 1)N\log N - O(N)}$ in Theorem~\ref{simulationhardness}. 
Hence, closing this gap would be an ultimate challenge to the full achievement of classical intractability; more advanced proof techniques are required to reduce this gap. 
Here, one can take the following approach: finite-size numerical experiments suggest that the output distributions of local random circuits in our circuit architecture (i.e., Definition~\ref{butterfly}) are close enough to those of global Haar random circuits {(see, e.g., Ref.~\cite{go2024exploring} and Appendix~\ref{appendix: section: numerical evidence of bbp})}. 
Accordingly, if one can analytically prove that the distance between those output distributions is close enough, we can directly obtain a better imprecision level $2^{-O(N\log N)}$ by results in~\cite{bouland2022noise, deshpande2022quantum, bouland2023complexity, bouland2024average}, which employed the global Haar random circuits.

Another possible approach for reducing the imprecision gap is to perturb a random circuit matrix in a different way from the Cayley transform (i.e., Definition~\ref{cayley}), i.e., as depicted in~\cite{bouland2023complexity}. 
Specifically, instead of perturbing each random gate, one can perturb a submatrix $X$ of our random circuit matrix $U$ with a worst-case matrix $A$ as $X(\theta) = (1-\theta)X + \theta A$ for $\theta\in[0,1]$. 
Here, a degree of polynomial $|\text{Per}(X(\theta))|^2$ is $2N$, which is lower than ours derived by the Cayley transform.
Therefore, if one can prove that $X(\theta)$ is distributed similarly to $X$ for small $\theta$, we expect that we can also obtain a better imprecision level by using the same interpolation method.  
However, the above approach requires one to figure out a global circuit distribution generated by the convolution of local circuit distributions. 
Although we believe that this problem can be resolved using techniques from random matrix theories, we have not yet developed a complete analysis.
Hence, we leave it as an open question.

2. Another important challenge that should be addressed is the extension of the result in Sec.~\ref{Section:lossy}, i.e., finding the classical simulation hardness of noisy Gaussian Boson Sampling with photon loss, or Boson Sampling for more general types of noise.
As described in~\cite{fujii2016noise, bouland2022noise, deshpande2022quantum}, employing the threshold theorem would be a viable choice for this goal, which requires efficient error detection codes using only linear optical elements for these setups. 
However, to the best of our knowledge, such an error detection code does not yet exist. 
That is, in contrast to the Boson Sampling case, we cannot post-select the so-called `no loss' event in the Gaussian Boson Sampling case because the input photon number is not fixed in this case. 
This is also for the Boson Sampling with general types of noise because we cannot post-select no error event by using only photon number measurements. 
Hence, constructing any of these error detection codes would be a crucial step toward the classical hardness of noisy (Gaussian) Boson Sampling, which will contribute to a more noise-tolerant demonstration of quantum advantage with (Gaussian) Boson Sampling. 
We leave this problem as another open question.

3. One of the existing gaps between theoretical Boson Sampling hardness arguments and Boson Sampling experiments to date is the ratio between mode number $M$ and photon number $N$. 
Generally, most of the current theoretical Boson Sampling hardness arguments (including our current work) consider only collision-free outcomes and rule out collision outcomes due to the convenience of mathematical analysis~\cite{aaronson2011computational, deshpande2022quantum, grier2022complexity, bouland2022noise, bouland2024average}.
This calls for the ``dilute" regime $M = \Omega(N^2)$ to ensure the dominance of collision-free outcomes over the probability weight. 
In contrast, all the experiments to date~\cite{zhong2020quantum, zhong2021phase, madsen2022quantum, deng2023gaussian} operate in the ``saturated regime” where the number of modes is comparable to the number of photons. 
Therefore, to reduce this gap, extending our hardness result of shallow-depth Boson Sampling to the saturated regime $M = O(N^{\gamma})$ for $1 \leq \gamma < 2$ is crucial, similarly to~\cite{bouland2023complexity} for the general Boson Sampling case.
To do so, one also needs to consider collision outcomes in the analysis because collision outcomes now dominate the probability weight in this regime. 
Here, given that our worst-case hardness result in Theorem~\ref{worstcase} holds for a linear number of modes $M = \Omega(N)$, this opens the possibility of establishing average-case hardness in the saturated regime by extending the worst-to-average-case reduction from Theorem~\ref{worstcase} to include collision outcomes.
However, since the collision outcomes do not interchange with each other generally by permutations (in contrast to collision-free outcomes), our proof technique in Sec.~\ref{Section:Averagecase} using random permutation does not work for general collision outcomes.
Therefore, more advanced proof techniques are required to address this challenge. 
In Appendix~\ref{appendix: section: collision}, we present an approach that could offer a useful insight toward extending our average-case hardness result to include collision outcomes.
Also, another possible way is to analytically prove that the local random circuit for our circuit architecture $(\mathcal{B}\mathcal{B}^*)^{q}$ is close enough to those of global Haar random circuits (numerically shown in~\cite{go2024exploring} and Appendix~\ref{appendix: section: numerical evidence of bbp}). 
If this can be done, we can employ the previous argument~\cite{bouland2023complexity}, which used global Haar random circuits, for our circuit architecture $(\mathcal{B}\mathcal{B}^*)^{q}$ in the saturated regime.

4. We finally remark that if we only use the local random circuit ensemble $\mathcal{H}_{\mathcal{A}}$ for our shallow-depth circuit architecture $\mathcal{A} = (\mathcal{B}\mathcal{B}^*)^{q}$ as defined in Definition~\ref{randomcircuit}, the bosonic birthday paradox is not guaranteed because of the absence of translation invariance property (in contrast to the global Haar random unitary circuit). 
Hence, for the sake of analytical completeness, we introduced the additional (shallow-depth) random permutation circuit at the input to guarantee the bosonic birthday paradox and suppress the occurrence of collision outcomes.
In fact, from an experimental perspective, this setup can be realized by constructing the local random circuit $\mathcal{H}_{\mathcal{A}}$ and preparing a random input configuration. 
Furthermore, if one can analytically verify that the local random circuit ensemble $\mathcal{H}_{\mathcal{A}}$ has a permutation invariance property for $\mathcal{A} = (\mathcal{B}\mathcal{B}^*)^{q}$, then we can safely leave out this random permutation circuit part, thereby simplifying the average-case circuits and reducing the required circuit depth for the average-case hardness.
We leave in Appendix~\ref{appendix: section: numerical evidence of bbp} numerical analysis for the bosonic birthday paradox for our local random circuit ensemble $\mathcal{H}_{\mathcal{A}}$ with $\mathcal{A} = (\mathcal{B}\mathcal{B}^*)^{q}$, which gives evidence that only the local random circuit ensemble might be sufficient for the average-case hardness result.

\section*{Acknowledgements}	
We thank Bill Fefferman for insightful discussions. 
This work was supported by the National Research Foundation of Korea (NRF) grant funded by the Korea government (MSIT) (Nos. RS-2024-00413957, RS-2024-00438415, and RS-2023-NR076733) and by the Institute of Information \& Communications Technology Planning \& Evaluation (IITP) grants funded by the Korea government (MSIT) (IITP-2026-RS-2020-II201606 and IITP-2026-RS-2024-00437191). 
B.G. was supported by the education and training program of the Quantum Information Research Support Center, funded through the National Research Foundation of Korea (NRF) by the Ministry of Science and ICT (MSIT) of the Korean government (No.2021M3H3A1036573).
C.O. was supported by the National Research Foundation of Korea Grants (No. RS-2024-00431768 and No. RS-2025-00515456) funded by the Korean government (Ministry of Science and ICT (MSIT)) and the Institute of Information \& Communications Technology Planning \& Evaluation (IITP) Grants funded by the Korea government (MSIT) (No. IITP-2025-RS-2025-02283189 and IITP-2025-RS-2025-02263264).
This work was supported by Global Partnership Program of Leading Universities in Quantum Science and Technology (RS-2025-08542968) through the National Research Foundation of Korea~(NRF) funded by the Korean government (Ministry of Science and ICT(MSIT)).


\bibliographystyle{quantum}
\bibliography{reference}

\appendix
\section{Previous foundations on average-case hardness}\label{previousfoundations averagecasehardness}
{In this appendix, we introduce the existing proof technique used to establish the classical hardness of Boson Sampling, specifically in the context of approximate simulation within a bounded total variation distance error.~\cite{aaronson2011computational}.}
The current state-of-the-art proof technique for the hardness of sampling problems like Boson Sampling essentially builds upon Stockmeyer's algorithm about approximate counting~\cite{stockmeyer1985approximation}.
Specifically, given a classical sampler that outputs a sample from a given output distribution, Stockmeyer's algorithm enables one to multiplicatively estimate a fixed output probability of the sampler, within complexity class $\rm{BPP}^{\rm{NP}}$. 

	Now suppose there exists an approximate classical sampler capable of simulating ideal Boson Sampling up to total variation distance error, as in Definition~\ref{approximatebosonsampler}.
	This approximate sampler can have a large additive error for a fixed output probability, but have a comparably small additive error for \textit{most} of the output probabilities due to Markov's inequality.
	Then, Stockmeyer's algorithm, combined with the approximate sampler, can well approximate the ideal output probability of Boson Sampling within a certain additive error, with a high probability over randomly chosen outcomes (See Lemma~\ref{aaronsonandarkhipov} for more details). 
	For convenience, let us refer to this computational task as an \textit{average-case approximation} problem of Boson Sampling. 
	If the complexity of the average-case approximation problem is outside the Polynomial Hierarchy (PH), it implies the collapse of PH, since the complexity of Stockmeyer's algorithm is indeed inside the finite level of PH. 
	
Here, \textit{average-case hardness} comes into the proof of the classical simulation hardness argument, which means that approximating the ideal output probability of Boson Sampling with high probability over randomly chosen outcomes is \#P-hard. 
More precisely, if the average-case hardness holds up to the imprecision level of the average-case approximation problem, this comes down to the classical simulation hardness of the approximate sampling unless PH collapses, by the complexity-theoretical foundation $\rm{PH} \subseteq \rm{P}^{\#P}$~\cite{toda1991pp}. 

Moreover, by choosing random circuit instances that have symmetry over the outcomes, one can reduce the average-case instances for the hardness from \textit{outcome} instances to \textit{circuit} instances, which is called the \textit{hiding} property.
For the Boson Sampling case, global Haar random unitary (i.e., unitary matrix drawn from Haar measure on U(M), for mode number M) satisfies this condition. 
In detail, instead of randomly choosing the outcome, we can fix the outcome by applying a random permutation to the global Haar random unitary distribution, which is still Haar distributed due to its translation invariance property.
This hiding property plays an important role in the current proofs of the average-case hardness, as it enables one to establish worst-to-average-case reduction in terms of circuit instances. 
Specifically, as the output probability of Boson Sampling can be written as a low-degree polynomial of input circuit (matrix) values, it allows one to infer the value of a worst-case instance from the output probability of many average-case circuit instances.
Hence, the average-case hardness argument for Boson Sampling is typically used in this context, i.e., average-case hardness over random circuit instances, for a fixed outcome~\cite{aaronson2011computational, deshpande2022quantum, bouland2022noise}.

Accordingly, the crucial step for the classical simulation hardness of approximate sampling is to prove the average-case hardness for the desired imprecision level.
While there have been many impressive results and technical developments about the average-case hardness of Boson Sampling~\cite{aaronson2011computational, deshpande2022quantum, grier2022complexity, bouland2022noise, bouland2023complexity, bouland2024average}, the average-case hardness for the desired imprecision level is not yet fully demonstrated.
Still, there exists a gap between the imprecision level of average-case hardness in the strongest existing results and the imprecision level of average-case approximation problem. 
Hence, closing this imprecision gap remains the ultimate challenge for the complete theoretical demonstration of the classical hardness of approximate Boson Sampling.

\section{The bosonic birthday paradox for the shallow average-case circuits}\label{section: appendix: bosonic birthday paradox}

Throughout our work, we consider only collision-free outcomes in our average-case hardness arguments instead of all possible outcomes.
In this appendix, we examine whether the bosonic birthday paradox~\cite{aaronson2011computational} holds for our average-case circuit ensemble $\mathcal{H}_{\mathcal{A}}\mathcal{P}$.
More specifically, we investigate whether collision-free outcomes are observed with high probability in the so-called “dilute” regime, where the number of modes satisfies $M = \Omega(N^2)$.

To claim the classical hardness and quantum computational advantage of Boson Sampling based solely on collision-free outcomes, it must be ensured that these collision-free outcomes occupy at least a not-a-very-small portion of the probability weight over all possible outcomes. 
To understand why, let us denote the probability weight corresponding to the collision-free outcomes by $\varepsilon$.
For the probability weight of collision-free outcomes $\varepsilon$, the ``adversarial" classical sampler that outputs only collision outcomes is also a good sampler, if the allowed total variation distance error for the classical sampler is larger than $\varepsilon$. 
Using this adversarial sampler combined with Stockmeyer's algorithm that multiplicative approximates the output probability of the sampler, the obtained estimates of output probabilities of collision-free outcomes are always ``0", since this sampler does not output collision-free outcomes. 
Therefore, even if we derive the average-case \#P-hardness of estimating output probabilities for collision-free outcomes, we cannot solve this \#P-hard problem by Stockmeyer's reduction if the given sampler is adversarial; thereby, we cannot derive the classical hardness result of sampling (which allows total variation distance error) from the average-case \#P-hardness.


Accordingly, to avoid this issue and claim the classical hardness result using only collision-free outcomes, the bosonic birthday paradox is crucial, which suppresses the occurrence of collision outcomes over the experiments. 
This property holds for global Haar random circuits in the dilute regime $M = \Omega(N^2)$~\cite{aaronson2011computational}, but not guaranteed for any other random circuit ensemble, including our circuit ensemble. 
In fact, collision-free outcomes dominate the outcome space in the regime $M = \Omega(N^2)$ since the portion of collision-free outcomes over all possible outcomes can be bounded as $\binom{M}{N}/\binom{M+N-1}{N} > 1 - \frac{N^2}{M}$.
Accordingly, one might expect that the bosonic birthday paradox automatically holds for any circuit ensemble.
However, even in this regime, the probability weight might be highly concentrated on collision outcomes for some circuit ensembles (i.e., anticoncentration is not guaranteed for arbitrary circuit ensembles), such that the probability weight of the collision-free outcomes is still too small.
Hence, it is crucial to show that the bosonic birthday paradox for our circuit ensemble, such that our circuit ensemble outputs collision-free outcomes with high probability.

Based on this understanding, we now argue the bosonic birthday paradox for the circuit ensemble $\mathcal{H}_{\mathcal{A}}\mathcal{P}$ in the dilute regime $M = \Omega(N^2)$.
We first review a lemma given in Ref.~\cite[Lemma C.1]{aaronson2011computational}, which is indeed a crucial ingredient for the bosonic birthday paradox.

\begin{lemma}[Unitary Pigeonhole Principle~\cite{aaronson2011computational}]\label{appendix: lemma: pigeonhole}
Partition a finite set $[K]$ into a ``good part" $G$ and ``bad part" $B = [K]\setminus G$. Also, let $\hat{U}$ be an arbitrary $K$ by $K$ unitary matrix. Suppose we choose an element $x \in G$ uniformly at random, apply $\hat{U}$ to $\ket{x}$, and then measure $\hat{U}\ket{x}$ in the standard basis. Then, for measurement outcome $y$, we have $\Pr[y\in B] \leq |B|/|G|$.
\end{lemma}

Here, let us consider the finite set $[K]$ in the lemma as a set of all possible outcomes of Boson Sampling for $N$ photons and $M$ modes, i.e., the set of possible $\bm{s} = (s_1,\dots,s_M)$ where each $s_i \in [N]$ with the constraint $\sum_{i=1}^{M}s_i = N$ (such that, $|K| = \binom{M+N-1}{N}$).
Also, we consider $\hat{U}$ in the lemma as a unitary evolution over the Hilbert space of $N$ photons and $M$ modes, characterized by a given unitary circuit of Boson Sampling. 

We now consider the good part $G$ in the lemma as a set of collision-free $\bm{s}= (s_1,\dots,s_M)$, such that each $s_i \in \{0,1\}$ instead of $s_i \in [N]$, which we will denote as $G_{M,N}$. 
Likewise, let $B_{M,N}$ be the bad part in the lemma, i.e., a set of collision $\bm{s}$. 
Then, observe that $|G_{M,N}| = \binom{M}{N}$, $|B_{M,N}| = \binom{M+N-1}{N} - \binom{M}{N}$, and thus we have
\begin{align}
    \frac{|B_{M,N}|}{|G_{M,N}|} = \frac{ \binom{M+N-1}{N} - \binom{M}{N}}{\binom{M}{N}} < \frac{\frac{N^2}{M}}{1 - \frac{N^2}{M}} \leq \frac{2N^2}{M}, 
\end{align}
where we used the fact that $\binom{M}{N}/\binom{M+N-1}{N} > 1 - \frac{N^2}{M}$ and assumed $M \geq 2N^2$ in the last inequality for simplicity.
Therefore, by Lemma~\ref{appendix: lemma: pigeonhole}, if we uniformly choose collision-free input (i.e., $\bm{t}$ in Eq.~\eqref{outputprobability}), then the probability of observing collision outcomes is bounded by $\frac{2N^2}{M}$, for any unitary circuit.

Now, instead of using uniformly chosen input and fixed unitary circuit, we can rather fix the collision-free input $\bm{t}$ and draw random unitary circuits from $\mathcal{H}_{\mathcal{A}}\mathcal{P}$ (see Definition~\ref{randomcircuit} and Definition~\ref{randompermutation}), which is the random circuit ensemble for the average-case hardness in the shallow-depth regime.
Then, the random permutation circuit part from $\mathcal{P}$ permutes the collision-free input $\bm{t}$ uniformly random, and effectively makes input uniform distribution over possible collision-free configurations $G_{M,N}$. 
Hence, by choosing circuits from $\mathcal{H}_{\mathcal{A}}\mathcal{P}$, then the probability to observe collision outcomes is bounded by $\frac{2N^2}{M}$, and thus we have the following theorem.

\begin{theorem}[Bosonic birthday paradox 
for the product circuit ensemble $\mathcal{H}_{\mathcal{A}}\mathcal{P}$]
\label{theorem: BBP}

For a randomly chosen unitary circuit $U \sim \mathcal{H}_{\mathcal{A}}\mathcal{P}$, the probability to obtain collision outcomes is upper bounded by
\begin{align}
    \E_{U \sim \mathcal{H}_{\mathcal{A}}\mathcal{P}}\left[ \Pr_{\mathcal{D}_{U}}\left[ \bm{s} \in B_{M,N} \right] \right] < \frac{2N^2}{M} ,
\end{align}
where $\mathcal{D}_{U}$ is the output distribution according to $p_{\bm{s}}(U)$ defined in Eq.~\eqref{outputprobability}. 
\end{theorem}

We have now obtained the desired result; the above theorem implies that for most of the random circuits over $ \mathcal{H}_{\mathcal{A}}\mathcal{P}$, the sampler outputs a collision-free outcome with high probability in the dilute regime $M = \Omega(N^2)$.


Crucially, the proof of Theorem~\ref{theorem: BBP} does not rely on the circuit ensemble $\mathcal{H}_{\mathcal{A}}$; rather, the essential ingredient for the bosonic birthday paradox is the initial random permutation layer $\mathcal{P}$. 
Therefore, the bosonic birthday paradox is guaranteed in the dilute regime $M = \Omega(N^2)$ as long as the as long as the input (collision-free) configuration is uniformly random by any means.
In practice, this can also be achieved experimentally by directly preparing a uniformly random input configuration.
Nevertheless, we chose to include an explicit random permutation circuit $\mathcal{P}$ for mathematical clarity, since we fix the input configuration for Boson Sampling.


We additionally note that this bosonic birthday paradox is not guaranteed solely by the local random circuit ensemble in Definition~\ref{randomcircuit} itself, because of the absence of translation invariance property (in contrast to the global Haar random unitary circuit). 
Hence, the additional random permutation circuit at the input is crucial to guarantee the bosonic birthday paradox mathematically.
Nevertheless, if one can prove that the local random circuit ensemble has a permutation-invariance property for a specific circuit architecture, then we can leave out this random permutation circuit part at the input.
We leave in Appendix~\ref{appendix: section: numerical evidence of bbp} numerical evidence of the bosonic birthday paradox for solely the local random circuit ensemble $\mathcal{H}_{\mathcal{A}}$ with our shallow-depth circuit architecture $\mathcal{A} = (\mathcal{B}\mathcal{B}^*)^{q}$.

\section{Proof of Theorem~\ref{worstcase} (worst-case hardness)}\label{proof of worstcase hardness}

{
To prove Theorem~\ref{worstcase}, we first introduce a crucial lemma, which states that any BQP circuit can be simulated by a constant-depth linear optical circuit under post-selection.
}

\begin{lemma}[Brod~\cite{brod2015complexity}, revised]\label{Brod}
For an arbitrary given poly-sized $n$-qubit quantum circuit $Q$, there exists a constant depth linear optical circuit $C_0$ such that for $M = \Omega(N)$ and $N = {\rm{poly}}(n)$,
\begin{equation}\label{quantumcircuittolinearopticalcircuit}
    |\bra{J}\hat{\mathcal{U}}(C_0)\ket{I}|^2 = c_{Q}|\bra{0}^{\otimes n} Q \ket{0}^{\otimes n}|^2,
\end{equation}
where {$c_{Q} = 2^{-O(N)}$} is a $Q$-dependent constant which can be efficiently computed, $\hat{\mathcal{U}}(C_0)$ is a unitary operator corresponding to the circuit $C_0$, and each of $\ket{I}$ and $\ket{J}$ is an $M$-mode Fock-state composed of $N$ single photon states and vacuum states for the rest modes.  
\end{lemma}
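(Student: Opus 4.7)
The plan is to reproduce Brod's constant-depth linear-optical simulation of an arbitrary poly-sized quantum circuit via a KLM-style postselection gadget, and then verify that the overlap identity~(\ref{quantumcircuittolinearopticalcircuit}) follows by a direct amplitude bookkeeping. First, I would use the dual-rail encoding: each logical qubit of $Q$ is represented by a single photon spread over two optical modes, so that $\ket{0}^{\otimes n}$ corresponds to a Fock pattern of the form $\ket{10}^{\otimes n}$ on $2n$ computational modes. Additional ancillary single-photon and vacuum modes are appended to support the entangling gadgets. The input state $\ket{I}$ is taken to be this composite Fock state with $N$ photons and $M$ modes, chosen so that $M\ge 2N$ and $N = \text{poly}(n)$.

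Next I would construct $C$ from a universal gate decomposition of $Q$. Each single-qubit gate is implemented exactly on its two dual-rail modes by beam splitters and phase shifters, which has constant depth. Each two-qubit entangling gate (for instance a $\text{CZ}$) is realized by a postselected KLM gadget acting on the logical modes together with a small constant number of ancilla modes; conditioned on a fixed ancilla detection pattern, the gadget enacts the desired unitary on the dual-rail qubits with a fixed success amplitude. Concatenating all gadgets and reading off the amplitude of returning to $\ket{I}$, one obtains $\bra{I}\hat{\mathcal{U}}(C)\ket{I} = c_Q^{1/2}\bra{0}^{\otimes n}Q\ket{0}^{\otimes n}$, where $c_Q$ is the product of the success probabilities of the individual gadgets; squaring yields (\ref{quantumcircuittolinearopticalcircuit}). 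Since the number of gadgets is polynomial in $n$ and each success probability is a fixed rational number, $c_Q$ is efficiently computable.

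To compress $C$ to constant depth I would invoke gate teleportation as in Brod's construction: all entangling resource states needed for the two-qubit gadgets are prepared in parallel in a single constant-depth preparation layer using disjoint groups of modes, and each gate is then executed by a Bell-type measurement on a constant number of modes together with a Pauli byproduct that can be absorbed into the dual-rail encoding. Because geometrically non-local two-mode gates are permitted, all gadgets across different qubits can be scheduled to act on pairwise disjoint modes within each layer, so the overall depth is independent of $n$.

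The main obstacle I expect is the careful mode-assignment bookkeeping needed to ensure simultaneously that (a) no two gadgets overlap in any layer, so the depth really stays $O(1)$, and (b) the Fock pattern $\ket{I}$ on the output side consistently picks out the successful branch of every gadget, so that~(\ref{quantumcircuittolinearopticalcircuit}) holds exactly rather than up to an error term. The remaining steps—correctness of the KLM gadgets, universality of the chosen gate set, and explicit computation of $c_Q$—are essentially standard, so the technical heart of the argument lies in this parallel scheduling and the verification that the postselected branch reproduces the logical action of $Q$ exactly.
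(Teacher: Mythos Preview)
Your proposal is correct in outline but takes a different route from the paper. The paper follows Brod's construction literally: it invokes measurement-based quantum computation on a \emph{brickwork graph state}, noting that (i) any poly-sized $n$-qubit circuit $Q$ admits an MBQC realization on a brickwork state of $\text{poly}(n)$ qubits, (ii) the brickwork state has bounded degree so its preparation via postselected KLM $\mathrm{CZ}$ gadgets is constant depth, and (iii) the computation itself is single-qubit measurements, hence one more constant-depth layer. The constant $c_Q$ is then written explicitly as $\prod_{k\in\{\mathrm{CZ},\mathrm{CX},\mathrm{T},\mathrm{H}\}} p_k^{\Gamma_k}$, the product of the postselection probabilities for the graph-state $\mathrm{CZ}$'s and the MBQC gate implementations. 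By contrast, you simulate $Q$ gate-by-gate with KLM gadgets and then parallelize via gate teleportation with offline resource states and Bell-type measurements. Both routes work and are conceptually cousins (MBQC is teleportation-based computation), but the brickwork/MBQC framing buys you constant depth as a black-box citation rather than through the mode-scheduling argument you flag as the main obstacle.

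One point to tighten: the phrase ``a Pauli byproduct that can be absorbed into the dual-rail encoding'' is not quite right. Pauli byproducts act nontrivially on dual-rail states and would alter the amplitude $\bra{0}^{\otimes n}Q\ket{0}^{\otimes n}$; they cannot simply be absorbed. What actually makes~(\ref{quantumcircuittolinearopticalcircuit}) hold exactly is that fixing the output Fock pattern to $\ket{I}$ postselects the \emph{no-byproduct} branch of every teleportation (equivalently, the all-trivial measurement record in MBQC). That postselection probability is precisely what appears in $c_Q$, and this is how the paper accounts for it.
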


{
\begin{proof}

In this proof, we revise the main result in~\cite{brod2015complexity}, incorporating a more quantitative analysis on the coefficient $c_{Q}$, which later determines the imprecision level allowed for the worst-case hardness of shallow-depth Boson Sampling.
To do so, we first go through the main result in~\cite{brod2015complexity}.

Given an arbitrary poly-sized quantum circuit $Q$ on $n$ qubits, there exists a measurement-based quantum computation (MBQC) scheme using constant depth brickwork graph state composed of maximally $m_Q = \text{poly}(n)$ number of qubits and $\Theta(m_Q)$ number of CZ gates, with an explicit measurement basis sequence for the universal gate set \{CX, T, H\} to simulate $Q$~\cite{raussendorf2001one, broadbent2009universal, childs2005unified} (see, e.g., Fig.~\ref{fig:mbqc_to_grid} (a)).
Since the measurement outcomes for gate implementation are random, one needs adaptive measurements based on these measurement outcomes to implement $Q$ (which is indeed the essence of MBQC).
However, if post-selecting desired measurement outcomes are allowed, one can rather post-select ``good" outcomes that do not require any adaptive measurements to implement $Q$.

This brickwork qubit circuit can also be implemented in linear optical systems via the celebrated KLM scheme~\cite{knill2001scheme}.
In their scheme, they use a dual-rail encoded state ($\ket{01}$ or $\ket{10}$ in the Fock basis) as a qubit, a beam splitter as a one qubit gate, and a 2-depth network of beam splitters for a probabilistic CZ gate which also requires additional two single photon ancillas for post-selecting successful CZ operation. 
Using this KLM scheme, Ref.~\cite{brod2015complexity} proposed a 4-depth linear optical circuit $C_0$, which implements the brickwork qubit state by post-selecting successful CZ operation (see Fig.~\ref{fig:mbqc_to_grid} (b) or Fig.~3 in Ref.~\cite{brod2015complexity} for more details).
Here, the required photon number is $N = \Theta(m_{Q})$ to implement the brickwork qubit state composed of $m_{Q}$ number of qubits and $\Theta(m_{Q})$ number of CZ gates, because each of qubit and CZ gate requires a constant number of photons in the KLM scheme. 
For more clarity, we illustrate in Fig.~\ref{fig:mbqc_to_grid} the brickwork qubit circuit and corresponding linear optical circuit $C_0$ proposed by~\cite{brod2015complexity}. 
Also, the mode number is given by $M = \Omega(N)$, which is required for the dual-rail encoding in the KLM scheme.
We remark that the mode number $M$ of $C_0$ can be arbitrarily enlarged by considering a circuit that ``contains" the circuit in~\cite{brod2015complexity}, which uses some of the modes for dual-rail encoding, leaving the rest of the modes as vacuum and adding only trivial (i.e., identity) gates at these unwanted modes. 
As the trivial gates leave the unwanted modes separated, this kind of mode number augmentation does not affect the post-selection probabilities.

To sum up, given an arbitrary quantum circuit $Q$, there exists a 4-depth linear optical circuit $C_0$ that can simulate the quantum circuit $Q$, by post-selecting successful CZ operation for all probabilistic CZ gates and post-selecting good measurement outcomes (that do not require any adaptive measurements) to implement each gate in the universal gate set \{CX, T, H\} composing the circuit $Q$. 
Here, as the explicit measurement basis sequence of brickwork qubit circuit for the universal gate set \{CX, T, H\} is given, one can find (i) proper coefficients of beam splitters in $C_0$ to implement the universal gate set via the KLM scheme and (ii) good measurement outcomes in the Fock basis. 
Hence, for the given quantum circuit $Q$, we can efficiently construct the constant-depth circuit $C_0$ (by means of determining its coefficients) and $M$-mode $N$-single photon Fock-states $\ket{I}$ and $\ket{J}$ such that Eq.~\eqref{quantumcircuittolinearopticalcircuit} holds. 
Now, $c_{Q}$ becomes a post-selection probability to post-select (i) successful CZ operations for all CZ gates and (ii) good measurement outcomes to implement each gate in the universal gate set composing $Q$ via MBQC.
In other words, $c_Q$ can be expressed as 
\begin{align}
    c_Q &= \prod_{k\in\{\text{CZ, CX, T, H}\}}p_{k}^{\Gamma_k}, 
\end{align}
where $p_k$ denotes post-selection probability to implement $k$ gate (e.g., $p_{\text{CZ}}$ is $2/27$ in~\cite{knill2002quantum}), and $\Gamma_k$ denotes the number of $k$ gate to implement the circuit $Q$. 
Hence, given the post-selection probability $p_k$ for each gate, by counting the number of each gate to implement the circuit $Q$, $c_Q$ can be computed efficiently. 
To obtain the size of $c_{Q}$, note that we set the photon number $N = \Theta(m_{Q})$ to implement $m_{Q}$ qubit brickwork qubit circuit.
As this $m_{Q}$ qubit brickwork qubit circuit can implement at most $O(m_{Q})$ number of gates, $Q$ has at most $O(m_{Q}) = O(N)$ number of gates in the universal gate set \{CX, T, H\}. 
Also, the number of CZ gates is given as $O(N)$, and thus the total number of gates to implement $Q$ is at most $\sum_k\Gamma_{k}= O(N)$.
Therefore, $c_Q$ has its amplitude $c_Q = 2^{-O(N)}$, since all the post-selection probabilities $p_k$ are given as constants.
This concludes the proof.

\end{proof}

}

{
Next, we use the proposition by~\cite{kondo2022quantum} which states that estimating the output probability of any BQP circuit is \#P-hard for a certain additive imprecision.}

\begin{lemma}[Kondo et al~\cite{kondo2022quantum}]\label{kondo}
It is \#$\rm{P}$-hard to compute $|\bra{0}^{\otimes n} Q \ket{0}^{\otimes n}|^2$ for an arbitrary given quantum circuit $Q$ within the additive error less than $2^{-2n}$. 
\end{lemma}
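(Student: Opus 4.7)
The plan is a polynomial-time Turing reduction from \#SAT to additive-precision estimation of $|\langle 0^{n}|Q|0^{n}\rangle|^{2}$. Given a Boolean formula $\phi$ on $N$ variables, I would first compile $\phi$ into a reversible quantum subroutine $U_{\phi}\ket{x}\ket{0}=\ket{x}\ket{\phi(x)}$ of size $\mathrm{poly}(N)$ by the standard Toffoli/CNOT compilation, and then form the Hadamard-test circuit
\begin{equation*}
Q_{0} \;=\; (H^{\otimes N}\otimes I)\; U_{\phi}\; (H^{\otimes N}\otimes I)
\end{equation*}
on $N+1$ qubits. A direct expansion gives $\langle 0^{N+1}|Q_{0}|0^{N+1}\rangle = (2^{N}-\#\phi)/2^{N}$, so the acceptance probability equals $k^{2}/4^{N}$ with $k := 2^{N}-\#\phi \in \{0,1,\dots,2^{N}\}$ a non-negative integer.

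Next I would argue that additive precision below $2^{-2n}$, where $n$ is the qubit count of the final circuit, suffices to recover $k$ exactly and hence $\#\phi$. For distinct non-negative integers $k$, the values $k^{2}/4^{N}$ are separated by at least $1/4^{N}=2^{-2N}$, which only just matches the target tolerance. To create strict headroom I would pad the circuit with $c=O(1)$ ancilla qubits in such a way that only even values of $k$ arise among the admissible numerators, for instance by tensoring on an additional Hadamard-test block that doubles the effective amplitude. On the resulting $n = N + c$ qubits the admissible acceptance probabilities lie in the set $\{(2j)^{2}/4^{n}\}_{j}$, with minimum spacing at least $4/4^{n} = 4\cdot 2^{-2n}$. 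Any estimate with error strictly less than $2^{-2n}$ then sits in a window of width $<2\cdot 2^{-2n}$, which contains at most one admissible value; rounding recovers $j$, hence $k$ and $\#\phi$, giving the claimed \#P-hardness.

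The step I expect to be the main obstacle is precisely this gap-enlargement padding: the naive Hadamard-test construction already has the right dependence of the probability on $\#\phi$, but the minimum spacing of admissible values coincides exactly with the target tolerance, so one must modify $Q$ while simultaneously keeping the qubit count at $n = N + O(1)$ (so that $2^{-2n}$ still scales as $2^{-\mathrm{poly}(N)}$) and preserving explicit polynomial-time invertibility of the map $\#\phi \mapsto |\langle 0^{n}|Q|0^{n}\rangle|^{2}$ on its image. Beyond this engineering step the argument is a standard Cook reduction: one call to the approximation oracle on the explicitly constructed $Q$ yields the \#P-complete quantity $\#\phi$.
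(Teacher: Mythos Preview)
The paper does not prove this lemma at all: it is quoted verbatim as a result of Kondo et al.\ and used as a black box in the proof of Theorem~\ref{worstcase}. So there is no ``paper's own proof'' to compare against; your reduction from \#SAT via the Hadamard-sandwiched phase oracle is a perfectly standard and correct route to this statement.

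One small remark: your worry about the gap ``only just matching'' the tolerance, and hence the need for a padding trick, is based on an undercount of qubits. The oracle $U_{\phi}$ already requires at least one ancilla for the output bit, so $n\ge N+1$ and $2^{-2n}\le 4^{-N}/4$; since the minimum spacing between the admissible values $k^{2}/4^{N}$ is $1/4^{N}$ (attained at $k=0,1$), the allowed error is at most a quarter of the gap, and rounding already succeeds. In fact, reversible compilation of a generic Boolean formula needs $w=\mathrm{poly}(|\phi|)$ workspace qubits (uncomputed back to $\ket{0}$), so $n=N+1+w$ and the headroom is exponentially larger still. Your ``doubling the amplitude'' padding is therefore unnecessary; simply tensoring on identity ancillas---which is what the workspace qubits amount to after uncomputation---already increases $n$ without changing the amplitude, and that is all that is needed.
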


{
By the above two lemmas, one can deduce that there exists a constant-depth linear optical circuit that has the worst-case hardness for a certain level of imprecision. 
However, this does not directly imply the worst-case hardness for the fixed circuit architecture $\mathcal{B}\mathcal{B}^*$ as stated in Theorem~\ref{worstcase} because it is not guaranteed that the constant-depth worst-case circuit satisfying Eq.~\eqref{quantumcircuittolinearopticalcircuit} can be implemented in the fixed circuit architecture $\mathcal{B}\mathcal{B}^*$.

Hence, we verify that there exists a constant depth worst-case circuit $C_0$ in Lemma~\ref{Brod} that can be implemented in the fixed circuit architecture $\mathcal{B}\mathcal{B}^*$, by properly allocating the mode indices of $C_0$ in $\mathcal{B}\mathcal{B}^*$.

\begin{lemma}\label{lemma: embeding}
There exists a constant-depth circuit $C_0$ in Lemma~\ref{Brod} that can be implemented in $\mathcal{B}\mathcal{B}^*$ under a proper mode index permutation.
In other words, for an arbitrary given poly-sized $n$-qubit quantum circuit $Q$, there exists a circuit $C_0$ in $\mathcal{B}\mathcal{B}^*$ that satisfies Eq.~\eqref{quantumcircuittolinearopticalcircuit} for $M = \Omega(N)$ and $N = {\rm{poly}}(n)$, where $c_{Q} = 2^{-O(N)}$, and $\ket{I}$ and $\ket{J}$ are given by $M$-mode $N$-single photon Fock-states. 
\end{lemma}

\begin{proof}
See Appendix~\ref{appendix: section: encoding}.
\end{proof}

We prove Lemma~\ref{lemma: embeding} by showing that the constant-depth linear optical circuit $C_0$ in Lemma~\ref{Brod} can be embedded into a $3$-dimensional grid-structured circuit, which can be implemented in the $\mathcal{B}\mathcal{B}^*$ architecture under a suitable permutation of mode indices.
For interested readers, we provide the full proof of Lemma~\ref{lemma: embeding} in Appendix~\ref{appendix: section: encoding}.
}

Combining the above results, the proof of Theorem~\ref{worstcase} is now straightforward.

\begin{proof}[Proof of Theorem~\ref{worstcase}]
Given an arbitrary quantum circuit $Q$, by Lemma~\ref{Brod}, we can efficiently construct a constant-depth linear optical circuit $C_0$ and Fock-states $\ket{I}$ and $\ket{J}$ such that Eq.~\eqref{quantumcircuittolinearopticalcircuit} holds. 
Also, as depicted in Lemma~\ref{lemma: embeding}, we can efficiently embed this linear optical circuit $C_0$ in $\mathcal{B}\mathcal{B}^*$ by properly allocating mode indices.
This mode index allocation permutes $\ket{I}$ and $\ket{J}$; let us denote the permuted version of $\ket{I}$ and $\ket{J}$ as $\ket{I'}$ and $\ket{J'}$, respectively. 
Then, we can construct the linear optical circuit $C_0$ in $\mathcal{B}\mathcal{B}^*$ such that 
\begin{align}\label{eq: worst-case}
    c_{Q}^{-1}p_{\bm{s}_0}(C_0) = |\bra{0}^{\otimes n} Q \ket{0}^{\otimes n}|^2,
\end{align}
for collision-free input $\bm{t}$ and output $\bm{s}_0$ corresponding to $\ket{J'}$ and $\ket{I'}$, respectively. 
Note that, the input and output configuration $\ket{J'}$ and $\ket{I'}$ are fixed and invariant under the quantum circuit $Q$, because the architecture of $C_0$ does not depend on $Q$ (more precisely, only the coefficients of $C_0$ depend on $Q$). 
As $c_Q = 2^{-O(N)}$, if we can estimate $p_{\bm{s}_0}(C_0)$ to within additive imprecision $2^{-(2n+O(N))} = 2^{-O(N)}$, we can estimate the right-hand side of Eq.~\eqref{eq: worst-case} to within additive imprecision $2^{-2n}$, which is \#P-hard as given in Lemma~\ref{kondo}. 
Therefore, for collision-free input $\bm{t}$ and output $\bm{s}_0$, estimating an output probability $p_{\bm{s}_0}(C_0)$ to within additive imprecision $2^{-O(N)}$ for any $C_0$ in $\mathcal{B}\mathcal{B}^*$ is \#P-hard in the worst case. 
    
\end{proof}


\begin{remark}
In principle, encoding the constant-depth MBQC circuit from Lemma~\ref{Brod} to the logarithmic-depth $\mathcal{BB}^*$ circuit, as done in Lemma~\ref{lemma: embeding}, is not essential for deriving the hardness argument in shallow-depth regimes; 
one could directly use the constant-depth circuit from Lemma~\ref{Brod} itself to obtain worst-case hardness in such regimes.
Nevertheless, since we employ the architecture $\mathcal{BB}^*$ to implement required permutations during the worst-to-average-case reduction, we derive worst-case hardness within $\mathcal{BB}^*$ to simplify the overall circuit architecture to $(\mathcal{BB}^*)^{q}$ in Definition~\ref{def: kaleidoscope}.
\end{remark}


\section{Proof of Lemma~\ref{lemma: embeding}: Encoding constant-depth worst-case circuit in Lemma~\ref{Brod} to the circuit architecture $\mathcal{B}\mathcal{B}^*$}\label{appendix: section: encoding}

{

In the following, we prove Lemma~\ref{lemma: embeding}, showing that the constant depth worst-case linear optical circuit $C_0$ argued in~\cite{brod2015complexity} and Lemma~\ref{Brod} can be encoded in the Kaleidoscope circuit architecture $\mathcal{B}\mathcal{B}^*$ by properly allocating mode indices.
To summarize, this can be done in two steps:
First, we show how we can map the worst-case linear optical circuit $C_0$ into a 3-dimensional ``grid" linear optical circuit (i.e., a circuit with only nearest neighbor interactions for each dimension, with open boundaries). 
Next, we show that a single parallel application of any grid-structured linear optical circuit can be implemented in the unit butterfly circuit architecture $\mathcal{B}$ under a mode index permutation. 
We then combine these arguments, showing that the worst-case circuit $C_0$ can be encoded in $\mathcal{B}\mathcal{B}^*$.

\subsection{Mapping worst-case circuit in Lemma~\ref{Brod} to a grid linear optical circuit}

We begin by describing how to map the worst-case linear optical circuit $C_0$ in Lemma~\ref{Brod} into a 3-dimensional grid-structured linear optical circuit.
To do so, we first depict in Fig.~\ref{fig:mbqc_to_grid} (a) the brickwork qubit circuit for the MBQC~\cite{raussendorf2001one, broadbent2009universal, childs2005unified}, which can simulate any BQP circuit using post-selection.
Here, each vertex (i.e., black dot) represents input $\ket{+}$ state, and each edge represents the CZ gate applied on two qubits. 
After these CZ operations, each vertex is measured with rotated $\ket{+}$ basis, where the rotation angle is explicitly given to implement the universal gate set via MBQC.

\begin{figure*}[t]
\includegraphics[width=0.85\linewidth]{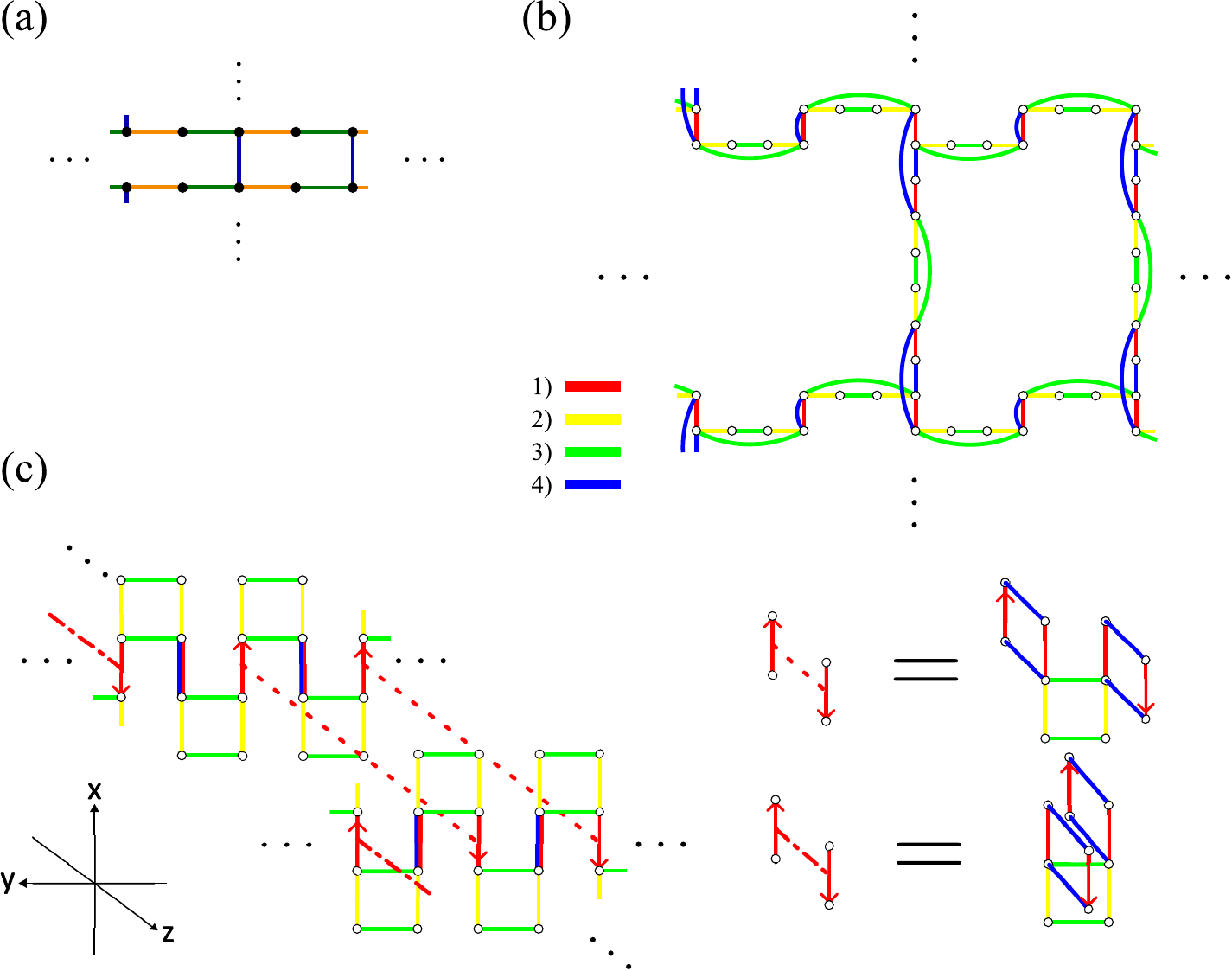}
\caption{(a) Schematics of the unit brickwork qubit circuit for measurement-based quantum computation~\cite{raussendorf2001one, broadbent2009universal, childs2005unified}. 
Each vertex (i.e., black dot) represents a $\ket{+}$ qubit, and each edge represents the CZ gate applied on two qubits. 
After CZ gate operations, each vertex is measured with a rotated $\ket{+}$ basis, where the rotation angle is explicitly given for measurement-based quantum computation. 
(b) Schematics of the 4-depth linear optical circuit proposed by~\cite{brod2015complexity} corresponding to the unit brickwork qubit circuit in (a) implemented by the KLM scheme (for more details, see Ref.~\cite{brod2015complexity}). 
Each vertex (i.e., white dot) represents zero or one photon Fock-state, and each edge represents the beam splitter with arbitrarily chosen coefficients. 
The initial beam splitter layer (red) corresponds to the input qubit (and ancillas), the following two beam splitter layers (yellow and green) correspond to the CZ gate implementation~\cite{knill2002quantum}, and the last beam splitter layer (blue) corresponds to the qubit rotation required for the measurement-based quantum computation. 
After the beam splitter operations, each vertex is measured with the Fock-state basis.
(c) Schematics of the unit brickwork linear optical circuit in (b) but mapped in 3-dimensional grid linear optical circuit architecture.
Here, the beam splitters are applied in parallel along each dimension, where the application sequence is along the direction $x \rightarrow y \rightarrow z \rightarrow x$.
}
\label{fig:mbqc_to_grid}
\end{figure*}

We can also construct this brickwork qubit circuit in linear optical systems under post-selections. 
As proposed in~\cite{brod2015complexity}, the brickwork qubit circuit in Fig.~\ref{fig:mbqc_to_grid} (a) can also be implemented in the linear optical circuit by the KLM scheme combined with post-selection, which is indeed the worst-case 4-depth linear optical circuit $C_0$ in Lemma~\ref{Brod}.
We illustrate the brickwork of this 4-depth linear optical circuit in Fig.~\ref{fig:mbqc_to_grid} (b).
Here, each vertex (i.e., white dot) represents a Fock-state with zero or one photon occupied, and each edge represents the beam splitter with appropriately chosen coefficients. 
More specifically, the initial beam splitter layer (red) corresponds to the input qubit preparation (and ancillas), the following two beam splitter layers (yellow and green) combined with single photon ancillas correspond to the CZ gate operation~\cite{knill2002quantum}.
The last beam splitter layer (blue) combined with Fock-state measurement corresponds to the rotated $\ket{+}$ measurement in the KLM scheme, which is required to implement the universal gate set via MBQC. 
}

{

By appropriately allocating the modes, the worst-case linear optical circuit $C_0$ depicted in Fig.~\ref{fig:mbqc_to_grid} (b) can be mapped into a 3-dimensional grid linear optical circuit architecture, as depicted in Fig.~\ref{fig:mbqc_to_grid} (c).
More specifically, the sequence of beam splitter application in the 3-dimensional architecture to implement the brickwork linear optical circuit is as follows.
At first, the red- and the yellow-colored beam splitters are applied in parallel along the $x$-axis. 
Next, the greed-colored beam splitters are applied in parallel along the $y$-axis. 
Lastly, the blue-colored beam splitters are applied in parallel along the $z$-axis and then applied in parallel along the $x$-axis again. 
Let us consider a single parallel application of a grid linear optical circuit as a sequence of parallel applications of beam splitters along each dimension. 
Then, one can deduce that the brickwork linear optical circuit in Fig.~\ref{fig:mbqc_to_grid} (b) can be implemented in a single parallel application of 3-dimensional grid linear optical circuit from $x$- to $z$-axis, with additional parallel beam splitter application along $x$-axis.
Here, we note that the size of this grid linear optical circuit can be arbitrarily enlarged as long as it contains the desired circuit in Fig.~\ref{fig:mbqc_to_grid} (c), where we can leave the unused modes as vacuums and add only trivial gates at these unused modes. 
Also, as the trivial gates leave the unused modes separated, this mode number augmentation does not affect the post-selection probabilities of MBQC and KLM scheme, thus leaving the total post-selection probability $c_{Q}$ given in Lemma~\ref{Brod} unchanged. 

}

{

\subsection{Encoding a grid linear optical circuit to the butterfly circuit architecture $\mathcal{B}$}

Second, we show that a single parallel application of a general-dimensional grid-structured linear optical circuit can be implemented in the unit butterfly circuit architecture $\mathcal{B}$, under a proper mode index permutation.
Here, as we mentioned previously, a single parallel application of a grid linear optical circuit is a sequence of parallel applications of beam splitters along each dimension.
More generally, we prove the following theorem.

\begin{theorem}\label{theorem: butterfly to grid}
Suppose that a mode number is given by $M = 2^{n}$ for $n \in \mathbb{Z}^{+}$.
For an arbitrary $M$-mode single parallel $d$-dimensional grid linear optical circuit $C$ with its size $2^{n_1}\times2^{n_2}\times\cdots\times2^{n_{d}}$ for $\sum_{i=1}^{d}n_{i} = n$, 
there exist an $M$-mode circuit $C'$ in $\mathcal{B}$ and an $M$-mode permutation matrix $\bm{P}$ that satisfy $C = \bm{P}C'\bm{P}^{T}$
\end{theorem}

To prove Theorem~\ref{theorem: butterfly to grid} we begin by showing how to implement a single parallel 1-dimensional grid circuit (i.e., a 1-dimensional chain, with open boundaries) by $\mathcal{B}$, whose mode number is given by $M = 2^n$. 
Using this result, we later prove Theorem~\ref{theorem: butterfly to grid} by extending this result to the general-dimensional case. 

}

{

\begin{lemma}\label{lemma: butterfly to 1d grid}
Suppose that a mode number is given by $M = 2^{n}$ for $n \in \mathbb{Z}^{+}$.
For an $M$-mode single parallel $1$-dimensional grid linear optical circuit $C$,  
there exist an $M$-mode circuit $C'$ in $\mathcal{B}$ and an $M$-mode permutation matrix $\bm{P}$ that satisfy $C = \bm{P}C'\bm{P}^{T}$
\end{lemma}

\begin{proof}

At first, we constructively establish the permutation matrix $\bm{P}$ to implement the grid circuit from the circuit in $\mathcal{B}$. 
Since $\mathcal{B}$ is composed of non-local interactions except for the first depth, we need a proper mode permutation to effectively make the grid structure. 
In short, this can be done as follows: 
As we previously defined in Definition~\ref{butterfly}, for the circuit architecture $\mathcal{B}$, we set the depth index $L \in [n]$ and the corresponding partition index $j \in [2^{n-L}]$ given by integers.
Then, in the ascending sequence of $j$ and $L$, permute the mode indices $\left\llbracket2^{L}(j-1) + 2^{L-1} + 1, 2^{L}(j-1) + 2^{L}\right\rrbracket$ from the ascending to the descending order, where the square bracket $\llbracket a, b\rrbracket$ indicates the integer interval between $a$ and $b$ included.
In other words, for each $L \in [n]$ and $j \in [2^{n-L}]$, we permute
\begin{align}\label{eq: permutation for 1 dimension}
    2^L(j-1) + 2^{L-1} + k\; \rightarrow \; 2^L(j-1) +2^{L}+1 -k,
\end{align}
for all $k \in \left[2^{L-1}\right]$ (nothing changes in Eq.~\eqref{eq: permutation for 1 dimension} for $L = 1$ so permutation does not occur for $L = 1$ case).
Let us denote $\bm{P}_{L,j}$ as the permutation described above, acting non-trivially on the mode indices $\left\llbracket2^{L}(j-1) + 2^{L-1} + 1, 2^{L}(j-1) + 2^{L}\right\rrbracket$, and acting trivially on the rest modes. 
Then, we can construct $\bm{P}$ as $\bm{P} = \prod_{L=2}^{n}\prod_{j = 1}^{2^{n-L}}\bm{P}_{L,j}$.

Let us explain why this permutation $\bm{P}$ can make the grid circuit from the circuit in $\mathcal{B}$.
First, suppose that the following is true: For given $L \geq 2$ and $j$, each of the sub-circuit of depth $L-1$ with mode indices $\left\llbracket2^{L-1}(2j-2) + 1, 2^{L-1}(2j-2) + 2^{L-1}\right\rrbracket$ and $\left\llbracket2^{L-1}(2j-1) + 1, 2^{L-1}(2j-1) + 2^{L-1}\right\rrbracket$ forms a grid circuit with mode number $2^{L-1}$; note that, this is trivial for $L = 2$ case. 
By the structure of $\mathcal{B}$, for given $L\geq 2$ and $j$, initially $(2^L(j-1)+ 2^{L-1})$-th mode and $(2^L(j-1)+ 2^{L})$-th mode are connected by a gate (i.e., a beam splitter) at depth $L$. 
Then, after the ascending-to-descending permutation $\bm{P}_{L,j}$ for given $L\geq 2$ and $j$, now $(2^L(j-1)+ 2^{L-1})$-th mode and $(2^L(j-1)+ 2^{L-1}+1)$-th mode are connected by the gate. 
Hence, by setting this gate as a non-trivial gate composing the grid circuit and setting the rest of the gates as trivial gates, we now have a grid circuit with size $2^{L}$ for each $L\geq 2$ and $j$, with mode indices given by $\left\llbracket2^{L}(j-1) + 1, 2^{L}(j-1) + 2^{L}\right\rrbracket$. 
Moreover, because we permute equivalently for all $j \in \left[ 2^{n-L} \right]$ for given $L$, these permutations do not affect the structure of $\mathcal{B}$ for larger depth $L+1, L+2,\dots,n$, such that after the permutations, $(2^{L+1}(j-1)+ 2^{L})$-th mode and $(2^{L+1}(j-1)+ 2^{L+1})$-th mode are still connected by a gate at depth $L+1$. 
Therefore, we can inductively repeat this permutation for ascending $j$ and $L$, thereby obtaining the permutation $\bm{P} = \prod_{L=2}^{n}\prod_{j = 1}^{2^{n-L}}\bm{P}_{L,j}$ that can implement the grid circuit of size $2^n$.

Next, we construct the circuit $C'$ in $\mathcal{B}$.
This can be done by choosing the non-trivial gate appropriately in $\mathcal{B}$ and leaving the rest of the gates trivial gates for each depth $L$ and partition $j$.
Firstly, set all the gates in $L = 1$ in $\mathcal{B}$ non-trivial gates because they already form a grid structure.  
Also, as we previously depicted, for each $L\geq 2$ and $j$, we set the gate between $(2^L(j-1)+ 2^{L-1})$-th mode and $(2^L(j-1)+ 2^{L})$-th mode as the non-trivial gate, where these modes are permuted from $(2^L(j-1)+ 2^{L-2} + 1)$-th mode and $(2^L(j-1)+ 2^{L-1} + 2^{L-2} + 1)$-th mode respectively in the previous step. 
Hence, given the circuit architecture $\mathcal{B}$, for each $L \geq 2$ and $j$, we set the gate between $(2^L(j-1)+ 2^{L-2} + 1)$-th mode and $(2^L(j-1)+ 2^{L-1} + 2^{L-2} + 1)$-th as the only non-trivial gate, and leave the rest of the gates as trivial gates. 
For a given single parallel 1-dimensional grid linear optical circuit $C$, let us denote $\{G_{s,t}\}$ as a set of two-mode gates composing the circuit $C$, where each gate $G_{s,t}$ is applied on $s$-th and $t$-th modes, respectively (for 1-dimensional case, $t$ is given by $s + 1$). 
Because $(2^L(j-1)+ 2^{L-1})$-th mode and $(2^L(j-1)+ 2^{L-1}+1)$-th mode are connected by the non-trivial gate after the permutation for each $L$ and $j$, we set this non-trivial gate as $G_{s,t}$ for $s = 2^L(j-1)+ 2^{L-1}$ and $t = s + 1$, thereby implementing the grid circuit $C$ by repeating this process in the ascending order of $j$ and $L$.

Therefore, we can construct the circuit $C'$ in $\mathcal{B}$ as follows. 
First, for depth $L = 1$ and $j \in [2^{n-1}]$, we set the gate between the modes $2^{L}(j-1) + 2^{L-1}$ and $2^{L}(j-1) + 2^{L}$ in $\mathcal{B}$ as $G_{s,t}$ for $s = 2^{L}(j-1) + 2^{L-1}$ and $t = s + 1$.
Next, for each depth $L \in \llbracket 2,n \rrbracket$ and $j \in [2^{n-L}]$, we set the gate between the modes $2^L(j-1)+ 2^{L-2} + 1$ and $2^L(j-1)+ 2^{L-1} + 2^{L-2} + 1$ in $\mathcal{B}$ as $G_{s,t}$ for $s = 2^{L}(j-1) + 2^{L-1}$ and $t = s+1$, and leave trivial for the rest of the gates.
Combining the arguments so far, this circuit $C'$ permuted by the proper permutation matrix $\bm{P}$ described above finally gives the desired grid circuit $C = \bm{P}C'\bm{P}^{T}$ (see, e.g., Fig.~\ref{fig:Rearranging index} (a)).

\end{proof}

\begin{figure*}[t]
\includegraphics[width=0.95\linewidth]{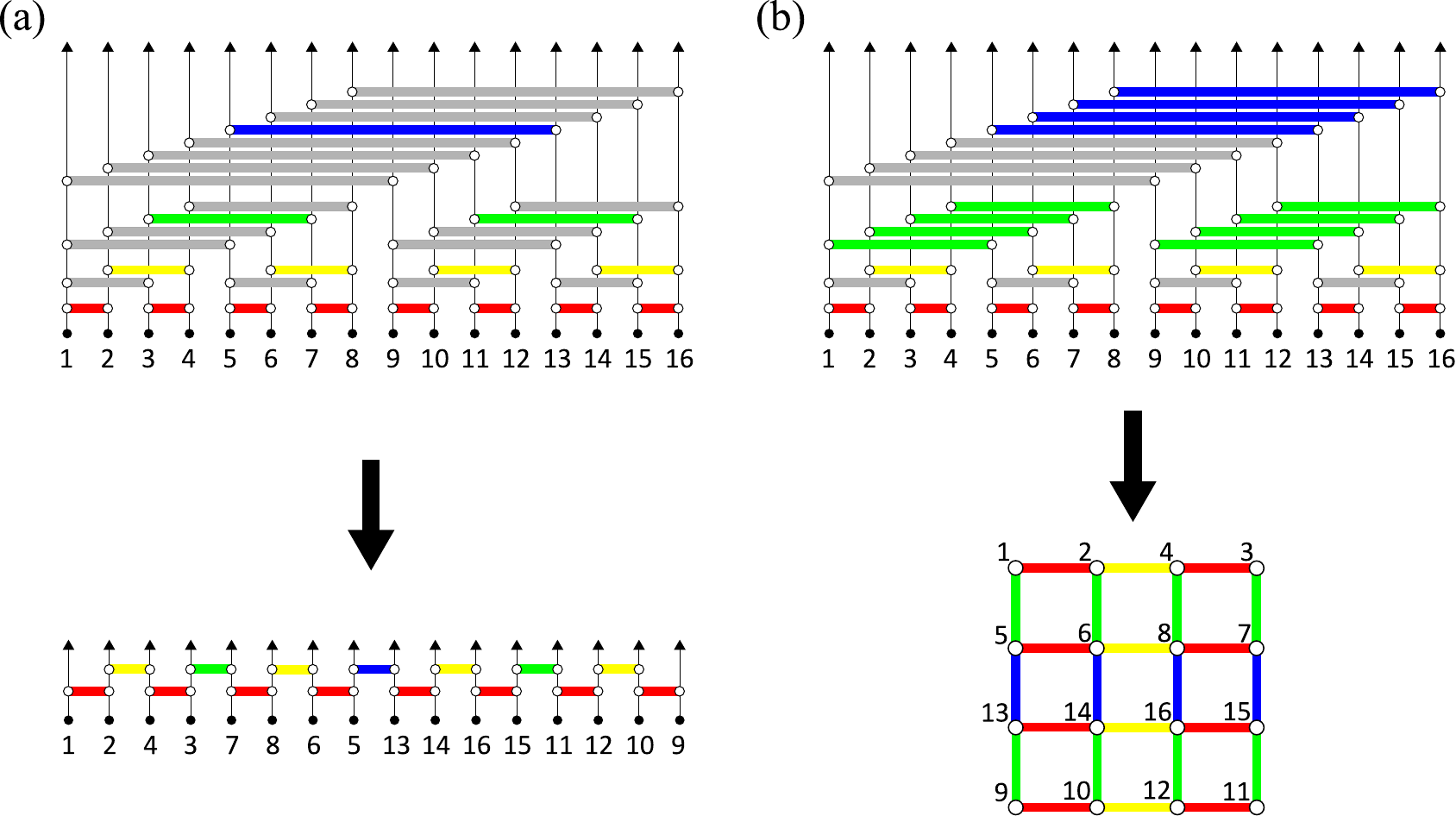}
\caption{Schematics of  $\mathcal{B}$ constructing (a) 1-dimensional grid linear optical circuit and (b) 2-dimensional grid linear optical circuit under a proper mode index permutation, for mode number $M = 16$. The grey-colored edges indicate trivial (i.e., identity) gates, and the other colored edges indicate non-trivial gates that compose the grid linear optical circuit we aim to construct.}
\label{fig:Rearranging index}
\end{figure*}

}

Based on this understanding, we now prove Theorem~\ref{theorem: butterfly to grid}, by extending the analysis in Lemma~\ref{lemma: butterfly to 1d grid} to the general-dimensional cases. 

\begin{proof}[Proof of Theorem~\ref{theorem: butterfly to grid}]

We first emphasize that, by decomposing the depth $n$ of $\mathcal{B}$ with $n_i$ such that $\sum_{i=1}^{d}n_{i} = n$, $\mathcal{B}$ can also be represented as $d$-dimensional circuit with its size $2^{n_i}$ along the $i$-th dimension; 
one can consider the gates of $\mathcal{B}$ between the depth $L \in \left\llbracket \sum_{p=1}^{i-1}n_p+1, \sum_{p=1}^{i}n_p \right\rrbracket$ are aligned along $i$-th dimension for $i \geq 2$, where the square bracket  $\llbracket a, b\rrbracket$ indicates the integer interval between $a$ and $b$ included. 
Hence, we naturally extend the analysis in Lemma~\ref{lemma: butterfly to 1d grid}, implementing a grid circuit along each dimension with size $2^{n_i}$, for all possible $n_i \in \{n_1, n_2, \dots, n_d\}$.
Here, we assume that $n_i \geq 2$ for all $i \in [d]$ without loss of generality because $\mathcal{B}$ already forms a grid along $i$-th dimension when $n_i = 1$.

More specifically, we set the size index for each dimension as $n_{i} \in \{n_1, n_2, \dots, n_d\}$, along with the depth index $L \in [n]$ and the corresponding partition index $j \in \left[2^{n - L}\right]$ as defined previously.
Also, let us denote $\bm{P}^{(i)}$ as the permutation along the $i$-th dimension, and $\{G_{s,t}^{(i)}\}$ as the set of gates applied on $s$-th and $t$-th modes along the $i$-th dimension composing the circuit $C$.
Then, along the first dimension up to $L = n_{1}$, we can similarly construct the permutation $\bm{P}^{(1)}$ and encode the gates $\{G_{s,t}^{(1)}\}$ into $C' \in \mathcal{B}$ as depicted in Lemma~\ref{lemma: butterfly to 1d grid}.

In the ascending order of the dimension $i \geq 2$, we note that each mode is departed by $2^{\sum_{p=1}^{i-1}n_{p}}$ mode index from its nearest-neighborhood along the $i$-th dimension. 
By the structure of $\mathcal{B}$, for given $L \in \left\llbracket\sum_{p=1}^{i-1}n_p + 1, \sum_{p=1}^{i}n_p \right\rrbracket $ and $j \in \left[ 2^{n - L} \right]$, initially $(2^{L}(j-1) + 2^{L-1} -2^{\sum_{p=1}^{i-1}n_{p}} + {l})$-th mode and $(2^{L}(j-1) + 2^{L} -2^{\sum_{p=1}^{i-1}n_{p}} + {l})$-th mode are connected by gates at $L$ for all ${l} \in \left[2^{\sum_{p=1}^{i-1}n_{p}} \right]$, where ${l}$ characterizes the indices along all the previous dimensions. 
Hence, now our permutation rule is that for each $L \in \left\llbracket\sum_{p=1}^{i-1}n_p + 1, \sum_{p=1}^{i}n_p \right\rrbracket $ and $j \in \left[2^{n - L} \right]$, we permute mode indices by 
\begin{align}\label{eq: permutation for general dimension}
    2^{L}(j-1) + 2^{L-1} + (k-1)2^{\sum_{p=1}^{i-1}n_{p}} + {l} \;\rightarrow\; 2^{L}(j-1) + 2^{L} -k2^{\sum_{p=1}^{i-1}n_{p}} + {l},
\end{align}
for all $k \in \left[2^{L-1- \sum_{p=1}^{i-1}n_{p}}\right]$ and  ${l} \in \left[2^{\sum_{p=1}^{i-1}n_{p}} \right]$.
Indeed, this is the ascending-to-descending permutation similar to Eq.~\eqref{eq: permutation for 1 dimension} but now along the $i$-th dimension instead. 
Note that nothing happens in Eq.~\eqref{eq: permutation for general dimension} for $L = \sum_{p=1}^{i-1}n_{p} + 1$ so permutation does not occur for $L = \sum_{p=1}^{i-1}n_{p} + 1$ cases. 
After these permutations, $(2^{L}(j-1) + 2^{L-1} -2^{\sum_{p=1}^{i-1}n_{p}} + {l})$-th mode and $(2^{L}(j-1) + 2^{L-1} + {l})$-th mode are connected by the gates for all ${l} \in \left[2^{\sum_{p=1}^{i-1}n_{p}} \right]$. 
Accordingly, we set these gates as non-trivial gates and set the rest of the gates as trivial gates for each $L$ and $j$, thereby forming a grid structure up to depth $L$. 
Also, notice that the permutation in Eq.~\eqref{eq: permutation for general dimension} can be done parallelly for all $k \in \left\llbracket 1, 2^{L-1- \sum_{p=1}^{i-1}n_{p}}\right\rrbracket$ and  ${l} \in \left[2^{\sum_{p=1}^{i-1}n_{p}} \right]$.
Therefore, by denoting the permutation described in Eq.~\eqref{eq: permutation for general dimension} for all $k$ and $x$ as $\bm{P}_{L,j}^{(i)}$, we can construct $\bm{P}^{(i)}$ as 
\begin{align}
    \bm{P}^{(i)} = \prod_{L = \sum_{p=1}^{i-1}n_p + 2}^{\sum_{p=1}^{i}n_p} \prod_{j=1}^{2^{n - L}}\bm{P}_{L,j}^{(i)},
\end{align}
and accordingly, $\bm{P} = \prod_{i=1}^{d}\bm{P}^{(i)}$.

We now construct the circuit $C'$ in $\mathcal{B}$ for $i \geq 2$. 
In the ascending order of $i \geq 2$, for each $i$,
set all the gates in $L = \sum_{p=1}^{i-1}n_p  + 1$ non-trivial gates at first, because they already form a grid structure. 
Also, as we previously discussed, for each $L \in \left\llbracket\sum_{p=1}^{i-1}n_p + 2, \sum_{p=1}^{i}n_p \right\rrbracket $ and $j \in \left[ 2^{n - L}\right]$, we set the gates between $(2^{L}(j-1) + 2^{L-1} -2^{\sum_{p=1}^{i-1}n_{p}} + {l})$-th mode and $(2^{L}(j-1) + 2^{L} -2^{\sum_{p=1}^{i-1}n_{p}} + {l})$-th mode as the non-trivial gates for all ${l} \in \left[2^{\sum_{p=1}^{i-1}n_{p}}\right]$, where these modes are permuted from $(2^{L}(j-1) + 2^{L-2} + {l})$-th mode and $(2^{L}(j-1) + 2^{L-1} + 2^{L-2} + {l})$-th mode respectively in the previous step. 
Hence, given the circuit architecture $\mathcal{B}$, for each $L \in \left\llbracket\sum_{p=1}^{i-1}n_p + 2, \sum_{p=1}^{i}n_p \right\rrbracket $ and $j \in \left[2^{n - L}\right]$, we set the gates between $(2^{L}(j-1) + 2^{L-2} + {l})$-th mode and $(2^{L}(j-1) + 2^{L-1} + 2^{L-2} + {l})$-th mode for all ${l} \in \left[2^{\sum_{p=1}^{i-1}n_{p}}\right]$ as the only non-trivial gates, and leave the rest of the gates as trivial gates.
To specify these non-trivial gates from the gate set $\{G_{s,t}^{(i)}\}$, note that $(2^{L}(j-1) + 2^{L-1} -2^{\sum_{p=1}^{i-1}n_{p}} + {l})$-th mode and $(2^{L}(j-1) + 2^{L-1} + {l})$-th mode are connected by the non-trivial gates for all ${l} \in \left[2^{\sum_{p=1}^{i-1}n_{p}}\right]$ after the permutation for each $L \in \left\llbracket\sum_{p=1}^{i-1}n_p + 2, \sum_{p=1}^{i}n_p \right\rrbracket $ and $j \in \left[2^{n - L}\right]$. 
We also note that the indices along previous dimensions (characterized by ${l}$) were permuted by $\prod_{p=1}^{i-1}\bm{P}^{(p)}$ in the previous dimensions (working on $2^{\sum_{p=1}^{i-1}n_{p}}$ mode partition).
Therefore, to correct these permutations along previous dimensions, we set these non-trivial gates along $(2^{L}(j-1) + 2^{L-1} -2^{\sum_{p=1}^{i-1}n_{p}} + {l})$-th mode and $(2^{L}(j-1) + 2^{L-1} + {l})$-th mode (after the permutation) as $G_{s,t}^{(i)}$ for $s = 2^{L}(j-1) + 2^{L-1} -2^{\sum_{p=1}^{i-1}n_{p}} + {l}'$ and $t = s + 2^{\sum_{p=1}^{i-1}n_{p}}$, where ${l}'$ is a permuted version of ${l} \in \left[2^{\sum_{p=1}^{i-1}n_{p}}\right]$ by $\prod_{p=1}^{i-1}\bm{P}^{(p)}$.

To sum up, we can construct the circuit $C'$ in $\mathcal{B}$ as follows. 

\begin{enumerate}
    \item Along the first dimension $i = 1$, for depth $L = 1$ and $j \in \left[2^{n - 1}\right]$, we set the gate between the modes $2^{L}(j-1) + 2^{L-1}$ and $2^{L}(j-1) + 2^{L}$ in $\mathcal{B}$ as $G_{s,t}^{(1)}$ for $s = 2^{L}(j-1) + 2^{L-1}$ and $t = s + 1$.
    Then, for each $L = \llbracket 2,n_{1} \rrbracket$ and $j \in \left[2^{n - L}\right]$, we set the gate between the modes $2^L(j-1)+ 2^{L-2} + 1$ and $2^L(j-1)+ 2^{L-1} + 2^{L-2} + 1$ in $\mathcal{B}$ as $G_{s,t}^{(1)}$ for $s = 2^{L}(j-1) + 2^{L-1}$ and $t = s+1$, and leave trivial for the rest of the gates.
    
    \item In the ascending order of the dimension $i \geq 2$, for $L = \sum_{p=1}^{i-1}n_p + 1$ and $j \in \left[2^{n - L}\right]$, we set the gate between the modes $2^{L}(j-1) + 2^{L-1} -2^{\sum_{p=1}^{i-1}n_{p}} + {l}$ and $2^{L}(j-1) + 2^{L} -2^{\sum_{p=1}^{i-1}n_{p}} + {l}$ in $\mathcal{B}$ as $G_{s,t}^{(i)}$ for $s = 2^{L}(j-1) + 2^{L-1} -2^{\sum_{p=1}^{i-1}n_{p}} + {l}'$ and $t = s + 2^{\sum_{p=1}^{i-1}n_{p}}$, for all ${l} \in \left[2^{\sum_{p=1}^{i-1}n_{p}}\right]$ and ${l}'$ permuted from ${l}$ by $\prod_{p=1}^{i-1}\bm{P}^{(p)}$.
    Then, for each $L \in \left\llbracket\sum_{p=1}^{i-1}n_p + 2, \sum_{p=1}^{i}n_p \right\rrbracket $ and $j \in \left[2^{n - L}\right]$,
    we set the gates between the modes $2^{L}(j-1) + 2^{L-2} + {l}$ and $2^{L}(j-1) + 2^{L-1} + 2^{L-2} + {l}$ in $\mathcal{B}$ as $G_{s,t}^{(i)}$ for $s = 2^{L}(j-1) + 2^{L-1} -2^{\sum_{p=1}^{i-1}n_{p}} + {l}'$ and $t = s + 2^{\sum_{p=1}^{i-1}n_{p}}$,
    for all ${l} \in \left[2^{\sum_{p=1}^{i-1}n_{p}}\right]$ and ${l}'$ permuted from ${l}$ by $\prod_{p=1}^{i-1}\bm{P}^{(p)}$.
\end{enumerate}

Combining the arguments so far, this circuit $C'$ permuted by the proper permutation matrix $\bm{P}$ described above finally gives the desired grid circuit $C = \bm{P}C'\bm{P}^{T}$ 
(see, e.g., Fig.~\ref{fig:Rearranging index} (b)).

\end{proof}

To help understand the encoding scheme in the proof, we leave some illustrations in Fig~\ref{fig:Rearranging index} for 16-mode circuit architecture $\mathcal{B}$ constructing 1- and 2-dimensional grid linear optical circuits using a proper mode index permutation.

\subsection{Summary: Encoding the worst-case circuit in Lemma~\ref{Brod} to $\mathcal{B}\mathcal{B}^*$}

To sum up, the worst-case linear optical circuit $C_0$ in Fig.~\ref{fig:mbqc_to_grid} (b) can be implemented in a single parallel application of 3-dimensional grid linear optical circuit from $x$- to $z$-axis, with additional parallel beam splitter application along $x$-axis.
Here, as we stated, the size of the grid linear optical circuit can be arbitrarily enlarged as long as it contains the worst-case circuit $C_0$, which leaves the total post-selection probability $c_{Q}$ given in Lemma~\ref{Brod} unchanged.
Also, by Theorem~\ref{theorem: butterfly to grid}, one can implement a single parallel application of the 3-dimensional grid linear optical circuit in the circuit architecture $\mathcal{B}$ under a mode index permutation.
Combining these arguments, the worst-case circuit $C_0$ in Lemma~\ref{Brod} can be encoded in $\mathcal{B}^*\mathcal{B}$ under a mode index permutation. 
Here, note that $\mathcal{B}\mathcal{B}^*$ and $\mathcal{B}^*\mathcal{B}$ are equivalent under a mode index permutation by reverse-indexing each binary number of mode index (e.g., $10010_2 \rightarrow 01001_2$, for mode indices given by $[00000_2, 11111_2]$).  
Therefore, this implies that $C_0$ can be encoded in $\mathcal{B}\mathcal{B}^*$ under a mode index permutation, thereby finally concluding the proof of Lemma~\ref{lemma: embeding}.

We remark that our encoding scheme to implement the worst-case circuit $C_0$ in the circuit architecture $\mathcal{B}\mathcal{B}^*$ is suboptimal. 
Since the worst-case circuit can be implemented in constant depth as depicted in Fig.~\ref{fig:mbqc_to_grid} (b), there would possibly exist easier or more straightforward strategies to encode the worst-case circuit in the log-depth circuit architecture $\mathcal{B}\mathcal{B}^*$ by allocating mode indices more carefully.
Still, our analysis in this appendix is sufficient for our worst-case hardness result in our current work, i.e., as depicted in Theorem~\ref{worstcase}.

{
\section{About Cayley transform}\label{appendix: section: cayley}
}

As we depicted in the main text, our goal is to perturb the average-case circuit with the worst-case circuit, and it is crucial to choose a proper circuit perturbation method to establish the worst-to-average-case reduction successfully.
Throughout this work, we use the Cayley path for the perturbation, which was employed in Refs.~\cite{movassagh2023hardness, bouland2022noise, kondo2022quantum} for the hardness proposals of the random circuit sampling.

\begin{definition}[Cayley transform~\cite{movassagh2023hardness}]\label{cayley}
The Cayley transform of an $n$ by $n$ unitary matrix $H$ parameterized by $\theta \in [0,1]$ is a unitary matrix defined as 
\begin{equation}
    H(\theta) \coloneqq ((2-\theta)H + \theta I_{n})(\theta H + (2-\theta)I_{n})^{-1},
\end{equation}
where $I_{n}$ is the $n$ by $n$ identity matrix. Also, for the diagonalization of the $n$ by $n$ unitary matrix $H = LDL^{\dag}$, with unitary matrix $L$ and diagonal matrix $D = \text{diag}(e^{i\phi_1},\dots,e^{i\phi_n})$, the equivalent form of the Cayley transform is 
\begin{equation}
    H(\theta) = \frac{1}{q(\theta)} L\;\text{diag}( \left\{ p_{j}(\theta) \right\}_{j=1}^{n} )\,{\it{L}}^{\dag},
\end{equation}
where
\begin{align}
    &q(\theta) = \prod_{j=1}^{n} (1 + i\theta e^{i\frac{\phi_j}{2}}\sin{\frac{\phi_j}{2}}),  \label{q theta} \\ 
    &p_j(\theta) = e^{i\phi_j} (1 - i\theta e^{-i\frac{\phi_j}{2}}\sin{\frac{\phi_j}{2}}) \prod_{k\in[n]\setminus j} (1 + i\theta e^{i\frac{\phi_k}{2}}\sin{\frac{\phi_k}{2}}).
\end{align}

\end{definition}
Using the Cayley transform defined above, we now define the perturbed random circuit distribution, which is the local random circuit ensemble in Definition~\ref{randomcircuit} with each Haar random gate perturbed by the Cayley transform.

\begin{definition}[Perturbed local random circuit ensemble]\label{perturbed circuit instances}

Let $\mathcal{A}$ be the circuit architecture with $m$ number of gates. For the given circuit $C$ in $\mathcal{A}$ with gates $\{G_i\}_{i=1}^{m}$, the circuit $V(\theta)$ is defined with each gate of $C$ replaced by $G_i \rightarrow H_i(\theta)G_i$, where each $H_i(\theta)$ is a Cayley transform of independently distributed local Haar random gate $H_i$ $(i\in [m])$ parameterized by $\theta \in [0,1]$ in Definition~\ref{cayley}. We define $\mathcal{H}_{\mathcal{A}, \theta}^{C}$ as the distribution for such $V(\theta)$. Here, the distribution of the $V(0)$ is the local random circuit ensemble $\mathcal{H}_{\mathcal{A}}$ in Definition~\ref{randomcircuit}, and $V(1) = C$. 

\end{definition}

We should make sure that the success probability of average-case approximation over circuits is still large enough after the perturbation described in Definition~\ref{perturbed circuit instances}, to establish the reduction process successfully.
This is evident in the case that the total variation distance over circuits induced by the perturbation is small enough, as the success probability over circuits perturbs by, at most, the total variation distance.
In fact, Ref.~\cite{movassagh2023hardness} proved that total variation distance between $\mathcal{H}_{\mathcal{A}}$ and $\mathcal{H}_{\mathcal{A}, \theta}^{C}$ is small for sufficiently small perturbation $\theta$ compared to the inverse gate number of the circuit architecture $\mathcal{A}$.

\begin{lemma}[Movassagh~\cite{movassagh2023hardness}]\label{totalvariationdistance}
Let $\mathcal{A}$ be the circuit architecture with $m$ number of gates. For $\theta \ll 1$ and for any circuit $C$ in $\mathcal{A}$, total variation distance between $\mathcal{H}_{\mathcal{A}, \theta}^{C}$ and $\mathcal{H}_{\mathcal{A}}$ is $O(m\theta)$.
\end{lemma}
Therefore, by using small $\theta = O(m^{-1})$, one can upper-bound the total variation distance by an arbitrarily small constant, which implies that the success probability of average-case approximation over circuits (i.e., $1 - \delta$ in Theorem~\ref{averagehardness}) also perturbs by at most a small constant.

{
As we have sketched in Sec.~\ref{Section:Averagecase}, we infer the worst-case output probability $p_{\bm{s}}(C\bm{P})$ for some fixed circuit $C$ and permutation matrix $\bm{P}$ via average-case output probabilities $p_{\bm{s}}(V(\theta)\bm{P})$ for perturbed random circuit $V(\theta)$.
To investigate the viability of the inference of the worst-case value, we examine the behavior of the function $p_{\bm{s}}(V(\theta)\bm{P})$ characterized by the parameter $\theta$.
Using Definition~\ref{cayley}, we find that $p_{\bm{s}}(V(\theta)\bm{P})$ can be represented as a rational function in $\theta$.

\begin{lemma}\label{rationalfunction}
Let $\mathcal{A}$ be the $q$-Kaleidoscope circuit architecture $(\mathcal{B}\mathcal{B}^*)^{q}$ with $m = qM\log M$ number of gates, and $V(\theta) \sim \mathcal{H}_{\mathcal{A},\theta}^{C}$ for any $C$ in $\mathcal{A}$. Then for any collision-free outcome $\bm{s}$ and $M$ by $M$ permutation $\bm{P}$, the output probability $p_{\bm{s}}(V(\theta)\bm{P})$ can be represented as a degree $(4mN, 4mN)$ rational funtion in $\theta$.
\end{lemma}
}
\begin{proof}
For given circuit unitary matrix $V(\theta) \sim \mathcal{H}_{\mathcal{A},\theta}^{C}$ with $C$ composed of $\{G_i\}_{i=1}^{m}$ gates, one can decompose $V(\theta)$ with $m = qM\log M$ product of unitary matrices, such that each matrix element of $V(\theta)$ can be represented as
\begin{equation}
    \left[V(\theta)\right]_{j,k} = \sum_{l_1=1}^{M}\sum_{l_2=1}^{M}\cdots\sum_{l_{m-1}=1}^{M} V_{j,l_1}^{(1)}V_{l_1,l_2}^{(2)}\cdots V_{l_{m-1},k}^{(m)},
\end{equation}
where each $V^{(i)}$ denotes an $M$-dimensional unitary matrix, with a single gate unitary matrix $H_i(\theta)G_i$ applied to the modes participating in the gate and identity for the rest of the modes.
For example, if the $i$th gate $H_i(\theta)G_i$ is a two-mode gate between the first two modes, $V^{(i)} $ is a block diagonal matrix of $ H_i(\theta)G_i$ and identity matrix, namely, $V^{(i)} = H_i(\theta)G_i \bigoplus I_{M-2}$.

For circuit architecture $\mathcal{A} = (\mathcal{B}\mathcal{B}^*)^{q}$ which is composed of only two-mode gates, matrix elements of $V^{(i)}$ can be represented as degree $(2,2)$ rational functions in $\theta$, where the common denominator for the elements is given by $q_i(\theta)$, defined in Eq. (\ref{q theta}) but with index $i$ appended for $i$th random gate $H_i(\theta)$.
Using reduction to the common denominator for all of the $m$ gates, $\left[V(\theta)\right]_{j,k}$ can be represented as $(2m,2m)$ rational function in $\theta$ with the common denominator $\prod_{i=1}^{m} q_i(\theta)$; note that it does not change with the indices $j,k$.

{
Now, by Eq.~\eqref{outputprobability}, the output probability $p_{\bm{s}}(V(\theta)\bm{P})$ for collision-free outcome $\bm{s}$ has the form of
\begin{align}
    \begin{split}
        p_{\bm{s}}(V(\theta)\bm{P}) &= \left|\sum_{\sigma \in S_N}\prod_{j = 1}^{N} \left[V(\theta)_{\bm{s},\bm{t}^*}\right]_{\sigma_j, j} \right|^2,
    \end{split}
\end{align}
where we denote $\bm{t}^*$ as an input configuration vector permuted by $\bm{P}^{-1}$ from the fixed input configuration $\bm{t}$ in Eq.~\eqref{outputprobability}, and $S_N$ is $N$-mode permutation group. } 
One can easily check that the common denominator for $\prod_{j = 1}^{N} \left[V(\theta)_{\bm{s},\bm{t}^*}\right]_{\sigma_j, j}$ is $[\prod_{i=1}^{m} q_i(\theta) ]^N$, and it does not change with permutation $\sigma$.
Let us define $Q(\theta) = [\prod_{i=1}^{m} |q_i(\theta)|^2]^N$, which is a degree $4mN$ polynomial in $\theta$.
Then, $Q(\theta)$ serves as the common denominator for the output probability.
Hence, the output probability can be represented as $p_{\bm{s}}(V(\theta)) = \frac{P(\theta)}{Q(\theta)}$, with $P(\theta)$ also a degree $4mN$ polynomial in $\theta$.
\end{proof}

\section{Proof of Theorem~\ref{averagehardness} (average-case hardness)}\label{proof of averagecase hardness}

In the following, we provide a proof of Theorem~\ref{averagehardness}.
As we have already sketched the proof in the main text, for the sake of completeness, we will provide technical details here.
Specifically, we establish the worst-to-average-case reduction:
We prove that, for $\delta,\,\eta \geq 0$ with $\delta + \eta < \frac{1}{4}$, the output probability $p_{\bm{s}_0}(C_0)$ in Theorem~\ref{worstcase} can be estimated in additive imprecision $2^{-O(N)}$ in complexity class $\text{BPP}^{\text{NP}}$, given output probability estimations $p_{\bm{s}}(U)$ within additive imprecision $\epsilon = 2^{-O(N^{\gamma+1}(\log N)^2)}$, with probability at least $1 - \delta$ over $U \sim \mathcal{H}_{\mathcal{A}}\mathcal{P}$ with $\mathcal{A} = (\mathcal{B}\mathcal{B}^*)^{q\ge 3}$ for at least $1 - \eta$ over $\bm{s} \sim \mathcal{G}_{M,N}$.

\begin{proof}[Proof of Theorem~\ref{averagehardness}]

Let $\mathcal{O}$ be an oracle that solves the problem in Theorem~\ref{averagehardness}, i.e., on input $\bm{s} \sim \mathcal{G}_{M,N}$ and $U \sim \mathcal{H}_{\mathcal{A}}\mathcal{P}$ with $\mathcal{A} = (\mathcal{B}\mathcal{B}^*)^{q\ge 3}$, the oracle outputs $p_{\bm{s}}(U)$ within additive error $\epsilon$ with $1-\delta$ portion of $U$ for $1 - \eta$ portion of $\bm{s}$. 
In other words, for $1 - \eta$ of $\bm{s} \sim \mathcal{G}_{M,N}$ and $\mathcal{A} = (\mathcal{B}\mathcal{B}^*)^{q\ge 3}$, we have 
\begin{equation}\label{eq: average-case oracle}
	\Pr_{U \sim \mathcal{H}_{\mathcal{A}}\mathcal{P}} [|\mathcal{O}(\bm{s}, U) - p_{\bm{s}}(U) | > \epsilon ] < \delta .
\end{equation}

Let $C_0$ be the worst-case circuit in $(\mathcal{B}\mathcal{B}^*)^{q_0}$ with $q_0 \geq 1$, and let $\bm{s}_0$ be the fixed collision-free output given in Theorem~\ref{worstcase}. 
In the following, we show that approximating $p_{\bm{s}_0}(C_0)$ to within additive error $2^{-O(N)}$ (i.e., solving the problem in Theorem~\ref{worstcase}) is in $\text{BPP}^{\text{NP}^{\mathcal{O}}}$, which implies that the average-case approximation of $p_{\bm{s}}(U)$ to within $\epsilon$ is \#P-hard under $\text{BPP}^{\text{NP}}$ reduction.

{
As we have argued in the main text, we first prepare $M$ by $M$ permutation matrices $\bm{P}_0$ and $\bm{P}_1$ each sampled uniformly over the $M!$-number of possible permutations. 
Using these sampled permutations, we newly define permuted outcome $\bm{s}_p$ and permuted circuit $C_p$ as 
\begin{align}\label{def: permuted outcome and permuted circuit}
    \bm{s}_{p} \coloneqq \bm{P}_0\bm{s}_0, \quad C_{p} \coloneqq \bm{P}_0C_0\bm{P}_1^{-1} ,
\end{align}
where one can easily check that the permuted outcome follows random outcome ensemble $\bm{s}_{p} \sim \mathcal{G}_{M,N}$ in Definition~\ref{randomoutcome}. 
Also, the permuted circuit $C_{p}$ can be implemented in the circuit architecture $(\mathcal{B}\mathcal{B}^*)^{q_0+2}$, because each of the sampled $\bm{P}_0$ and $\bm{P}_1$ can be implemented in the architecture $\mathcal{B}\mathcal{B}^*$ by Lemma~\ref{permutation}.

Note that, by definition in Eq.~\eqref{def: permuted outcome and permuted circuit}, the worst-case output probability can also be represented as $p_{\bm{s}_0}(C_0) = p_{\bm{s}_{p}}(\bm{P}_0C_0) = p_{\bm{s}_{p}}(C_{p}\bm{P}_1)$.
Hereafter, we consider $C_{p}$ as a ``revised" worst-case circuit, and conduct circuit level worst-to-average-case reduction, from the worst-case circuit $C_{p}$ to the average-case circuit in $(\mathcal{B}\mathcal{B}^*)^{q_0+2}$.

For the average-case circuit, we sample the perturbed random circuit $V(\theta)$ from $\mathcal{H}_{\mathcal{A},\theta}^{C_{p}}$ defined in Definition~\ref{perturbed circuit instances}, for the circuit architecture $\mathcal{A} = (\mathcal{B}\mathcal{B}^*)^{q}$ with $q = q_0 + 2 \geq 3$ from now on, and the worst-case circuit $C_{p}$.
This can be done by sampling independently distributed local Haar random gates $\{H_i\}_{i=1}^{m}$ for gate number $m = qM\log M$, perturbing them by the Cayley transform parameterized by $\theta$ according to the gates composing the worst-case circuit $C_{p}$.
Then, by Definition~\ref{perturbed circuit instances}, $V(1) = C_{p}$ such that $p_{\bm{s}_{p}}(V(1)\bm{P}_1) = p_{\bm{s}_0}(C_0)$ is the worst-case output probability we aim to estimate.

Given the above $V(\theta)$, we multiply the sampled permutation $\bm{P}_1$ at the right side of $V(\theta)$. 
Because $V(0)\bm{P}_1$ follows the distribution $\mathcal{H}_{\mathcal{A}}\mathcal{P}$, one can expect that the distribution of $V(\theta)\bm{P}_1$ is close to $\mathcal{H}_{\mathcal{A}}\mathcal{P}$ for sufficiently small $\theta$.
Then, one can also expect that the success probability of the oracle $\mathcal{O}$ (i.e., the right-hand side of Eq.~\eqref{eq: average-case oracle}) on input a perturbed circuit $V(\theta)\bm{P}_1$ instead of $V(0)\bm{P}_1$ would not much be deviated from $\delta$ as long as $\theta$ is small enough.

Based on this understanding, we input $\bm{s}_{p}$ and $V(\theta)\bm{P}_1$ in the oracle $\mathcal{O}$.
Then, by Eq.~\eqref{eq: average-case oracle}, for at least $1 - \eta$ over $\bm{s}_{p}$, the failure probability of $\mathcal{O}$ is at most 
\begin{align}
\Pr \left[\left|\mathcal{O}(\bm{s}_{p}, V(\theta)\bm{P}_1) - p_{\bm{s}_{p}}(V(\theta)\bm{P}_1) \right| > \epsilon \right] 
< \delta + D_{\rm{TV}}(\mathcal{H}_{\mathcal{A},\theta}^{C_{p}}, \mathcal{H}_{\mathcal{A}}), \label{failureforcircuit}
\end{align}
where $D_{\rm{TV}}$ denotes total variation distance. 
The inequality in Eq.~\eqref{failureforcircuit} is evident based on the two facts: We can interpret the total variation distance as the supremum over events of the difference in probabilities of those events (Viz., circuits corresponding to the failure)~\cite{bouland2019complexity}, and the equal permutation on the circuit does not affect the total variation distance over the circuit space.
In the right-hand side of Eq.~\eqref{failureforcircuit}, by Lemma~\ref{totalvariationdistance}, $D_{\rm{TV}}(\mathcal{H}_{\mathcal{A},\theta}^{C_{p}}, \mathcal{H}_{\mathcal{A}})$ is $O(m\theta)$.
By setting $0 \le \theta \le \Delta$ with $\Delta = O(m^{-1})$, we can upper bound $D_{\rm{TV}}(\mathcal{H}_{\mathcal{A},\theta}^{C_{p}}, \mathcal{H}_{\mathcal{A}})$ by an arbitrarily small constant. 
}

By Lemma~\ref{rationalfunction}, $p_{\bm{s}_{p}}(V(\theta)\bm{P}_1)$ is a $(4mN, 4mN)$ degree rational function $\frac{P(\theta)}{Q(\theta)}$, where the denominator is given as $Q(\theta) = \left[\prod_{i=1}^{m} |q_i(\theta)|^2\right]^N$ for $q_i(\theta)$ defined in Definition~\ref{cayley} but with subindex $i$ corresponding to $i$th random gate $H_i(\theta)$.
We note that $Q(\theta)$ can be computed in $\Theta(m)$ time, as it only depends on the constant number of eigenvalues of local gate matrices (i.e., $\phi_j$ values in Eq.~\eqref{q theta} for each local gate matrix $H_i$).
Also, given that $\theta \le \Delta = O(m^{-1})$, $Q(\theta)$ is very close to the unity, because $Q(\theta) \ge 1$ and
\begin{align}\label{Q theta}
    Q(\theta) &= \left[\prod_{i=1}^{m} |q_i(\theta)|^2\right]^N \nonumber \\
    &=  \left[\prod_{i=1}^{m}\prod_{j=1}^{2} |(1 + i\theta e^{i\frac{\phi_{i,j}}{2}}\sin{\frac{\phi_{i,j}}{2}})|^2 \right]^N \\
    &\le \left[\prod_{i=1}^{m}\prod_{j=1}^{2} (1+\theta^2)\right]^N  \nonumber\\
    &\le (1 + O(m^{-2}))^{2mN}  \nonumber\\
    &= 1 + O(Nm^{-1}), \nonumber
\end{align}
where $\phi_{i,j}$ denotes the phase of the $j$th eigenvalue of the $i$th gate.

Therefore, $p_{\bm{s}_{p}}(V(\theta)\bm{P}_1)$ is very close to the degree $d = 4mN$ polynomial $P(\theta)$ in $\theta \in [0,\Delta]$, which allows us to use polynomial interpolation technique for $P(\theta)$.
Specifically, we obtain estimations of $P(\theta)$ for different values of $\theta \in [0,\Delta]$ by querying the oracle $\mathcal{O}$, use polynomial interpolation for given $P(\theta)$ values to estimate $P(1)$, and infer the value $p_{\bm{s}_{p}}(V(1)\bm{P}_1) = p_{\bm{s}_0}(C_0)$ by multiplying $Q(1)^{-1}$.
However, $Q(1)^{-1}$ becomes arbitrarily large for the case that even a single $\phi_{i,j}$ in Eq.~\eqref{Q theta} is near $\pm\pi$, which will arbitrarily enlarge the imprecision of the approximation of $p_{\bm{s}_{p}}(V(1)\bm{P}_1)$.
To avoid this issue, we employ the strategy from Ref.~\cite{bouland2022noise}, which only considers the case that all $\phi_{i,j}$ values of randomly chosen gates $\{H_i\}_{i=1}^{m}$ are in $[-\pi + \zeta, \pi - \zeta]$, and regards the other case as failure. 
This happens with probability at least $1 - O(m\zeta)$ over the random circuit instances.
By setting $\zeta = O(m^{-1})$, we can make $O(m\zeta)$ arbitrarily small constant, and as a result, we can upper bound $Q(1)^{-1}$ (see Eq.~\eqref{lowerboundofQ(1)} below) with high probability over random circuit instances.

Now the problem reduces to approximating degree $d = 4mN$ polynomial $P(\theta)$ with the value $\mathcal{O}(\bm{s}_{p}, V(\theta)\bm{P}_1)Q(\theta)$ in $\theta \in [0,\Delta]$ within additive error smaller than $\epsilon(1 + O(Nm^{-1})) \simeq \epsilon$; such approximations will later be used for the estimation of the value $P(1)$ via polynomial interpolation technique.
The failure of $\mathcal{O}$ depends on the outcome $\bm{s}_{p} \sim \mathcal{G}_{M,N}$ whose failure probability is at most $\eta$, and the circuit $V(\theta) \sim \mathcal{H}_{\mathcal{A},\theta}^{C_{p}}$ whose failure probability is at most $\delta + O(m\Delta)$ from Eq.~\eqref{failureforcircuit}.
Also, the probability that at least one $\phi_{i,j}$ of randomly chosen gates $\{H_i\}_{i=1}^{m}$ is outside of the regime $[-\pi + \zeta, \pi - \zeta]$ is at most $O(m\zeta)$. 
Putting everything together and applying a simple union bound, the total failure probability of the approximation of $P(\theta)$ is at most
{
\begin{align}\label{failureforall}
    \Pr \left[\left|\mathcal{O}(\bm{s}_{p}, V(\theta)\bm{P}_1)Q(\theta) - P(\theta) \right| > \epsilon \right] &< \eta + \delta + O(m\Delta) + O(m\zeta)  \le \delta',
\end{align}
}
where $\delta'$ is an upper bound of $\eta + \delta + O(m\Delta) + O(m\zeta)$, and given $\eta + \delta < \frac{1}{4}$, we can make $\delta' < \frac{1}{4}$ by setting $O(m\Delta)$ and $O(m\zeta)$ arbitrary small constants.

Let $\{\theta_i\}_{i=1}^{O(d^2)}$ be the set of equally spaced points in the interval $[0,\Delta]$.
For each $\theta_i$, we obtain the unitary matrix $U(\theta_i)$ using the same random gate $\{H_i\}_{i=1}^{m}$ and worst-case circuit $C_{p}$. 
Let $y_i = \mathcal{O}(\bm{s}_{p}, V(\theta_i)\bm{P}_1)Q(\theta_i)$. 
By Eq.~\eqref{failureforall}, each set of points $(\theta_i, y_i)$ satisfies
\begin{equation}
    \Pr [|y_i - P(\theta_i) | > \epsilon ] \le \delta' < \frac{1}{4}.
\end{equation}
By using the interpolation algorithm introduced in Theorem~\ref{bouland}, we can obtain the additive approximation of $P(1)$ as $\tilde{p}$ with an access to NP oracle, such that
\begin{equation}\label{failureofrobustBW}
    \Pr[|\tilde{p} - P(1)| > \epsilon'] < \frac{1}{3},
\end{equation}
where $\epsilon' = \epsilon e^{-d\log{\Delta}} = \epsilon 2^{O(N^{\gamma+1}(\log N)^2)}$ using $d = 4mN$ and $m = qM\log M$. 
Note that the failure probability in Eq.~\eqref{failureofrobustBW} can be arbitrarily reduced by taking a polynomial number of trials, and thus we can obtain the estimated value $P(1)$ within additive error $\epsilon'$ with arbitrarily high probability.

From the estimated value $P(1)$, we can infer the worst-case output probability value $p_{\bm{s}_0}(C_0) = p_{\bm{s}_{p}}(V(1)\bm{P}_1) = P(1)/Q(1)$.
As the value of $Q(1)$ depends on the values $\phi_{i,j}$ in Eq.~\eqref{Q theta}, the $\phi_{i,j}$ independent lower bound of $Q(1)$ is required to set an upper bound of the additive imprecision of $p_{\bm{s}_{p}}(V(1)\bm{P}_1)$. 
Since we only consider the case that all of $\phi_{i,j}$ values are in $[-\pi + \zeta, \pi - \zeta]$ with $\zeta = O(m^{-1})$ for all randomly chosen gates $\{H_i\}_{i=1}^{m}$, we have
\begin{align}\label{lowerboundofQ(1)}
    \begin{split}
        Q(1) &=  \left[\prod_{i=1}^{m}\prod_{j=1}^{2} |(1 + ie^{i\frac{\phi_{i,j}}{2}}\sin{\frac{\phi_{i,j}}{2}})|^2 \right]^N \\
        &= \left[\prod_{i=1}^{m}\prod_{j=1}^{2} \left( 1 - \sin^2{\frac{\phi_{i,j}}{2}} \right) \right]^N \\
        &\ge \left[\prod_{i=1}^{m}\prod_{j=1}^{2} \left( 1 - \sin^2{\frac{\pi - \zeta}{2}} \right) \right]^N \\
        &= \left( O(m^{-2}) \right)^{2mN} \\
        &= 2^{2mN \log O(m^{-2})}.
    \end{split}
\end{align}

Therefore, the total additive error for estimating $p_{\bm{s}_{p}}(V(1)\bm{P}_1)$ is bounded by $\epsilon'2^{-2mN \log O(m^{-2})} = \epsilon 2^{O(N^{\gamma+1}(\log N)^2)}$. 
By setting $\epsilon = 2^{-O(N^{\gamma+1}(\log N)^2)}2^{-O(N)} = 2^{-O(N^{\gamma+1}(\log N)^2)}$, we can estimate the worst-case output probability value $p_{\bm{s}_{p}}(V(1)\bm{P}_1) = p_{\bm{s}_0}(C_0)$ within additive error $2^{-O(N)}$, and the whole reduction process is in $\text{BPP}^{\text{NP}}$. 
This completes the proof. 
    
\end{proof}

For the polynomial interpolation, we employ the Robust Berlekamp-Welch algorithm that requires the NP oracle, recently proposed in Ref.~\cite{bouland2022noise}.

\begin{theorem}[Robust Berlekamp-Welch~\cite{bouland2022noise}]\label{bouland}
Let $P(x)$ be a degree $d$ polynomial in $x$. 
Suppose there is a set of points $D = \{(x_i,y_i)\}$
such that $|D| = O(d^2)$ and $\{x_i\}$ is equally spaced in the interval $[0,\Delta]$. 
Suppose also that each points $(x_i,y_i)$ satisfies
\begin{equation}
    \Pr[|y_i - P(x_i)|\ge \epsilon] \le \delta,
\end{equation}
with $\delta < \frac{1}{4}$. Then there exists a $\rm{P}^{\rm{NP}}$ algorithm that takes input $D$ and outputs $\tilde{p}$ such that
\begin{equation}
    |\tilde{p} - P(1)| \le \epsilon e^{-d\log\Delta},
\end{equation}
with success probability at least $\frac{2}{3}$.
\end{theorem}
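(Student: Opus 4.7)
The plan is to combine three ingredients: a concentration bound limiting the fraction of corrupted $y_i$'s, a guess-and-check search in $\mathrm{P}^{\mathrm{NP}}$ that produces a candidate polynomial fitting most data points, and a Chebyshev-type extrapolation estimate that converts closeness on $[0,\Delta]$ into a controlled growth at $x=1$. First, I would bound the bad set $B = \{ i : |y_i - P(x_i)| \ge \epsilon \}$: since each index lies in $B$ independently with probability at most $\delta < 1/4$, Hoeffding's inequality guarantees $|B| \le \tau'|D|$ for some constant $\tau' < 1/4$, except with failure probability at most $1/6$, once the hidden constant in $|D|=O(d^2)$ is chosen large enough. I would condition on this event throughout the rest of the argument.

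Second, I would use the NP oracle to search over degree-$d$ polynomials, represented by coefficient vectors truncated to $\text{poly}(d, \log(1/\epsilon))$ bits. The predicate ``there exists a coefficient vector whose polynomial $\tilde{P}$ satisfies $|\tilde{P}(x_i) - y_i| \le \epsilon$ on at least $(1 - \tau')|D|$ indices'' is in NP, and the standard bit-by-bit reduction extracts such a witness $\tilde{P}$ with $\text{poly}(d,\log(1/\epsilon))$ oracle queries; a suitably rounded $P$ is a valid witness, so the search succeeds. I would output $\tilde{p} := \tilde{P}(1)$. By inclusion-exclusion, any returned $\tilde{P}$ satisfies $|\tilde{P}(x_i) - P(x_i)| \le 2\epsilon$ on at least $(1 - 2\tau')|D| = \Omega(d^2)$ equally spaced points of $[0,\Delta]$.

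Third, I would bound $|\tilde{p} - P(1)|$ via approximation theory. Set $R := P - \tilde{P}$, a polynomial of degree at most $d$. A Remez/Markov-brothers-type inequality for equally spaced nodes in $[0,\Delta]$, applied to the $\Omega(d^2)$ agreement nodes guaranteed by the quadratic oversampling, lifts the pointwise bound $|R(x_i)| \le 2\epsilon$ to a uniform estimate $\|R\|_{L^{\infty}([0,\Delta])} \le \text{poly}(d)\,\epsilon$. Substituting $x = (1+y)\Delta/2$ and invoking the Chebyshev growth bound for polynomials outside their interval of boundedness yields $|R(1)| \le \|R\|_{L^{\infty}([0,\Delta])}\cdot |T_d(2/\Delta - 1)|$, and $|T_d(2/\Delta -1)| \le e^{O(d) - d\log\Delta}$. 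Absorbing the $\text{poly}(d)$ and $e^{O(d)}$ prefactors gives $|\tilde p - P(1)| \le \epsilon\, e^{-d\log\Delta}$, and the total failure probability is at most $1/6 + 1/6 < 1/3$, yielding success probability at least $2/3$.

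The main obstacle is the $\ell^{\infty}$-to-$L^{\infty}$ lifting in the third step: one must show that agreement on $\Omega(d^2)$ equally spaced nodes of $[0,\Delta]$ forces the degree-$d$ polynomial $R$ to be uniformly small on $[0,\Delta]$ with only polynomial overhead in $d$, rather than an extrapolation-like blow-up by $\Delta^{-d}$. This is precisely why the theorem demands $|D|=\Theta(d^2)$ rather than merely $d+1$: with only $d+1$ interpolation nodes, the Lebesgue constant for equally spaced points on $[0,\Delta]$ is exponentially large in $d$, and the interior estimate would inherit a spurious $\Delta^{-d}$ factor that would destroy the extrapolation bound. The quadratic oversampling, together with a Remez-type argument treating the sample points as a dense discretization of $[0,\Delta]$, keeps this overhead polynomial in $d$ and leaves the clean Chebyshev extrapolation factor $e^{-d\log\Delta}$ in the final bound.
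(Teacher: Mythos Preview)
The paper does not prove this theorem---it is quoted from \cite{bouland2022noise} and invoked as a black box in Appendix~\ref{proof of averagecase hardness}---so there is no in-paper argument to compare against, and I evaluate your sketch on its own terms. Your Step~1 has a genuine gap: you apply Hoeffding by asserting that each index lies in the bad set $B$ \emph{independently}, but the hypothesis only bounds the \emph{marginal} failure probability $\Pr[|y_i-P(x_i)|\ge\epsilon]\le\delta$. No independence is stated, and in the paper's actual use it is false: all the evaluations $y_i=\mathcal{O}(\bm{s}_{\bm P},U(\theta_i))Q(\theta_i)$ share the single random seed $\{H_j\}_{j=1}^m$ and the single outcome $\bm{s}_{\bm P}$, so the failure events across different $\theta_i$ are highly correlated. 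Dropping independence leaves only Markov on $\E[|B|]\le\delta|D|$, giving $|B|<3\delta|D|$ with probability $\ge 2/3$; for $\delta$ just below $1/4$ that is $|B|<\tfrac34|D|$, and after your inclusion-exclusion the guaranteed agreement set has size $(1-6\delta)|D|$, which can be negative. Your route therefore cannot reach the stated threshold $\delta<\tfrac14$.

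Your Step~3 also leaves open the obstacle you name rather than resolving it. The nodes on which $R=P-\tilde P$ is small form an \emph{adversarial subset} of the equispaced grid: the bad indices may be contiguous, leaving a gap of width $\Theta(\Delta)$ on which $R$ is entirely unconstrained. A Remez-type inequality for a dense equispaced grid does not transfer to such a subset, and nothing prevents $R$ from being $O(\epsilon)$ on the surviving nodes yet of order $\epsilon\Delta^{-d}$ on the gap (take a shifted Chebyshev polynomial concentrated there). The Berlekamp--Welch style analysis in \cite{bouland2022noise} sidesteps this by working with the product $E\cdot R$ for a low-degree error-locator polynomial $E$ that vanishes at the bad indices: this product is small on the \emph{entire} equispaced grid, so the discretization-to-$L^\infty$ step applies cleanly, and one then separately lower-bounds $|E(1)|$. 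Your sketch supplies no analogue of this mechanism.
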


\section{Proof of Lemma~\ref{aaronsonandarkhipov}}\label{proof of averagecase approximation}

Let $\bar{p}_{\bm{s}}(C)$ be the output probability distribution from the approximate sampler $\mathcal{S}$ with the given linear optical circuit $C$. 
Also, let $\mathcal{C}_{M,N}$ be the set of collision-free outcomes of Boson Sampling, for mode number $M$ and photon number $N$.
Then $\bar{p}_{\bm{s}}(C)$ satisfies 
\begin{align}\label{averageofdifference}
\E_{\bm{s}\sim\mathcal{G}_{M,N}} \left[|\bar{p}_{\bm{s}}(C) - p_{\bm{s}}(C)|\right] = \frac{1}{\binom{M}{N}} \sum_{\bm{s}\in\mathcal{C}_{M,N}}|\bar{p}_{\bm{s}}(C) - p_{\bm{s}}(C)|
\le \frac{2\beta}{\binom{M}{N}}.
\end{align}
Using Eq.~\eqref{averageofdifference} and Markov's inequality, $\bar{p}_{\bm{s}}(C)$ satisfies
\begin{align}
    \Pr_{\bm{s}\sim\mathcal{G}_{M,N}} \left[|\bar{p}_{\bm{s}}(C) - p_{\bm{s}}(C)| \ge \frac{\beta k}{\binom{M}{N}}\right] &\le \frac{2}{k} ,
\end{align}
for all $k > 2$.
Also, using Stockmeyer's algorithm~\cite{stockmeyer1985approximation} whose complexity is in $\rm{BPP}^{\rm{NP}}$, obtaining the estimate $\tilde{p}_{\bm{s}}(C)$ of $\bar{p}_{\bm{s}}(C)$ satisfying 
\begin{equation}
    \Pr\left[|\tilde{p}_{\bm{s}}(C) - \bar{p}_{\bm{s}}(C)| \ge \alpha \bar{p}_{\bm{s}}(C)\right] \le \frac{1}{2^N},
\end{equation}
in polynomial time in $N$ and $\alpha^{-1}$ is in $\rm{BPP}^{\rm{NP}^{\mathcal{S}}}$. Using $\E_{\bm{s} \sim \mathcal{G}_{M,N}}[\bar{p}_{\bm{s}}(C)] = \binom{M}{N}^{-1} \sum_{\bm{s}\in\mathcal{C}_{M,N}}\bar{p}_{\bm{s}}(C) \le \binom{M}{N}^{-1}$,  
\begin{align}
    \Pr \left[|\tilde{p}_{\bm{s}}(C) - \bar{p}_{\bm{s}}(C)| \ge \frac{\alpha l}{\binom{M}{N}} \right] &\le \Pr \left[ \bar{p}_{\bm{s}}(C) \ge \frac{l}{\binom{M}{N}} \right] + \Pr \left[|\tilde{p}_{\bm{s}}(C) - \bar{p}_{\bm{s}}(C)| \ge \alpha \bar{p}_{\bm{s}}(C)\right] \\
    &\le \frac{1}{l} + \frac{1}{2^N},
\end{align}
for all $l > 1$. 
Putting all together, by applying a triangular inequality, finding an average-case approximation $\tilde{p}_{\bm{s}}(C)$ of $p_{\bm{s}}(C)$ satisfying
\begin{align}
    \Pr \left[|\tilde{p}_{\bm{s}}(C) - p_{\bm{s}}(C)| \ge \frac{\beta k + \alpha l}{\binom{M}{N}} \right] &\le \frac{2}{k} + \frac{1}{l} + \frac{1}{2^N}
\end{align}
is in $\rm{BPP}^{\rm{NP}^{\mathcal{S}}}$.
Let $\kappa$ and $\xi$ be fixed error parameters such that $k/2 = l = 3/\xi$ and $\beta = \kappa\xi/{12} = \alpha/2$. As $\beta k + \alpha l = \kappa$ and $\frac{2}{k} + \frac{1}{l} + \frac{1}{2^N} = \frac{2}{3}\xi + \frac{1}{2^N} \le \xi$, we finally obtain Eq.~\eqref{averagecaseapproximation}.

\section{On average-case hardness including collision outcomes}\label{appendix: section: collision}

As we have remarked in the main text, extending our hardness result of shallow-depth Boson Sampling to the saturated regime $M = \Theta(N^{\gamma})$ for $1 \leq \gamma < 2$ remains a crucial open problem, and to do so, one also needs to consider collision outcomes in the average-case hardness argument.
In this appendix, we present a possible approach that can be further improved to the average-case hardness result including collision outcomes.

Our intuition is that while most of the outcomes in the saturated regime are not collision-free, the key point is that most outcomes are expected to be ``nearly" collision-free, in the sense that for most of the outcomes, a large portion of the measured modes are occupied only by a single photon.
Also, 
we can establish worst-to-average-case reduction across different system sizes (i.e., mode number \& photon number), i.e.,  the system size for the worst-case hardness result can be smaller at most polynomially than the system size for the average-case hardness result.  
Hence, by post-selecting outcomes with the number of measured single photons at least as large as the photon number required in the worst-case problem (in Theorem~\ref{worstcase}, which requires a collision-free outcome), one would possibly establish worst-to-average-case reduction from the randomly chosen outcome over all possible outcomes.


More specifically, let the mode number $M = \Theta(N^{\gamma})$ for any $\gamma \geq 1$. 
Let us define $\,\mathcal{Q}_{M,N}$ as the uniform distribution over all the possible outcomes of Boson Sampling with $M$ modes and $N$ photons, such that each outcome $\bm{s} \sim \mathcal{Q}_{M,N}$ is an $M$-dimensional output configuration vector $\bm{s} = (s_1,\dots,s_M)$ with the constraint $\sum_{i=1}^{M}s_i = N$.
Also, we denote the number of $s_i = 1$ in $\bm{s}$ as the number of \textit{singletons} throughout this appendix. 
Here, let us assume that there exists $N_0 = cN^{\lambda}$ for a constant $c$ and $0<\lambda <1$ such that it is likely that the number of singletons is more than $N_0$; we later provide the intuition to justify this assumption at the end of this appendix.

Hereafter, we denote each $(M_0, N_0)$ and $(M, N)$ as the system sizes (mode number \& photon number) for the worst-case hardness (i.e., Theorem~\ref{worstcase}) and average-case hardness (we aim to prove), respectively.  
More specifically, given a worst-case circuit $C_0$ in $M_0 = \Omega(N_0)$ mode circuit architecture $\mathcal{B}\mathcal{B}^*$ and a fixed $M_0$-mode and $N_0$-photon collision-free outcome $\bm{s}_0$ depicted in Theorem~\ref{worstcase}, by our assumption, one can find a photon number $N$ that satisfies $N_0 = cN^{\lambda}$ for a constant $c$ and $0<\lambda <1$, and mode number $M = \Theta(N^{\gamma})$ for $\gamma \geq 1$, such that most of the outcomes over $\mathcal{Q}_{M,N}$ have at least $N_0$ singletons.
In the following, we show a possible approach for worst-to-average-case reduction for general outcomes, by describing how one can estimate $p_{\bm{s}_0}(C_0)$ to within desired imprecision given oracle access to the well-estimated values $p_{\bm{s}}(U)$ for average-case outcomes $\bm{s} \sim \mathcal{Q}_{M,N}$ and $M$-mode average-case circuits over $U \sim \mathcal{H}_{\mathcal{A}}$ with $\mathcal{A} = (\mathcal{B}\mathcal{B}^*)^{q}$.

At first, for the average-case instances, we sample $\bm{s} \sim \mathcal{Q}_{M,N}$ that has at least $N_0$ singletons, and sample $U \sim \mathcal{H}_{\mathcal{A}}$ specifically for $M$-mode circuit architecture $\mathcal{A} = (\mathcal{B}\mathcal{B}^*)^{q\geq2}$.
Subsequently, we sample $M/2$-dimensional configuration vector $\bm{s}_1$ through the following steps: $(1)$ discard $N_0$ singletons from $\bm{s}$; $(2)$ retain only the non-zero elements of $\bm{s}$ and discard all the zeros; $(3)$ append zeros as needed to construct $M/2$-dimensional vector $\bm{s}_1$, and $(4)$ randomly permute the elements of $\bm{s}_1$ uniformly. 
Then, given $\bm{s}$, $\bm{s}_0$ and $\bm{s}_1$, one can efficiently construct $M$ by $M$ permutation matrix $\bm{P}$ that satisfies
\begin{align}
    \bm{s} = \bm{P}(\bm{s}_0\oplus 0_{M/2-M_0}\oplus \bm{s}_1),
\end{align}
where we denote $0_{M/2-M_0}$ as a $(M/2 - M_0)$-dimensional vector composed on only zeros. 
In addition, let $\bm{t} = \bm{t}_0\oplus 0_{M/2-M_0}\oplus \bm{t}_1$ be an input configuration for the average-case output probability, where $\bm{t}_1$ is some fixed $M/2$-mode and $(N-N_0)$-photon configuration vector.

Next, given the sampled $U$, we denote the first-rightmost $M/2$-mode subcircuit of $U$ as $U_0$, such that $U_0$ is in $M/2$-mode circuit architecture $\mathcal{A} = \mathcal{B}\mathcal{B}^*$ and essentially follows the distribution $\mathcal{H}_{\mathcal{A}}$; we note that, $M/2$-mode $\mathcal{B}\mathcal{B}^*$ can be embedded into $M$-mode $\mathcal{B}\mathcal{B}^*$ (see, e.g., Fig.~\ref{butterflyforgaussian}).
Using the definitions so far, we construct the revised worst-case circuit $C \in (\mathcal{B}\mathcal{B}^*)^{q=2}$ as 
\begin{align}
    C = \bm{P}(C_0\oplus I_{M/2 - M_0} \oplus U_0), 
\end{align}
where we denote $I_{M/2-M_0}$ as a $(M/2 - M_0)$-sized identity matrix. 
We note that the revised worst-case circuit can be embedded in the higher depth circuit architecture $C \in (\mathcal{B}\mathcal{B}^*)^{q\geq2}$, by adding trivial gates in the front. 
Then, by definition, for input configuration $\bm{t}$, the output probability $p_{\bm{s}}(C)$ can be expressed as 
\begin{align}
    p_{\bm{s}}(C) = p_{\bm{s}_0}(C_0)p_{\bm{s}_1}(U_0).
\end{align}
Therefore, if one can well-estimate the revised worst-case value $p_{\bm{s}}(C)$ and the average-case value $p_{\bm{s}_1}(U_0)$, one can also infer the desired worst-case output probability $p_{\bm{s}_0}(C_0)$, thereby establishing worst-to-average-case reduction.

We now argue how to estimate the worst-case value $p_{\bm{s}}(C)$ and the average-case value $p_{\bm{s}_1}(U_0)$.
First, estimating the average-case value $p_{\bm{s}_1}(U_0)$ can be done by similarly accessing the oracle, but now we input the outcome $\bm{s}_1$ and the circuit $U_0$ instead.         
Hence, the success probability for the oracle would increase at most the total variation distance between the distribution of $\bm{s}_1$ and $\mathcal{Q}_{M/2,N-N_0}$, and if this can be sufficiently bounded, we can obtain the estimated value of $p_{\bm{s}_1}(U_0)$ with high probability. 
Also, one can estimate the value $p_{\bm{s}}(C)$ by the polynomial interpolation, as in the proof of Theorem~\ref{averagehardness}.
Specifically, by parameterizing the circuit $U$ as $U(\theta)$, where each gate of $U$ (except those in $U_0$) is perturbed via the Cayley transform using the corresponding gate from $C$, one can similarly infer the value $p_{\bm{s}}(C)$ through oracle access to $p_{\bm{s}}(U(\theta))$ and applying polynomial interpolation (see Appendix~\ref{proof of averagecase hardness} for details).

To sum up, the overall worst-to-average-case reduction process can be schematized as follows.
\begin{align}
    p_{\bm{s}_0}(C_0) =  \frac{p_{\bm{s}}(C)}{p_{\bm{s}_1}(U_0)} \quad\xrightarrow[\substack{U\rightarrow C  \\ \text{Cayley transform} \\ 
    \text{and} \\ \text{polynomial interpolation}}]{}\quad \frac{p_{\bm{s}}(U)}{p_{\bm{s}_1}(U_0)} \approx \frac{\mathcal{O}(\bm{s},U)}{\mathcal{O}(\bm{s}_1,U_0)},
\end{align}
where $\mathcal{O}$ represents the oracle for the average-case estimation problem.

We highlight the necessary ingredients that need to be addressed for the successful worst-to-average-case reduction by the above process.
First, to make sure that we can also well-estimate $p_{\bm{s}_1}(U_0)$ with high probability over $\bm{s}_1$ via the oracle access, one needs to justify that the distribution of $\bm{s}_1$ is sufficiently close to the distribution $\mathcal{Q}_{M/2,N-N_0}$.
To help understand how the distribution of $\bm{s}_1$ can be close to $\mathcal{Q}_{M/2,N-N_0}$, consider the limiting case that $N_0 = 0$. 
Here, given that $M$ is much larger than $N$ (but still can be $M = \Theta(N)$), sampling $\bm{s}_1$ is equivalent to sampling from $\mathcal{Q}_{M,N}$ and post-selecting outcomes that have all the non-zero elements in the first $M/2$ mode, which yields the uniform distribution $\mathcal{Q}_{M/2,N}$.
Therefore, given that $M$ is much larger than $N$ and $N_0$ is sufficiently small relative to $N$, we expect that the distribution of $\bm{s}_1$ would not largely deviate from the distribution $\mathcal{Q}_{M/2,N-N_0}$.

Also, as the final imprecision level increases by the estimated value of $p_{\bm{s}_1}(U_0)^{-1}$, it must be promised that $p_{\bm{s}_1}(U_0)$ is lower bounded.
More precisely, $p_{\bm{s}_1}(U_0)$ must be at least as large as the allowed imprecision level of average-case estimation (i.e., imprecision level of the oracle $\mathcal{O}$), commonly referred to as the anti-concentration property~\cite{aaronson2011computational, bouland2019complexity, Bremner2016average, Hangleiter2018anticoncentration, Bermejo2018architectures, Dalzell2022random}.  
Nevertheless, we believe that $p_{\bm{s}_1}(U_0)$ is at least as large as the allowed imprecision level for most of $U_0$, based on two reasons:
First, the imprecision level required by our current average-case hardness result is sufficiently small.
Also, although this has only been demonstrated in finite-sized numerical experiments, random circuits for our circuit architecture closely resemble global Haar-random unitary circuits~\cite{go2024exploring}, which are strongly conjectured to exhibit the anti-concentration property~\cite{aaronson2011computational}.

We now present a toy model that could offer an insight that, for a large portion of outcomes of Boson Sampling, a sufficient number of singletons would exist.
In the following, we present the classical ``balls-into-bins" problem where we throw $N$ balls into $M$ bins uniformly at random, which is indeed the classical analogue of Boson Sampling.
We show that, given that $M = \Theta(N^{\gamma})$ with $\gamma \geq 1$, it is highly unlikely that the number of bins getting exactly one ball is less than $cN^{\lambda}$ for a constant $c$ and $0<\lambda < 1$.

Let us first denote the random variable $X_i$ as the number of balls in $i$-th bin (such that $\sum_{i=1}^{M}X_i = N$), when we throw $N$ balls into $M$ bins uniformly at random. 
Here, our goal is to verify that over the experiments, it is highly unlikely to observe the outcomes that the number of $X_i = 1$ is smaller than $cN^{\lambda}$.
Since all the variables $X_i$ are correlated due to the constraint $\sum_{i=1}^{M}X_i = N$, instead, we consider i.i.d. random variable $Y_i$ each drawn from Poisson distribution $P(\frac{N}{M})$, such that $\E[\sum_{i=1}^{M}Y_{i}] = N$ and $\Pr[\sum_{i=1}^{M}Y_{i} = N] = e^{-N}\frac{N^N}{N!} = \Theta(\frac{1}{\sqrt{N}})$.
The motivation for using Poissonian random variables is that the distribution of $(Y_1,\dots, Y_M)$ conditioned on the sum $\sum_{i=1}^{M} Y_i = N$ is equivalent to the distribution of the balls-into-bins problem $(X_1,\dots, X_M)$ (see, e.g., Ref.~\cite{ross2014introduction}). 
Accordingly, if it is exponentially unlikely that the number of $Y_i = 1$ is smaller than $cN^{\lambda}$ as we sample $(Y_1,\dots, Y_M)$, this indicates that it is still exponentially unlikely that the number of $X_i = 1$ is smaller than $cN^{\lambda}$ as we sample $(X_1,\dots, X_M)$, given that the post-selection probability for $\sum_{i=1}^{M} Y_i = N$ is at most inverse polynomial (i.e., $\Theta(\frac{1}{\sqrt{N}})$).

Now, let us newly define $Z = \sum_{i=1}^{M} Z_i$ as the summation of the independent Bernoulli random variable $Z_i \in \{0,1\}$, where each $Z_i = 1$ only if $Y_i = 1$, and $Z_i = 0$ otherwise.
Then, $Z$ indicates the number of $Y_i = 1$, and thus our problem reduces to verifying that the probability $\Pr[Z \leq cN^{\lambda}]$ is exponentially small.
To proceed, as $Y_i \sim P(\frac{N}{M})$, note that the probability of obtaining $Z_i = 1$ is given by
\begin{align}
    \Pr[Z_i = 1] = \frac{N}{M}\exp(-\frac{N}{M}).
\end{align}
Then, the expectation value of $Z$ is given by
\begin{align}
    \E[Z] =  \sum_{i=1}^{M}\E[Z_i] =  \sum_{i=1}^{M}\Pr[Z_i = 1]  = N\exp(-\frac{N}{M}).
\end{align}
As all the $Z_i$ are independent Bernoulli random variables, we use the Chernoff bound, which gives the probability bound as
\begin{align}
    \Pr[Z \leq \varepsilon \E[Z])] \leq \exp\left( -\frac{(1-\varepsilon)^2}{2}\E[Z]\right) .
\end{align}
Hence, by setting $\varepsilon \E[Z] = cN^{\lambda}$, we finally obtain the bound
\begin{align}
    \Pr[Z \leq cN^{\lambda}] \leq \exp\left( -\frac{1}{2}\left(1 -  cN^{\lambda - 1}\exp(\frac{N}{M})\right)^2N \exp(-\frac{N}{M}) \right)= \exp\left(-\Theta(N)\right).
\end{align}
Therefore, for the classical balls-into-bins problem, it is exponentially unlikely that the number of bins getting exactly one ball is less than $cN^{\lambda}$.
Note that this bound even allows $\lambda = 1$ as long as $c\exp(\frac{N}{M}) < 1$, which implies that a constant portion of bins will get exactly one ball.

We emphasize that the distribution of the classical balls-into-bins problem is not equivalent to the uniform distribution over the possible outcomes of Boson Sampling, because the classical balls are distinguishable while identical bosons are indistinguishable in nature (and indeed, bosons are even more ``gregarious" than classical balls~\cite{aaronson2011computational}). 
Nevertheless, this example is sufficient to provide an insight that a sufficiently large number of singletons would appear in a large fraction of Boson Sampling outcomes.


We conclude this appendix by introducing an alternative approach, based on the closely related result in~\cite{bouland2023complexity} that establishes a hardness argument for general Boson Sampling in the saturated regime.
Note that our convention on singletons closely parallels the notion of ``clicks" introduced in~\cite{bouland2023complexity}, which refers to the number of modes that are occupied by a \textit{non-zero} number of photons.
Analogous to our analysis, their key observation is that, even in the saturated regime, most outcomes have a sufficiently large number of clicks.
Based on this observation, they show that, given an $N$-photon outcome $\bm{s}$ with collisions that has a sufficient number of clicks $N_0$, there exists a poly-time constructible matrix $A \in \{0,1 \}^{N_{0} \times N}$ such that computing $\text{Per}(A_{\bm{s}})$ is worst-case \#P-hard, for $A_{\bm{s}}$ being a submatrix of $A$ by repeating rows according to $\bm{s}$. 
Then they establish a worst-to-average-case reduction by perturbing $A$ into their average-case matrix and performing polynomial interpolation (for details, see~\cite{bouland2023complexity}).

Accordingly, since we employ a conceptually similar strategy in the main text, establishing worst-case hardness for shallow-depth circuits with a large enough number of clicks is sufficient to obtain an analogous hardness result for shallow-depth Boson Sampling.
This requires constructing a shallow-depth circuit (as done in the proof of Theorem~\ref{worstcase}) for which estimating an output probability with a large enough number of clicks is worst-case \#P-hard, in direct analogy to the construction of the worst-case matrix $A$ above.
Since this lies beyond the scope of the presented work, we leave this as future work.


{

\section{Numerical simulation for the collision patterns of Boson Sampling for the local random circuit over $(\mathcal{B}\mathcal{B}^*)^{q}$}\label{appendix: section: numerical evidence of bbp}

In this appendix, we present numerical simulation results of collision patterns of Boson Sampling, for solely the local random circuit ensemble in Definition~\ref{randomcircuit} over our shallow-depth circuit architecture $(\mathcal{B}\mathcal{B}^*)^{q}$ in Definition~\ref{def: kaleidoscope}.

To show the bosonic birthday paradox property for the local random circuit ensemble over $(\mathcal{B}\mathcal{B}^*)^{q}$, we show numerically that the local random circuit ensemble over $(\mathcal{B}\mathcal{B}^*)^{q}$ samples sufficiently many collision-free outcomes over the experiments. 
Specifically, we prepare random unitary circuits drawn from the local random circuit ensemble over $(\mathcal{B}\mathcal{B}^*)^{q}$, sample the outcomes by classical Boson Sampling algorithm for each randomly chosen circuit, and compute the ratio of the collision-free outcomes over all the sampled outcomes for each randomly chosen circuit. 
To evaluate whether these local random circuits output sufficiently many collision-free outcomes, as a comparison, we also prepare the global Haar random circuits, which have the bosonic birthday paradox property as depicted in~\cite{aaronson2011computational}.
In other words, if the ratio of collision-free outcomes for local random circuit ensemble over $(\mathcal{B}\mathcal{B}^*)^{q}$ is as large as the ratio of collision-free outcomes for global Haar random circuit, this clearly gives evidence of the bosonic birthday paradox for our circuit architecture.

\begin{figure*}[t]
\includegraphics[width=1\linewidth]{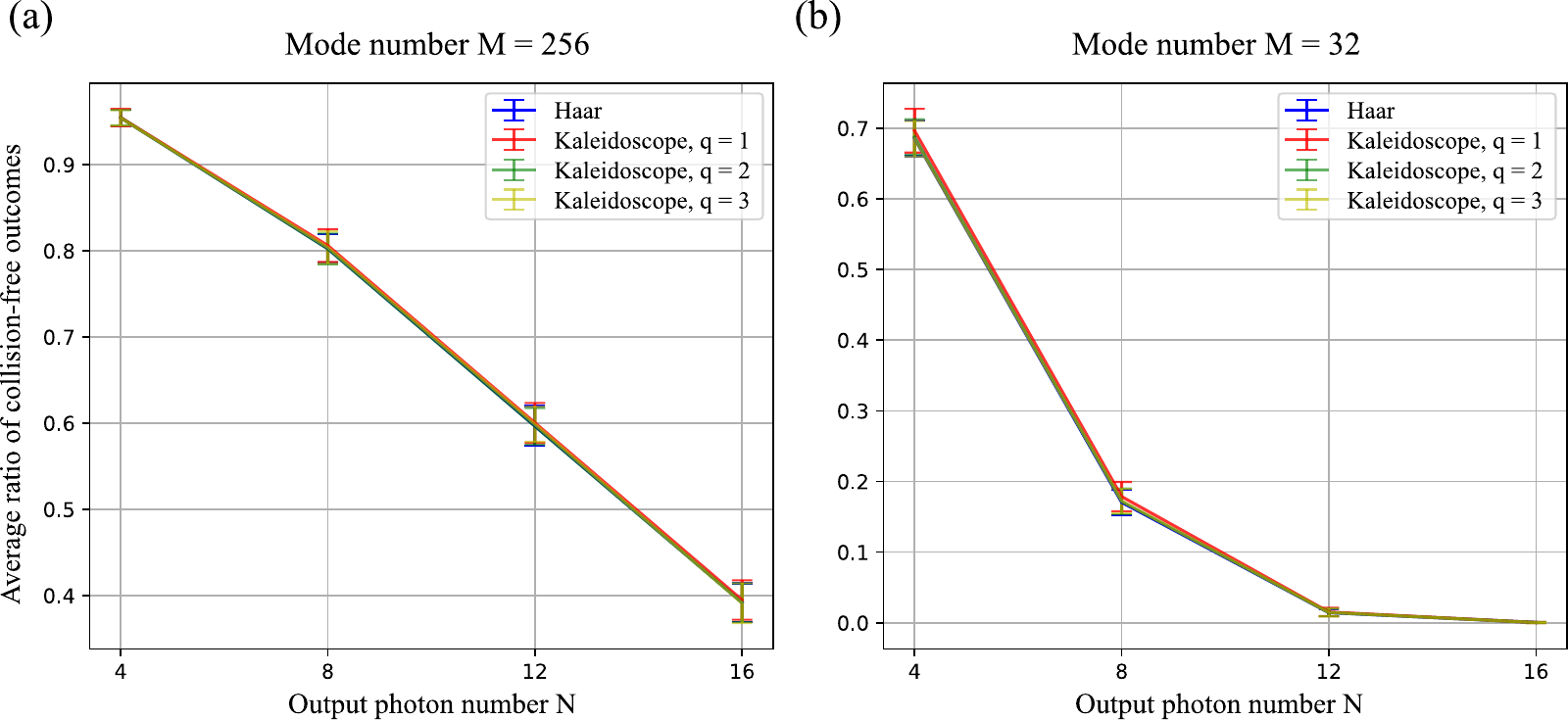}
\caption{
The average ratio of collision-free outcomes from (a) $256$-mode and (b) $32$-mode unitary circuits drawn from local random circuit ensemble $\mathcal{H}_{\mathcal{A}}$ over the $q$-Kaleidoscope circuit architecture $\mathcal{A} = (\mathcal{B}\mathcal{B}^*)^{q}$ with the repetition number $q = 1,2,3$, and global Haar random circuit ensemble. 
We sample 500 unitary circuits from each of the circuit ensembles, and we sample 500 outcomes on input $N = 4, 8, 12, 16$ photons for each of the randomly chosen circuits. 
Then we count the collision-free outcomes over the 500 sampled outcomes and compute the average ratio of collision-free outcomes over 500 randomly chosen circuits for each of the circuit ensembles. 
We present in the figure the average ratio of collision-free outcomes for local random circuits over $(\mathcal{B}\mathcal{B}^*)^{q}$ with $q = 1,2,3$ and global Haar random circuit ensemble, for the photon number given by $N = 4, 8, 12, 16$ (the error bar represents standard deviation). 
For all of the repetition numbers $q = 1,2,3$ and the photon numbers $N = 4, 8, 12, 16$, local random circuits for $(\mathcal{B}\mathcal{B}^*)^{q}$ exhibits almost equivalent average collision-free ratio to the global Haar random unitary circuits. 
}
\label{fig: Ratio}
\end{figure*}

More specifically, we consider $256$-mode random unitary circuits drawn from local random circuit ensemble $\mathcal{H}_{\mathcal{A}}$ over the $q$-Kaleidoscope circuit architecture $\mathcal{A} = (\mathcal{B}\mathcal{B}^*)^{q}$ with the repetition number $q = 1,2,3$, and global Haar random circuit ensemble. 
First, we sample 500 unitary circuits from each of the circuit ensembles described above.
For each of the randomly chosen circuits, on input $N = 4, 8, 12, 16$ photons in the first $N$ modes over $256$ modes, we sample 500 outcomes using the Clifford-Clifford algorithm for Boson Sampling~\cite{clifford2018classical}. 
Then, we count the number of collision-free outcomes over the 500 sampled outcomes and compute the average ratio of collision-free outcomes for each $N$ and circuit ensemble, where the average is taken over the $500$ sampled unitary circuits for each circuit ensemble.

In Fig.~\ref{fig: Ratio}, we present the average ratio of collision-free outcomes and standard deviation (error bar) for the photon number given by $N = 4, 8, 12, 16$, from the local random circuits over $(\mathcal{B}\mathcal{B}^*)^{q}$ with the repetition number $q = 1,2,3$ and global Haar random circuits.
Remarkably, local random circuits for $(\mathcal{B}\mathcal{B}^*)^{q}$ shows great resemblance to the global Haar random circuits: 
For all of the photon numbers $N = 4, 8, 12, 16$ and the repetition numbers $q = 1,2,3$, local random circuits for $(\mathcal{B}\mathcal{B}^*)^{q}$ exhibits almost identical average collision-free ratio (and also the standard deviation) to the global Haar random circuits. 
Given that the bosonic birthday paradox is guaranteed for the global Haar random circuit ensemble, also given that the mode number we used covers all recent large-sized Boson Sampling experiments~\cite{zhong2020quantum, zhong2021phase, madsen2022quantum, deng2023gaussian}, our numerical result shows that the local random circuit ensemble for $(\mathcal{B}\mathcal{B}^*)^{q}$ guarantees the bosonic birthday paradox at least for the near-term finite-sized experiments.
Even though this does not guarantee the bosonic birthday paradox of the local random circuit ensemble for $(\mathcal{B}\mathcal{B}^*)^{q}$ in the asymptotic limit as the system size scales, still, we believe that constant repetition number $q$ would be sufficient for the bosonic birthday paradox, due to the close resemblance to the global Haar random circuit ensemble as shown in the figure.

Additionally, we similarly present a numerical simulation result in Fig.~\ref{fig: Ratio} for the mode number $M = 32$ and photon number $N = 4, 8, 12, 16$, which indicates the saturated regime where the bosonic birthday paradox does not hold. 
Even in this regime, the collision patterns of local random circuits over $(\mathcal{B}\mathcal{B}^*)^{q}$ also show a great resemblance to those of global Haar random circuits.
This result gives hope that one can possibly extend our result to the saturated regime, by conducting a similar analysis for our circuit ensemble to the analysis for global Haar random circuits~\cite{bouland2023complexity}.

}

\end{document}